\documentclass[12pt,a4paper]{article}
\usepackage{amsmath,amssymb,amsthm,graphicx,color}
\newtheorem{theorem}{Theorem}
\newtheorem{proposition}[theorem]{Proposition}
\newtheorem{lemma}[theorem]{Lemma}
\newtheorem{corollary}[theorem]{Corollary}
\newtheorem{definition}[theorem]{Definition}
\newtheorem{assumption}[theorem]{Assumption}

\def\supp{\mathop{\rm supp}\nolimits}
\def\dist{\mathop{\rm dist}\nolimits}
\def\Ker{\mathop{\rm Ker}\nolimits}

\def\Im{\mathop{\rm Im}\nolimits}
\def\min{\mathop{\rm min}\nolimits}
\def\max{\mathop{\rm max}\nolimits}
\begin{document}
\begin{center}
 {\large\bf 
A self-adjointness criterion for 
the Schr\"odinger operator with infinitely many point interactions
and its application to random operators
}

by

Masahiro Kaminaga\footnote{
Department of Information Technology,
Tohoku Gakuin University,
Tagajo 985-8537, Japan.\ 
E-mail: kaminaga@mail.tohoku-gakuin.ac.jp
}, %
Takuya Mine\footnote{
Faculty of Arts and Sciences,
Kyoto Institute of Technology,
Matsugasaki, Sakyo-ku,
Kyoto 606-8585, Japan.\ 
E-mail: mine@kit.ac.jp
}, %
and Fumihiko Nakano\footnote{
Department of Mathematics,
Gakushuin University,
Mejiro 1-1-5,
Toshima-ku,
Tokyo 171-0031, Japan.\ 
E-mail: fumihiko@math.gakushuin.ac.jp
}
\end{center}
\begin{abstract}
We prove the Schr\"odinger operator
with infinitely many point interactions in $\mathbb{R}^d$ $(d=1,2,3)$
is self-adjoint if the support of the interactions
is decomposed into uniformly discrete clusters.
Using this fact,
we prove the self-adjointness of 
the Schr\"odinger operator with 
point interactions on a random perturbation of a lattice
or on the Poisson configuration.
We also determine the spectrum of the Schr\"odinger operators
with random point interactions of Poisson--Anderson type.
\end{abstract}

\section{Introduction}
Let $\Gamma$ be a 
discrete subset of $\mathbb{R}^d$ ($d=1,2,3$)
which is \textit{locally finite},
that is,
$\#(\Gamma\cap K)<\infty$ for any compact subset $K$ of 
$\mathbb{R}^d$,
where the symbol $\#S$ is the cardinality of a set $S$.
We define the \textit{minimal operator}
$H_{\Gamma,\min}$ by
\begin{align*}
  H_{\Gamma,\min} u = -\Delta u,\quad 
D(H_{\Gamma,\min})= C_0^\infty(\mathbb{R}^d\setminus \Gamma),
\end{align*}
where 
$\Delta=\sum_{j=1}^d \partial^2/\partial x_j^2$ is the Laplace operator.
Clearly $H_{\Gamma,\min}$ is a densely defined symmetric operator,
and it is well-known that
the deficiency indices $n_\pm(H_{\Gamma,\min})$ are given by
\begin{equation*}
n_\pm(H_{\Gamma,\min}):=\dim \mathcal{K}_{\Gamma,\pm}=
\begin{cases}
2\#\Gamma & (d=1),\cr
\#\Gamma & (d=2,3),
\end{cases}
\end{equation*}
where  $\mathcal{K}_{\Gamma,\pm}:=\Ker ({H_{\Gamma, \min}}^*\mp i)$
are the deficiency subspaces (see e.g.\ \cite{Al-Ge-Ho-Ho}).
So $H_{\Gamma,\min}$ is not essentially self-adjoint
unless $\Gamma=\emptyset$.
A self-adjoint extension of $H_{\Gamma,\min}$ is
called the Schr\"odinger operator with \textit{point interactions},
since the support of the interactions is concentrated on countable number of
points in $\mathbb{R}^d$.
The Schr\"odinger operator with point interactions
is known as a typical example of solvable models in quantum mechanics,
and numerous works are devoted to the study of this model 
or its perturbation by a scalar potential or a magnetic vector potential.
The book \cite{Al-Ge-Ho-Ho} 
contains most of fundamental facts about this subject 
and exhaustive list of references up to 2004.
The papers \cite{Br-Ge-Pa,Ko-Ma} also give us recent development of 
this subject.

There are mainly three popular methods of defining
self-adjoint extensions $H$ of $H_{\Gamma,\min}$.
Here we denote the free Laplacian by $H_0$, that is,
$H_0=-\Delta$ with $D(H_0)=H^2(\mathbb{R}^d)$.
\begin{enumerate}
 \item Calculate the deficiency subspaces $\mathcal{K}_{\Gamma,\pm}$,
and give the difference of 
the resolvent operators $(H-z)^{-1}-(H_0-z)^{-1}$ ($\Im z\not=0$)
for the desired self-adjoint extension $H$
by using von Neumann's theory and Krein's resolvent formula.

 \item Introduce a scalar potential $V$,
choose the renormalizatoin factor $\lambda(\epsilon)$ appropriately,
and define the operator $H $ as the norm resolvent limit
\begin{align}
\label{intro00}
H= \lim_{\epsilon\to 0}H_\epsilon,\quad
H_\epsilon=-\Delta+\lambda(\epsilon) \epsilon^{-d}V(\cdot/\epsilon). 
\end{align}

 \item 
Define the operator domain $D(H)$ of
the desired self-adjoint extension $H$
in terms of the boundary conditions at $\gamma\in\Gamma$.
\end{enumerate}
These methods are mutually related with each other,
and give the same operators consequently.
Historically,
the seminal works by Kronig--Penney \cite{Kr-Pe} ($d=1$)
and Thomas \cite{Th} ($d=3$) start from the method (ii),
and conclude the limiting operators are described by the method (iii).
Bethe--Peierls \cite{Be-Pi} also obtain a similar boundary condition 
for $d=3$.
Berezin--Faddeev \cite{Be-Fa} start from the method (ii) for $d=3$
by using the cut-off in the momentum space,
and show the limiting operator is also defined by the method (i).
After the paper \cite{Be-Fa},
 the method (i) becomes probably the most commonly used one.
It is mathematically rigorous
and useful in the analysis of spectral and scattering properties
of the system,
since various quantities 
(e.g.\ spectrum, scattering amplitude, resonance, etc.)
are defined via the resolvent operator.
The characteristic feature in the method (ii) is the dependence of
the renormalization factor $\lambda(\epsilon)$ on the dimension $d$.
We can take $\lambda(\epsilon)=1$ for $d=1$,
but $\lambda(\epsilon)\to0$ as $\epsilon\to 0$ for $d=2,3$.
Recently, the method (iii) is reformulated  
in terms of the \textit{boundary triplet}
(see \cite{Br-Ge-Pa,Ko-Ma} and references therein).
The method (iii) is useful
when we cannot calculate the deficiency subspaces explicitly,
e.g.\ the point interactions on a Riemannian manifold, etc.
In the present paper we adapt the method (iii),
as explained below.

We define the \textit{maximal operator}
$H_{\Gamma,\max}$
by $H_{\Gamma,\max}={H_{\Gamma,\min}}^*$,
the adjoint operator of $H_{\Gamma,\min}$.
The operator $H_{\Gamma,\max}$ is explicitly given by 
\begin{align*}
  H_{\Gamma,\max} u = -\Delta u,\quad 
D(H_{\Gamma,\max})= 
\{u \in L^2(\mathbb{R}^d) ;\ \Delta u \in L^2(\mathbb{R}^d)\},
\end{align*}
where the Laplacian $\Delta$ 
is regarded as a linear operator on 
the space of Schwartz distributions
${\cal D}'(\mathbb{R}^d\setminus\Gamma)$
on $\mathbb{R}^d\setminus\Gamma$
(see \cite{Al-Ge-Ho-Ho} or Proposition \ref{proposition_hmax} below;
we interpret $\Delta$ in this sense in the sequel).
When $d=1$, an element $u\in D(H_{\Gamma,\max})$ has boundary values
$\displaystyle u(\gamma \pm 0) (:=\lim_{x\to\gamma \pm 0} u(x))$ and $u'(\gamma \pm 0)$ for any $\gamma\in \Gamma$.
When $d=2, 3$,
it is known that any element $u\in D(H_{\Gamma,\max})$
has asymptotics
\begin{align}
\label{intro01}
\begin{array}{cc}
u(x) = u_{\gamma,0}\log|x-\gamma| + u_{\gamma,1} + o(1)\quad 
\mbox{as }x\to \gamma\quad
& 
(d=2),\vspace{2mm}
\\
 u(x) = u_{\gamma,0}|x-\gamma|^{-1} + u_{\gamma,1} + o(1)\quad
\mbox{as }x\to \gamma\quad
&
(d=3)
\end{array}
\end{align}
for every $\gamma\in \Gamma$,
where $u_{\gamma,0}$ and $u_{\gamma,1}$ are constants
(see \cite{Al-Ge-Ho-Ho,Br-Ge-Pa} or Proposition \ref{proposition_bvalue} below).

Let $\alpha = (\alpha_\gamma)_{\gamma \in \Gamma}$ 
be a sequence of real numbers.
We define a closed linear operator $H_{\Gamma,\alpha}$ in $L^2(\mathbb{R}^d)$
by
\begin{align*}
 &H_{\Gamma,\alpha} u = -\Delta u,\\
&D(H_{\Gamma,\alpha})=
\{u\in D(H_{\Gamma,\max}) ;\ u \mbox { satisfies } (BC)_\gamma
\mbox{ for every }\gamma\in \Gamma\}.
\end{align*}
The boundary condition $(BC)_\gamma$ at the point $\gamma\in \Gamma$
is defined as follows.
\begin{align}
\label{intro02}
\begin{array}{cc}
\begin{cases}
 u(\gamma+0)= u(\gamma-0)=u(\gamma),\\ 
 u'(\gamma+0)-u'(\gamma-0)= \alpha_\gamma u(\gamma)
\end{cases}
&(d=1),\vspace{2mm}\\
 2\pi \alpha_\gamma u_{\gamma,0}+u_{\gamma,1}=0\quad &(d=2),\vspace{2mm}\\
 -4\pi \alpha_\gamma u_{\gamma,0}+u_{\gamma,1}=0\quad &(d=3),
 \\
\end{array}
\end{align}
where
$u_{\gamma,0}$ and $u_{\gamma,1}$ are the constants
in (\ref{intro01}).
The constants $2\pi$ and $-4\pi$ before coupling constants are
chosen so that the results in \cite{Al-Ge-Ho-Ho}
can be used without modification,
though our $H_{\Gamma,\alpha}$ is denoted by $-\Delta_{\alpha.Y}$ in 
\cite{Al-Ge-Ho-Ho}.
When $d=1$, 
the case $\alpha_\gamma=0$ for all $\gamma$ corresponds to the free Laplacian $H_0$,
and
the formal expression $H_{\Gamma,\alpha}=-\Delta + \sum_{\gamma\in \Gamma}\alpha_\gamma \delta_\gamma$ is justified in the sense of quadratic form,
where $\delta_\gamma$ is the Dirac delta function supported on the point $\gamma$ (see (\ref{intro05})).
However, when $d=2,3$, 
the case $\alpha_\gamma=\infty$ for all $\gamma$ 
corresponds to $H_0$,
and the coupling constant $\alpha_\gamma$ is
not the coefficient before the delta function,
but the parameter appearing in the second term 
of the expansion of the renormalization factor $\lambda(\epsilon)$
in (\ref{intro00}) (see \cite{Al-Ge-Ho-Ho}
for the detail).

It is well-known that $H_{\Gamma,\alpha}$ is self-adjoint
when $\#\Gamma<\infty$.
When $\#\Gamma=\infty$, the self-adjointness of $H_{\Gamma,\alpha}$
is proved under
the \textit{uniform discreteness condition}
\begin{equation}
 \label{intro03}
d_*:=\inf_{\gamma,\gamma'\in \Gamma,\, \gamma\not=\gamma'}|\gamma-\gamma'|>0
\end{equation}
in the book \cite{Al-Ge-Ho-Ho}
and many other references (e.g.\ \cite{Gr-Ho-Me, Ch-St,  Ge-Ma-Ch}).
There are only a few results in the case $d_*=0$.
Minami \cite{Minami} studies the self-adjointness and the spectrum
of the random Schr\"odinger operator 
$H_\omega=\displaystyle -\frac{d^2}{dt^2}+Q_t'(\omega)$ on $\mathbb{R}$,
where $\{Q_t(\omega)\}_{t\in \mathbb{R}}$ is 
a temporally homogeneous L\'evy process.
If we take 
\begin{align*}
Q_t(\omega)=\int_0^t \sum_{\gamma \in \Gamma_\omega}\alpha_{\omega,\gamma}\delta(s-\gamma)ds
\end{align*}
for the Poisson configuration 
(the support of the Poisson point process;
see Definition \ref{definition_poisson} below) 
$\Gamma_\omega$ on $\mathbb{R}$
and i.i.d.\ (independently, identically distributed) random variables
$\alpha_\omega=(\alpha_{\omega,\gamma})_{\gamma\in \Gamma_\omega}$,
we conclude that
$H_{\Gamma_\omega,\alpha_\omega}$ is self-adjoint almost surely.
Kostenko--Malamud \cite{Ko-Ma}
give the following remarkable result.
\begin{theorem}[Kostenko--Malamud \cite{Ko-Ma}]
\label{theorem_KM}
Let $d=1$.
Let $\Gamma=\{\gamma_n\}_{n\in \mathbb{Z}}$ 
be a sequence of strictly increasing real numbers with 
$\displaystyle\lim_{n\to \pm \infty}\gamma_n = \pm \infty$.
Assume 
\begin{equation}
\label{intro04}
 \sum_{n=-\infty}^{-1} d_n^2=\sum_{n=0}^\infty d_n^2=\infty,\quad
d_n=\gamma_{n+1}-\gamma_n.
\end{equation}
Then, $H_{\Gamma,\alpha}$ is self-adjoint for every 
$\alpha=(\alpha_\gamma)_{\gamma\in \Gamma}$.
\end{theorem}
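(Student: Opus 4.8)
The plan is to reduce the self-adjointness of $H_{\Gamma,\alpha}$ to a limit-point condition at each of the two ends $\pm\infty$ and then to verify that condition by a discrete Wronskian estimate. Since $H_{\Gamma,\alpha}$ is a closed symmetric operator obtained from $H_{\Gamma,\max}$ by imposing the self-adjoint matching conditions $(BC)_{\gamma_n}$ at every finite point $\gamma_n$, its only possible deficiency should come from the behavior of solutions near $+\infty$ and $-\infty$. By the one-dimensional Weyl limit-point/limit-circle theory (applied to the expression $-d^2/dx^2$ with the interface conditions already built in), $H_{\Gamma,\alpha}$ is self-adjoint if and only if it is in the limit-point case at both ends, i.e.\ for one (equivalently every) $z\in\mathbb{C}$ there do not exist two linearly independent solutions of $-u''=zu$ satisfying all $(BC)_{\gamma_n}$ that are simultaneously square integrable near that end. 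By the $\pm\infty$ symmetry it suffices to treat $+\infty$ under the hypothesis $\sum_{n\ge 0}d_n^2=\infty$.

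First I would take $z=0$, which is legitimate because the limit-point/limit-circle classification is independent of the spectral parameter. A solution of $-u''=0$ on $\mathbb{R}\setminus\Gamma$ is affine on each interval $I_n=(\gamma_n,\gamma_{n+1})$, so it is encoded by its nodal values $u_n:=u(\gamma_n)$; the continuity part of $(BC)_{\gamma_n}$ is automatic, while the jump condition becomes the three-term recurrence
\begin{equation*}
\frac{u_{n+1}-u_n}{d_n}-\frac{u_n-u_{n-1}}{d_{n-1}}=\alpha_{\gamma_n} u_n .
\end{equation*}
A direct computation of the $L^2$ norm of the affine interpolant gives $\int_{I_n}|u|^2\,dx$ comparable to $d_n(|u_n|^2+|u_{n+1}|^2)$, so that $u\in L^2(\gamma_0,\infty)$ if and only if $\{u_n\}$ lies in the weighted space defined by $\sum_n (d_{n-1}+d_n)|u_n|^2<\infty$. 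Thus the limit-point property at $+\infty$ is equivalent to the statement that the recurrence above has no two linearly independent solutions that both lie in this weighted $\ell^2$ space.

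The heart of the argument is a discrete Wronskian identity: if $u=\{u_n\}$ and $v=\{v_n\}$ both solve the recurrence, then $W_n:=(u_{n+1}v_n-u_nv_{n+1})/d_n$ is independent of $n$, as one checks by substitution. For two linearly independent solutions $W_n\equiv W\neq 0$, so after normalization $|u_{n+1}v_n-u_nv_{n+1}|=d_n$ for all $n$. Multiplying by $d_n$ and summing,
\begin{equation*}
\sum_n d_n^2=\sum_n d_n\,|u_{n+1}v_n-u_nv_{n+1}|
\le \sum_n d_n|u_{n+1}||v_n|+\sum_n d_n|u_n||v_{n+1}| ,
\end{equation*}
and Cauchy--Schwarz together with $\sum_n d_n|u_{n+1}|^2=\sum_m d_{m-1}|u_m|^2\le\sum_m(d_{m-1}+d_m)|u_m|^2$ (and the analogous bounds for the remaining factors) shows the right-hand side is finite whenever both $u$ and $v$ lie in the weighted $\ell^2$ space. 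Hence two independent weighted-$\ell^2$ solutions can exist only if $\sum_n d_n^2<\infty$. Taking the contrapositive, $\sum_{n\ge0}d_n^2=\infty$ forces the limit-point case at $+\infty$, and likewise $\sum_{n<0}d_n^2=\infty$ at $-\infty$, so $H_{\Gamma,\alpha}$ is self-adjoint.

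The step I expect to require the most care is the reduction in the first paragraph: justifying rigorously that imposing $(BC)_{\gamma_n}$ at all finite points leaves an operator whose self-adjointness is governed solely by the two ends, and that the limit-point/limit-circle dichotomy (and its $z$-independence) remains available in the present setting with infinitely many interface conditions rather than a single smooth potential. Once this framework is secured, the Wronskian estimate is the clean computational core, and the precise power $d_n^2$ enters exactly through the constancy of $W_n$ (supplying one factor $d_n$) combined with the weight $d_{n-1}+d_n$ natural to the $L^2$ norm (supplying the second).
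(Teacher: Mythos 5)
The paper does not actually prove Theorem \ref{theorem_KM}: it is imported from Kostenko--Malamud \cite{Ko-Ma}, whose argument builds a boundary triplet for ${H_{\Gamma,\min}}^*$ adapted to the decomposition $L^2(\mathbb{R})=\oplus_{n}L^2((\gamma_n,\gamma_{n+1}))$ and reduces the deficiency-index computation to that of an auxiliary Jacobi matrix, to which a classical self-adjointness test is applied. Your route is genuinely different in presentation, and its computational core is correct: encoding the piecewise-affine solutions at $z=0$ by their nodal values, checking that $W_n=(u_{n+1}v_n-u_nv_{n+1})/d_n$ is the ordinary Wronskian on $(\gamma_n,\gamma_{n+1})$ (hence constant, since the jump conditions cancel in $u'v-uv'$), and then applying Cauchy--Schwarz against the natural weight $d_{n-1}+d_n$ is exactly the mechanism that produces the exponent $2$ in $\sum d_n^2$; it is in effect a hands-on version of the estimate that \cite{Ko-Ma} run on their Jacobi matrix, so the two arguments meet at the same inequality. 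What your version buys is transparency and the avoidance of boundary-triplet machinery; what it defers is precisely the content that the boundary triplet supplies, namely (a) that self-adjointness of $H_{\Gamma,\alpha}$ is equivalent to the limit-point alternative at $\pm\infty$ in the presence of infinitely many interface conditions (equivalently, that the deficiency indices of $\overline{H_b}$ equal the sum of the endpoint defects minus two, with $H={H_b}^*$ as in Proposition \ref{proposition_symmetric}), and (b) that the limit-circle property, if it held for some nonreal $z$, would persist at $z=0$, which is what licenses your choice of spectral parameter. Both facts are true --- the generalized Green formula of Proposition \ref{proposition_green} provides the symmetry needed for (a), and the variation-of-parameters proof of (b) survives the interface conditions because the fundamental system for the reference $z$ itself satisfies them --- but they are the actual substance of the theorem rather than routine bookkeeping, and the step you flag as ``requiring the most care'' is exactly where \cite{Ko-Ma} spend their effort. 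With those two lemmas supplied, your proof is complete and arguably more elementary.
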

\noindent
Actually Kostenko--Malamud \cite{Ko-Ma} state the result in the half-line case, 
but the result can be easily extended in the whole line case,
as stated above.
In the proof, Kostenko--Malamud 
construct an appropriate boundary triplet for
${H_{\Gamma, \min}}^*$.
Moreover, Christ--Stolz \cite{Ch-St} give a counter example
of $(\Gamma,\alpha)$ so that $d=1$, $d_*=0$ and 
$H_{\Gamma,\alpha}$ is not self-adjoint.
However, 
the proof of Minami \cite{Minami} uses that 
the deficiency indices are not more than two
for one-dimensional symmetric differential operator,
and the proof of Kostenko--Malamud \cite{Ko-Ma} uses the decomposition
$L^2(\mathbb{R})=\oplus_{n=-\infty}^\infty L^2((\gamma_{n},\gamma_{n+1}))$.
Both methods depend on the one-dimensionality of the space,
and cannot directly be applied in two or three dimensional case.

In the present paper, we give a sufficient condition
for the self-adjointness of $H_{\Gamma,\alpha}$,
which is available even in the case $d_*=0$
and $d=2,3$.
In the sequel, we denote $R$-neighborhood of a set $S$ by 
$(S)_R$, that is,
\begin{equation*}
 (S)_R:= \{ x \in \mathbb{R}^d;\ \dist(x, S)<R\},
\end{equation*}
where the distance $\dist(S,T)$ between two sets $S$ and $T$ is defined by
\begin{equation*}
 \dist(S,T):=\inf_{x\in S,\ y\in T}|x-y|.
\end{equation*}
\begin{assumption}
 \label{assumption_no_percolation}
There exists $R>0$ such that
every connected component of $(\Gamma)_R$ is a bounded set.
\end{assumption}
The set $(\Gamma)_R$ is the union of $B_R(\gamma)$,
an open disk of radius $R$ centered at $\gamma\in \Gamma$
(see Figure \ref{figure_gammaR}).
Assumption \ref{assumption_no_percolation} is
a generalization of the uniform discreteness condition
(\ref{intro03}).
Actually,
if we call the set of points of $\Gamma$ in each connected component of
$(\Gamma)_R$ a \textit{cluster},
then the assumption says `\textit{the clusters of $\Gamma$ are uniformly discrete}'.
\begin{figure}[htbp]
 \begin{center}
  \includegraphics[width=6cm]{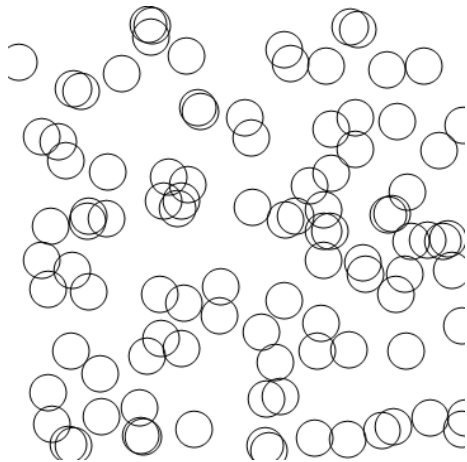}
   \caption{The set $(\Gamma)_R$ in $(-5,5)^2$ when $d=2$
and $R=0.4$.
$\Gamma$ is a sample configuration of the Poisson configuration with intensity $1$.
}
\label{figure_gammaR}
 \end{center}
\end{figure}

Our first main result is stated as follows.
\begin{theorem}
\label{theorem_main}
Let $d=1,2,3$. Suppose Assumption \ref{assumption_no_percolation} holds.
Then, $H_{\Gamma,\alpha}$ is self-adjoint for any $\alpha=(\alpha_\gamma)_{\gamma\in \Gamma}$.
\end{theorem}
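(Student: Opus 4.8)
The plan is to establish self-adjointness through the inclusion $D(H_{\Gamma,\alpha}^*)\subseteq D(H_{\Gamma,\alpha})$. Since $H_{\Gamma,\alpha}$ is symmetric — for $u,v\in D(H_{\Gamma,\alpha})$ the Lagrange identity expresses $\langle H_{\Gamma,\alpha}u,v\rangle-\langle u,H_{\Gamma,\alpha}v\rangle$ as a sum of boundary forms $W_\gamma(u,v)$ that vanish under $(BC)_\gamma$, while the contribution at infinity vanishes along a suitable sequence of radii because $u,v,\Delta u,\Delta v\in L^2(\mathbb{R}^d)$ — the opposite inclusion is automatic, so the two together give $H_{\Gamma,\alpha}=H_{\Gamma,\alpha}^*$. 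The first step is to turn Assumption \ref{assumption_no_percolation} into a cluster decomposition: let $\{\Omega_k\}$ be the connected components of $(\Gamma)_R$ and set $C_k:=\Gamma\cap\Omega_k$. Each $\Omega_k$ is bounded, so $\overline{\Omega_k}$ is compact and local finiteness forces $\#C_k<\infty$; distinct components are $R$-separated, whence $\Omega_k\cap\Gamma=C_k$ and $\dist(\Omega_k,\Gamma\setminus C_k)\ge R$. This is the only place the hypothesis is used, and it reduces the problem to finitely many points sitting inside a region insulated from the rest of $\Gamma$.

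Next, fix $w\in D(H_{\Gamma,\alpha}^*)\subseteq D(H_{\Gamma,\max})$; by Proposition \ref{proposition_bvalue} it has well-defined boundary data at every $\gamma\in\Gamma$ (the values $w(\gamma\pm0),w'(\gamma\pm0)$ when $d=1$, and the coefficients $w_{\gamma,0},w_{\gamma,1}$ of (\ref{intro01}) when $d=2,3$). I would probe $w$ with elements $u\in D(H_{\Gamma,\alpha})$ that are supported in a single $\Omega_k$. For such $u$ the identity $\langle H_{\Gamma,\alpha}u,w\rangle=\langle u,H_{\Gamma,\max}w\rangle$, combined with the Lagrange identity on the bounded set $\Omega_k$ — where $u$ vanishes near $\partial\Omega_k$, so the only boundary contributions come from the finitely many points of $C_k$ — yields $\sum_{\gamma\in C_k}W_\gamma(u,w)=0$. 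Since $u$ itself obeys $(BC)_\gamma$, each term $W_\gamma(u,w)$ collapses to the leading singular coefficient of $u$ at $\gamma$ multiplied by the conjugate of the expression that $(BC)_\gamma$ requires to vanish for $w$.

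The heart of the argument is then a construction: given any prescribed leading coefficients at the points of $C_k$, produce an actual $u\in D(H_{\Gamma,\alpha})$ with $\supp u\subseteq\Omega_k$ realizing them. I would assemble $u$ from cut-off fundamental solutions $\chi_\gamma(x)E(x-\gamma)$ of $-\Delta$ (with $E(x)=-\tfrac12|x|$, $-\tfrac{1}{2\pi}\log|x|$, $\tfrac{1}{4\pi}|x|^{-1}$ for $d=1,2,3$), each supported in $\Omega_k$, which fix the singular parts, and then add a smooth, compactly supported correction to adjust the regular parts at the finitely many $\gamma\in C_k$ so that $(BC)_\gamma$ holds there; since $u\equiv0$ near every other cluster, $(BC)_{\gamma'}$ holds trivially for $\gamma'\notin C_k$, so indeed $u\in D(H_{\Gamma,\alpha})$. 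Substituting these $u$ into the relation of the previous paragraph and letting the coefficients vary independently forces the $(BC)_\gamma$ expression for $w$ to vanish at each $\gamma\in C_k$; ranging over all clusters shows $w$ satisfies every $(BC)_\gamma$, that is $w\in D(H_{\Gamma,\alpha})$.

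The main obstacle I anticipate is not a single hard estimate but the careful bookkeeping of boundary forms: rigorously justifying the local Lagrange identity for functions carrying the singularities (\ref{intro01}), so that $W_\gamma$ is correctly identified and the outer boundary term genuinely drops, and verifying that the cut-off fundamental solutions map onto the prescribed singular coefficients while keeping $\Delta u\in L^2$. Conceptually the decisive feature is that Assumption \ref{assumption_no_percolation} allows the entire argument to proceed one finite cluster at a time; in particular, unlike a resolvent or Krein-matrix approach, no bound uniform in $k$ is needed, and the possible near-collision of points inside a cluster (where $d_*=0$) never has to be controlled quantitatively.
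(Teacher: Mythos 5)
Your second step --- probing $w\in D(H_{\Gamma,\alpha}^{*})$ with compactly supported elements of $D(H_{\Gamma,\alpha})$ built from cut-off fundamental solutions, so as to force $(BC)_\gamma$ on $w$ --- is sound and is essentially the paper's Proposition \ref{proposition_symmetric}; but note it needs no percolation hypothesis at all (each $\gamma$ is isolated by local finiteness, so such probes exist for any $\Gamma$, and indeed $H^{*}\subseteq {H_b}^{*}=H$ holds for arbitrary $\Gamma,\alpha$). The genuine gap is in the step you dispose of in one clause: the assertion that $H_{\Gamma,\alpha}$ is symmetric on its \emph{full} domain because ``the contribution at infinity vanishes along a suitable sequence of radii since $u,v,\Delta u,\Delta v\in L^2$.'' That assertion is the entire content of the theorem, and it is simply false without Assumption \ref{assumption_no_percolation}: in the Christ--Stolz example \cite{Ch-St} every element of $D(H_{\Gamma,\alpha})$ satisfies $u,\Delta u\in L^2$, yet $H_{\Gamma,\alpha}$ fails to be self-adjoint, and since $H^{*}\subseteq H$ always holds (by your own second step), that failure is precisely a failure of symmetry. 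Concretely, the usual averaging argument behind ``a suitable sequence of radii'' needs $\nabla u\in L^2$ of large annuli; but for $d=2,3$ any $u$ with $u_{\gamma,0}\neq 0$ has $\int_{B_\epsilon(\gamma)}|\nabla u|^2=\infty$, such $\gamma$ occur all the way out to infinity, and one can arrange $R$-separated clusters so that \emph{every} large sphere $\partial B_r(0)$ passes within $R/4$ of $\Gamma$ --- so there need not exist any admissible radii, let alone a sequence along which the flux vanishes. You state that the cluster decomposition ``is the only place the hypothesis is used''; in fact that is the one place where it is not needed, and it is absent from the place where it is indispensable.

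The repair is exactly where the paper spends Assumption \ref{assumption_no_percolation}. Let $S_n$ be the connected component of $B_n(0)\cup(\Gamma)_R$ containing $B_n(0)$: boundedness of the components makes $S_n$ bounded, and by construction $\dist(\partial S_n,\Gamma)\geq R$, so mollifying the indicator of $S_n$ at scale $R/3$ produces cutoffs $\chi_n$ with $\supp\nabla\chi_n\cap\Gamma=\emptyset$ and derivatives bounded uniformly in $n$. Then $\chi_n u$ has bounded support, still lies in $D(H_{\Gamma,\alpha})$ (this is where $\supp\nabla\chi_n\cap\Gamma=\emptyset$ is essential --- the paper's example with $\chi(x)=x_1$ near a singular point shows multiplication by a generic cutoff leaves $D(H_{\Gamma,\max})$), and $\chi_n u\to u$ in graph norm, the cross term $\nabla\chi_n\cdot\nabla u$ being controlled by the elliptic inner regularity estimate (Corollary \ref{corollary_agmon}) on the collar $(\partial S_n)_{R/2}$, which stays away from $\Gamma$ and escapes to infinity. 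This is Proposition \ref{proposition_dense}: bounded-support elements form a core, whence $H=\overline{H_b}$ and $H^{*}=(\overline{H_b})^{*}={H_b}^{*}=H$. Equivalently, if you insist on your surface-integral formulation, the ``suitable surfaces'' must be taken to be $\partial S_n$ rather than spheres, and controlling the flux through them requires precisely this collar estimate; either way, the construction of $S_n$ and the estimate near $\partial S_n$ are the missing heart of the proof.
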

\noindent
In the case $d=1$,
Theorem \ref{theorem_main} is a special case of Theorem \ref{theorem_KM},
since Assumption \ref{assumption_no_percolation} implies
there are infinitely many positive $n$ and negative $n$ such that $d_n\geq 2R$,
so the assumption (\ref{intro04}) holds.
In the case $d=2,3$, Theorem \ref{theorem_main} is new.

Theorem \ref{theorem_main} is especially useful
in the study of Schr\"odinger operators with 
random point interactions.
There are a lot of studies about the Schr\"odinger operators
with random point interactions 
(\cite{Al-Ho-Ki-Ma, De-Si-So, Minami, Bo-Gr, %
Do-Ma-Pu, Hi-Ki-Kr, Dr-Ki-Sc}),
but in most of these results $\Gamma$ is assumed to be
$\mathbb{Z}^d$ or its random subset,
except Minami's paper \cite{Minami}.
Using Theorem \ref{theorem_main},
we can study more general random point interactions
so that $d_*$ can be $0$.
In the present paper,
we prove the self-adjointness of $H_{\Gamma,\alpha}$ 
for the following two models.
First one is the \textit{random displacement model},
given as follows.
Notice that $d_*$ can be $0$ for this model.
\begin{corollary}
\label{corollary_RD}
Let $d=1,2,3$.
Let $\{\delta_n(\omega)\}_{n \in \mathbb{Z}^d}$
be a sequence of i.i.d.\ 
$\mathbb{R}^d$-valued random variables 
defined on some probability space $\Omega$
such that $|\delta_n(\omega)|< C$
for some positive constant $C$ independent of $n$ and $\omega\in \Omega$.
Put 
\begin{equation*}
 \Gamma_\omega = \{n + \delta_n(\omega)\}_{n \in \mathbb{Z}^d}.
\end{equation*}
Then, $H_{\Gamma_\omega,\alpha}$ is self-adjoint for any $\alpha=(\alpha_\gamma)_{\gamma\in \Gamma}$.
\end{corollary}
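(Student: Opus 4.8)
The plan is to deduce the corollary from Theorem \ref{theorem_main}: since $H_{\Gamma_\omega,\alpha}$ is self-adjoint whenever $\Gamma_\omega$ satisfies Assumption \ref{assumption_no_percolation}, it suffices to exhibit some $R>0$, depending only on $C$ and $d$, such that every connected component of $(\Gamma_\omega)_R$ is bounded. First I note that $\Gamma_\omega$ is locally finite for every $\omega$, since $\#(\Gamma_\omega\cap K)\le \#\{n\in\mathbb{Z}^d;\ \dist(n,K)<C\}<\infty$ for every compact $K$. I also observe that the randomness will enter only through the uniform bound $|\delta_n(\omega)|<C$, so the conclusion will in fact hold for every $\omega\in\Omega$, consistent with the deterministic phrasing of the statement.

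The geometric heart of the argument is to control the overlap structure of the disks $B_R(\gamma_n)$. Encode it by the graph $G_R$ on the index set $\mathbb{Z}^d$ in which $n$ and $m$ are joined when $B_R(\gamma_n)\cap B_R(\gamma_m)\ne\emptyset$, that is, when $|\gamma_n-\gamma_m|<2R$. Because $\Gamma_\omega$ is locally finite and each disk is bounded, a connected component of $(\Gamma_\omega)_R$ is bounded exactly when the corresponding component of $G_R$ is finite, so I must rule out infinite components of $G_R$. Since an infinite component contains points arbitrarily far from any fixed one, it would produce chains $\gamma_{m_0},\gamma_{m_1},\dots,\gamma_{m_K}$ with distinct indices and $|\gamma_{m_{j+1}}-\gamma_{m_j}|<2R$ of arbitrarily large length $K$. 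From $\gamma_n=n+\delta_n$ and $|\delta_n|<C$ one immediately gets $|m_{j+1}-m_j|<2R+2C$, so consecutive indices along a chain are lattice-close; but this bounded-range property alone only bounds the vertex degree and does not forbid infinite components.

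To finish I would exploit that such a chain is essentially \emph{one-dimensional}. Let $P$ be the polygonal path through $\gamma_{m_0},\dots,\gamma_{m_K}$; it has length at most $2RK$. Each index $m_j$ lies within distance $C$ of the point $\gamma_{m_j}\in P$, so the $K+1$ distinct lattice points $m_0,\dots,m_K$ all lie in the $C$-neighborhood $(P)_C$. As $(P)_C$ is a tube of cross-section of order $C^{d-1}$ around a curve of length at most $2RK$, its volume is at most of order $C^{d-1}\cdot 2RK$, whence $\#\big(\mathbb{Z}^d\cap(P)_C\big)\le A\,C^{d-1}RK+(\text{lower order in }K)$ for a dimensional constant $A$. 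Comparing with the $K+1$ distinct indices forces $1\le A\,C^{d-1}R+o(1)$ as $K\to\infty$. Hence, choosing $R$ small enough that $A\,C^{d-1}R<1$, no arbitrarily long chain can exist, every component of $G_R$ is finite, Assumption \ref{assumption_no_percolation} holds for that $R$, and Theorem \ref{theorem_main} yields the self-adjointness of $H_{\Gamma_\omega,\alpha}$.

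The main obstacle is precisely the counting step above. The naive estimate that confines a chain to a ball of radius $2RK$ and counts $\sim(RK)^d$ lattice points is far too wasteful in dimensions $d=2,3$: it grows faster than the linear quantity $K$ and gives no contradiction, which is exactly why small $R$ does not obviously defeat percolation. The decisive observation is that the chain is supported on a one-dimensional set, so the relevant lattice indices occupy a thin tube whose volume is only linear in the chain length; capturing this linear (rather than ambient) volume is what makes the small-$R$ threshold effective in the genuinely new cases $d=2,3$.
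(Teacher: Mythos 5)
Your argument is correct, but its key counting step takes a genuinely different route from the paper's. The paper deduces the corollary from the auxiliary Corollary \ref{corollary_bounded}: it first notes the uniform local density bound $\#(\Gamma_\omega\cap B_1(x))\le \#(\mathbb{Z}^d\cap B_{C+1}(x))\le M:=\left|B_{C+1+\sqrt{d}/2}(0)\right|$ for every $x$ and every $\omega$, and then takes $R$ of order $1/M$, so that a chain of overlapping $R$-balls starting near $x$ either has at most $M$ distinct centers, hence diameter less than $2RM\le 1$, or would force more than $M$ points of $\Gamma_\omega$ into the \emph{fixed} ball $B_1(x)$; either way the cluster is trapped. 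This is why your closing objection to the ``naive ball count'' growing like $(RK)^d$ does not actually apply to the paper: there the ball has fixed radius and it is $R$, not the ball, that shrinks with the density $M$, so no tube is needed and the argument is two lines in any dimension. Your alternative — confining the $K+1$ lattice indices of a length-$K$ chain to the $C$-neighborhood of the polygonal path and using that this tube has volume linear in $K$ — is a sharper estimate: it yields an admissible $R$ of order $C^{-(d-1)}$ rather than the paper's $C^{-d}$, at the cost of justifying the tube-volume bound (with its additive $O(C^d)$ term), the passage from an infinite component of $G_R$ to arbitrarily long simple chains (which uses the bounded degree coming from $|m_{j+1}-m_j|<2R+2C$), and the usual enlargement to $(P)_{C+\sqrt{d}}$ when converting volume into a lattice-point count. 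Both routes verify Assumption \ref{assumption_no_percolation} and invoke Theorem \ref{theorem_main}; for the qualitative conclusion the paper's version is the shorter one.
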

The proof of 
Corollary \ref{corollary_RD} is an application of Theorem \ref{theorem_main}
via some auxiliary result (Corollary \ref{corollary_bounded}).
Another one is the \textit{Poisson model},
given as follows.
\begin{corollary}
\label{corollary_poisson}
Let $d=1,2,3$.
Let $\Gamma_\omega$ be the Poisson configuration
on $\mathbb{R}^d$
with intensity measure $\lambda dx$ for some positive constant $\lambda$.
Then, $H_{\Gamma_\omega,\alpha}$ is self-adjoint for any $\alpha=(\alpha_\gamma)_{\gamma\in \Gamma_\omega}$, almost surely.
\end{corollary}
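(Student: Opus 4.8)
The plan is to deduce Corollary \ref{corollary_poisson} from Theorem \ref{theorem_main} by showing that, almost surely, the Poisson configuration $\Gamma_\omega$ satisfies Assumption \ref{assumption_no_percolation} for a suitable \emph{deterministic} radius $R>0$. The first step is to recast Assumption \ref{assumption_no_percolation} in the language of continuum percolation. The set $(\Gamma_\omega)_R=\bigcup_{\gamma\in\Gamma_\omega}B_R(\gamma)$ is precisely the occupied region of the Boolean (Gilbert disk) model with intensity $\lambda$ and radius $R$. Introducing the graph $G_R$ on $\Gamma_\omega$ whose edges join $\gamma,\gamma'$ with $|\gamma-\gamma'|<2R$ (equivalently, $B_R(\gamma)\cap B_R(\gamma')\neq\emptyset$), one checks that connected components of $(\Gamma_\omega)_R$ correspond bijectively to connected components of $G_R$. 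Since $\Gamma_\omega$ is almost surely locally finite, a component is \emph{bounded} if and only if the corresponding cluster of $G_R$ is \emph{finite}: an infinite cluster must have centers escaping to infinity and hence an unbounded union of balls. Thus Assumption \ref{assumption_no_percolation} holds for $\Gamma_\omega$ precisely when the Boolean model at radius $R$ has no infinite cluster, i.e.\ is subcritical; the goal is therefore to produce one $R>0$ for which subcriticality holds almost surely.

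To establish subcriticality for small $R$ I would compare with Bernoulli site percolation. Partition $\mathbb{R}^d$ into half-open cubes $Q_k=\prod_{j=1}^d[k_js,(k_j+1)s)$ of side $s$, indexed by $k\in\mathbb{Z}^d$, and call $Q_k$ \emph{occupied} if $\Gamma_\omega\cap Q_k\neq\emptyset$. Because the number of Poisson points in $Q_k$ is Poisson with mean $\lambda s^d$ and the counts in disjoint cubes are independent, the occupation events are independent with $p:=1-e^{-\lambda s^d}\le\lambda s^d$. The key geometric observation is that if $R<s/2$, then any edge of $G_R$ joins points at distance less than $2R<s$, whose cubes satisfy $\|k-k'\|_\infty\le 1$; hence such points lie in equal or $*$-adjacent cubes. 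Consequently an unbounded component of $(\Gamma_\omega)_R$ projects onto an infinite $*$-connected cluster of occupied cubes in $\mathbb{Z}^d$, and it suffices to rule the latter out.

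For the probabilistic estimate I would use a first-moment (Peierls) bound. A self-avoiding $*$-path of $n$ steps from a fixed cube visits $n+1$ cubes and has at most $(3^d-1)^n$ realizations, so the expected number of such paths that are entirely occupied is at most $p\,[(3^d-1)p]^n$. Choosing $s$ small enough that $(3^d-1)\lambda s^d<1$ forces $(3^d-1)p<1$, so this bound tends to $0$ as $n\to\infty$; since an infinite cluster would contain arbitrarily long self-avoiding occupied paths from some starting cube, the $*$-cluster of each fixed cube is almost surely finite, and by countable subadditivity over the cubes there is almost surely no infinite $*$-cluster. Fixing any $R\in(0,s/2)$, we conclude that almost surely every connected component of $(\Gamma_\omega)_R$ is bounded, i.e.\ Assumption \ref{assumption_no_percolation} holds on a full-measure event. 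On that event Theorem \ref{theorem_main} applies and yields the self-adjointness of $H_{\Gamma_\omega,\alpha}$ for \emph{every} $\alpha$ simultaneously, which is the assertion of Corollary \ref{corollary_poisson}.

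The main obstacle I anticipate is the geometric comparison step rather than the probability: one must verify carefully that the choice $R<s/2$ confines ball overlaps to $*$-adjacent cubes and that an unbounded \emph{topological} component of $(\Gamma_\omega)_R$ forces an unbounded \emph{combinatorial} path of occupied cubes, which is exactly where the local finiteness of the Poisson configuration enters. The first-moment computation is then routine, and the underlying principle—existence of a nontrivial subcritical phase for the Boolean model—is a standard fact of continuum percolation theory that the self-contained argument above reproves. (In the case $d=1$ the argument simplifies drastically: the exponentially distributed gaps exceed $2R$ infinitely often in both directions almost surely, so $(\Gamma_\omega)_R$ never percolates for any finite $\lambda$ and any $R$, recovering the situation of Theorem \ref{theorem_KM}.)
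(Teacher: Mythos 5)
Your argument is correct and reaches the statement by the same overall route as the paper: verify Assumption \ref{assumption_no_percolation} for the Poisson configuration with a sufficiently small deterministic $R$, then invoke Theorem \ref{theorem_main}. The difference lies in how the non-percolation is established. The paper simply cites Theorem \ref{theorem_CP} (the existence of a subcritical phase for the Boolean model, from Meester--Roy) together with the scaling relation $\lambda_c(R)=R^{-d}\lambda_c(1)$ to conclude $\lambda<\lambda_c(R)$ for small $R$, and then upgrades ``the cluster of the origin is a.s.\ bounded'' to ``all clusters are a.s.\ bounded'' via translation invariance and a countable dense set of reference points. You instead reprove the needed subcriticality from scratch: discretize $\mathbb{R}^d$ into cubes of side $s$ with $(3^d-1)\lambda s^d<1$, dominate the ball-overlap graph by independent site percolation on the $*$-lattice, and run a Peierls first-moment bound over self-avoiding paths; this directly kills \emph{all} infinite clusters at once, so you do not even need the translation-invariance step. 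What the paper's route buys is brevity and the sharp threshold $\lambda_c(R)$; what yours buys is self-containedness (no appeal to continuum-percolation literature) and an explicit, quantitative choice of $R$. Your geometric reductions --- components of $(\Gamma_\omega)_R$ versus components of the graph $G_R$, boundedness versus finiteness via local finiteness, and the confinement of edges of $G_R$ to $*$-adjacent cubes when $R<s/2$ --- are all carried out correctly, so I see no gap.
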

Corollary \ref{corollary_poisson} is proved by combining 
Theorem \ref{theorem_main} with 
the theory of continuum percolation (Theorem \ref{theorem_CP}).
These results are new when $d=2,3$,
and are not new when $d=1$, as stated before.

The proof of Theorem \ref{theorem_main} also enables us
to  determine the spectrum of $H_{\Gamma,\alpha}$
for \textit{the point interactions of Poisson-Anderson type},
defined as follows.
\begin{assumption}
 \label{assumption_PA}
\begin{enumerate}
 \item $\Gamma_\omega$ is the Poisson configuration
with intensity measure $\lambda dx$ for some $\lambda>0$.

 \item The coupling constants $\alpha_\omega=(\alpha_{\omega,\gamma})_{\gamma\in \Gamma_\omega}$
are real-valued i.i.d.\ random variables
with common distribution measure $\nu$ on $\mathbb{R}$.
Moreover, $(\alpha_{\omega,\gamma})_{\gamma\in \Gamma_\omega}$ 
are independent of $\Gamma_\omega$.
\end{enumerate}
\end{assumption}
\begin{theorem}
\label{theorem_main2}
Let $d=1,2,3$.
Let $\Gamma_\omega$ and $\alpha_\omega$ satisfy
Assumption \ref{assumption_PA},
and put $H_\omega=H_{\Gamma_\omega,\alpha_\omega}$.
Then, the spectrum $\sigma(H_\omega)$
of $H_\omega$ is given as follows.
\begin{enumerate}
 \item When $d=1$, we have
\begin{align*}
 \sigma(H_\omega)=
\begin{cases}
 [0,\infty) & (\supp \nu \subset [0,\infty)),\\
 \mathbb{R} & (\supp\nu \cap (-\infty,0)\not=\emptyset),
\end{cases}
\end{align*}
almost surely.

 \item When $d=2,3$, we have $ \sigma(H_\omega)=\mathbb{R}$ almost surely.
\end{enumerate}
\end{theorem}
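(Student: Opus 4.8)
The plan is to treat $H_\omega$ as an ergodic family and reduce the determination of $\sigma(H_\omega)$ to local, positive-probability constructions. First I would note that, since the Poisson configuration with intensity $\lambda\,dx$ is stationary and ergodic under the translation group $\{T_a\}_{a\in\mathbb{R}^d}$ and the marks $(\alpha_{\omega,\gamma})$ are i.i.d.\ and independent of $\Gamma_\omega$, the operators satisfy the covariance relation $H_{T_a\omega}=U_aH_\omega U_a^{*}$ with $U_a$ the unitary translation on $L^2(\mathbb{R}^d)$. By Corollary \ref{corollary_poisson}, $H_\omega$ is self-adjoint almost surely, so $\{H_\omega\}$ is a bona fide ergodic family and the standard theorem of Kirsch--Martinelli/Pastur gives a deterministic closed set $\Sigma$ with $\sigma(H_\omega)=\Sigma$ a.s. Moreover, for any interval $(a,b)$ the event $\{\sigma(H_\omega)\cap(a,b)\neq\emptyset\}$ is translation invariant, hence has probability $0$ or $1$; consequently it suffices, in order to prove $E\in\Sigma$, to exhibit for each $\delta>0$ a positive-probability event on which $H_\omega$ has spectrum in $(E-\delta,E+\delta)$. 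I would produce such events by localized Weyl sequences supported in regions that the Poisson process isolates.

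For the half-line $[0,\infty)\subset\Sigma$, valid in every dimension, I would use that the Poisson process almost surely contains empty balls of arbitrarily large radius. On an empty ball $B_L(x_0)$ there are no interaction points, so for $|k|^2=E\ge0$ the function $\psi=\chi(\cdot/L)e^{ik\cdot x}$, with $\chi$ a fixed cutoff, lies in $D(H_\omega)$ (the boundary conditions (\ref{intro02}) are vacuous inside) and satisfies $\|(H_\omega-E)\psi\|/\|\psi\|=O(L^{-1})$. This forces $E\in\sigma(H_\omega)$ with probability one. For $d=1$ with $\supp\nu\subset[0,\infty)$ this already gives one inclusion, and the reverse inclusion $\Sigma\subset[0,\infty)$ follows from the nonnegativity of the quadratic form $q(u)=\int_{\mathbb{R}}|u'|^2\,dx+\sum_{\gamma}\alpha_\gamma|u(\gamma)|^2$ associated with $H_\omega$ (see (\ref{intro05})), which is $\ge0$ once all $\alpha_\gamma\ge0$; hence $\Sigma=[0,\infty)$ in that case.

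The heart of the matter is the negative part, where I would exploit clusters. For a finite configuration the point interaction is governed by Krein's resolvent formula, and its negative eigenvalues are the zeros of $\det\Gamma(\kappa)$, where $\Gamma(\kappa)$ is the finite matrix built from the diagonal boundary-condition entries and the off-diagonal Green's-function couplings $G_\kappa(y_j-y_k)$ (with $G_\kappa$ of type $e^{-\kappa|x|}$, $K_0(\kappa|x|)$, or $e^{-\kappa|x|}/|x|$ for $d=1,2,3$); see \cite{Al-Ge-Ho-Ho}. Analyzing already a two-point cluster at separation $\epsilon$ with a fixed admissible coupling, the singularity of $G_\kappa$ as $\epsilon\to0$ forces a bound state with $\kappa\to\infty$, i.e.\ energy $\to-\infty$, while as $\epsilon$ grows the bound-state energy returns continuously toward $0^-$; together with $n$-point clusters this shows that the eigenvalues realizable by finite clusters with couplings drawn from $\supp\nu$ are dense in $(-\infty,0)$. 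Crucially this holds for $d=2,3$ regardless of the sign of the couplings, explaining why $\Sigma=\mathbb{R}$ there, whereas for $d=1$ one needs a genuinely negative coupling in $\supp\nu$ to start the construction. Given a target $E<0$, I would pick such a finite cluster with eigenvalue near $E$ and exponentially decaying eigenfunction $\phi$, and then invoke a positive-probability event on which $\Gamma_\omega$ places the right number of points near the prescribed positions with marks near the prescribed values, surrounded by a large empty annulus. On this event the cluster operator built from the \emph{actual} random couplings has, by continuity of $\det\Gamma$, an eigenvalue $\tilde E$ near $E$ whose eigenfunction $\tilde\phi$ satisfies the true boundary conditions exactly, and the cutoff $\chi\tilde\phi$ is an approximate eigenfunction for $H_\omega$ because the commutator error lives in the empty annulus where $\tilde\phi$ is exponentially small. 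This yields $\tilde E\in\sigma(H_\omega)$, hence $(-\infty,0)\subset\Sigma$.

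I expect two steps to be the real obstacles. The first is the filling-in: showing that cluster eigenvalues are dense in all of $(-\infty,0)$ and not merely near $-\infty$, which requires controlling both the deep states (small $\epsilon$, or many points) and the shallow states near $0^-$ (e.g.\ antisymmetric two-point states, or suitable geometries), uniformly over couplings restricted to $\supp\nu$. The second is the rigorous Weyl-sequence construction with point interactions: one must ensure $\chi\tilde\phi$ genuinely lies in $D(H_\omega)$---so the cutoff must be performed in the empty annulus where no boundary conditions are active---and must quantify the perturbation of the cluster eigenvalue under the small, positive-probability deviations of the random positions and marks from their target values. Once these are in hand, the ergodic $0$--$1$ law upgrades each positive-probability statement to an almost-sure inclusion, giving $\Sigma=\mathbb{R}$ for $d=2,3$ and the stated dichotomy for $d=1$.
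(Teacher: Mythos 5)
Your proposal follows essentially the same route as the paper: an ergodic reduction to finite clusters with couplings in $\supp\nu$ (the paper's ``admissible potentials'' Proposition \ref{proposition_admissible}, whose proof uses exactly your cutoff-in-an-empty-region Weyl sequences together with a perturbation lemma for positions and marks), the quadratic form (\ref{intro05}) for the nonnegative one-dimensional case, and collapsing two-point clusters (for $d=2,3$) versus $N$-point clusters with $N\to\infty$ (for $d=1$, where the Green function is not singular at coincidence) to fill $(-\infty,0)$. The two obstacles you flag are precisely where the paper invests its effort, resolving them via the explicit determinant conditions of \cite{Al-Ge-Ho-Ho} (Lemmas \ref{lemma_spec2}--\ref{lemma_spec4}, which track both the deep and the shallow eigenvalue branches) and via Lemma \ref{lemma_approximatingEF} together with Proposition \ref{proposition_dense}.
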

\noindent
Notice that there is \textit{no assumption} on $\supp \nu$ when $d=2,3$.
Theorem \ref{theorem_main2} can be interpreted as a generalization
of the corresponding result 
for the Schr\"odinger operator 
$-\Delta+V_\omega$
with random \textit{scalar} potential of Poisson-Anderson type
\begin{align*}
V_\omega(x)=\sum_{\gamma\in \Gamma_\omega} \alpha_{\omega,\gamma} V_0(x-\gamma),
\end{align*}
where $\Gamma_\omega$ and $\alpha_\omega$ satisfy
Assumption \ref{assumption_PA},
and 
$V_0$ is a real-valued scalar function having some regularity 
and decaying property.
The spectrum $\sigma(-\Delta+V_\omega)$ is determined 
in \cite{Pa-Fi,An-Iw-Ka-Na,Ka-Mi},
and the result says \textit{`the spectrum equals $[0,\infty)$
if $V_\omega$ is non-negative,
and it equals $\mathbb{R}$ if $V_\omega$ has negative part'}.
When $d=1$,
the point interaction at $\gamma$ has the same sign
as the sign of the coupling constant $\alpha_\gamma$ in the sense of quadratic form,
that is,
\begin{align}
\label{intro05}
 (u,H_{\Gamma,\alpha}u)=\|\nabla u\|^2 + \sum_{\gamma\in \Gamma}\alpha_\gamma |u(\gamma)|^2
\end{align}
for $u\in D(H_{\Gamma,\alpha})$ with bounded support.
When $d=2,3$, the sign of point interaction at $\gamma$ is 
in some sense \textit{negative}
for any $\alpha_\gamma\in \mathbb{R}$.
Actually, in the approximation (\ref{intro00}),
the limiting operator $H$ is not the free operator $H_0$
only if the zero-energy resonance of $-\Delta + V$ exists,
and the existence of the zero-energy resonance requires $V$ has negative part
(see \cite{Al-Ge-Ho-Ho}).
There is also qualitative difference between the proof of Theorem \ref{theorem_main2}
in the case $d=1$ and that in the case $d=2,3$.
The spectrum $(-\infty,0)$ is created by
the accumulation of many points in one place when $d=1$, 
while it is created by 
the meeting of two points when $d=2,3$
(see section 3.3).
The latter fact reminds us \textit{Thomas collapse},
which says the mass defect of the tritium ${^3}{\rm H}$
becomes arbitrarily large as 
the distances between a proton and two neutrons become small enough
(see \cite{Th}).

Let us give a brief comment on the magnetic case.
The Schr\"odinger operator with a constant magnetic field
plus infinitely many point interactions is studied in \cite{Ge, Do-Ma-Pu},
and the self-adjointness is proved under the uniform discreteness 
condition (\ref{intro03}).
Theorem \ref{theorem_main} can be generalized under the existence
of a constant magnetic field, 
by using the magnetic translation operator.
We will discuss this case elsewhere in the near future.

The present paper is organized as follows.
In section 2, 
we review some fundamental formulas about self-adjoint extensions
of $H_{\Gamma, \min}$, and prove Theorem \ref{theorem_main}.
The crucial fact is
`bounded support functions are dense in 
$D(H_{\Gamma,\alpha})$ under Assumption \ref{assumption_no_percolation}'
(Proposition \ref{proposition_dense}).
In section 3, 
we prove the self-adjointness of Schr\"odinger operators
with various random point interactions.
We also determine the spectrum of $H_\omega=H_{\Gamma_\omega,\alpha_\omega}$
for Poisson--Anderson type point interactions,
using the method of \textit{admissible potentials}
(Proposition \ref{proposition_admissible}; 
see also \cite{Ki-Ma,Pa-Fi,An-Iw-Ka-Na,Ka-Mi}).
In the proof, we again need Proposition \ref{proposition_dense},
and also need to take care of the dependence of the operator domain $D(H_{\Gamma,\alpha})$
with respect to $\Gamma$ and $\alpha$.
Once we establish the method of admissible potentials,
the proof of Theorem \ref{theorem_main2} is reduced to
the calculation of $\sigma(H_{\Gamma,\alpha})$ for admissible $(\Gamma,\alpha)$.

Let us explain the notation in the manuscript.
The notation $A:=B$ means $A$ is defined as $B$.
The set $B_r(x)$ is 
the open ball of radius $r$ centered at $x\in \mathbb{R}^d$,
that is, $B_r(x):=\{y\in \mathbb{R}^d;|y-x|<r\}$.
The space $D(H)$ is the operator domain of a linear operator $H$
equipped with the graph norm $\|u\|_{D(H)}^2=\|u\|^2 + \|H u\|^2$.
For an open set $U$, 
$C_0^\infty(U)$ is the set of compactly supported $C^\infty$ functions
on $U$.
The space $L^2(U)$ is the space of square integrable functions
on $U$, and the inner product and the norm on $L^2(U)$
are defined as
\begin{align*}
(u,v)_{L^2(U)}=\int_U \overline{u}{v}dx,\quad
 \|u\|_{L^2(U)}=(u,u)_{L^2(U)}^{1/2}=\left(\int_U |u|^2dx\right)^{1/2}.
\end{align*}
When $U=\mathbb{R}^d$, we often abbreviate the suffix $L^2(U)$.
The space $H^2(U)$ is the Sobolev space of order $2$ on $U$,
and the norm is defined by
\begin{align*}
 \|u\|_{H^2(U)}^2=\sum_{0\leq |\alpha|\leq 2}
\left\|\frac{\partial^{|\alpha|} u}{\partial x_1^{\alpha_1}\cdots\partial x_d^{\alpha_d}}\right\|_{L^2(U)}^2,
\end{align*}
where $\alpha=(\alpha_1,\ldots,\alpha_d)\in (\mathbb{Z}_{\geq 0})^d$
is the multi-index and $|\alpha|=\alpha_1+\cdots+\alpha_d$,
and the derivatives are defined as 
elements of $\mathcal{D}'(U)$,
the Schwartz distributions on $U$.
The space $L^2_{\rm loc}(U)$ is the set of the functions $u$
such that $\chi u\in L^2(U)$ for any $\chi\in C_0^\infty(U)$.
The space $H^2_{\rm loc}(U)$ is defined similarly.

\section{Self-adjointness}
\subsection{Structure of $D(H_{\Gamma, \max})$}
First we review fundamental properties of the operator domain
$D(H_{\Gamma, \max})$ of the maximal operator $H_{\Gamma, \max}$.
Most of the results are already obtained
under more general assumption (see e.g.\ \cite{Al-Ge-Ho-Ho,Br-Ge-Pa}),
but we prove them here again for the completeness of the present manuscript.
\begin{proposition}
\label{proposition_hmax}
Let $d=1,2,3$.
 \begin{enumerate}
  \item We have
\begin{align*}
 D(H_{\Gamma, \max})
&=\{u\in L^2(\mathbb{R}^d)
\,;\, \Delta u \in L^2(\mathbb{R}^d)\}\\
&=\{u\in L^2(\mathbb{R}^d)\cap H^2_{\rm loc}(\mathbb{R}^d\setminus \Gamma)
\,;\, \Delta u \in L^2(\mathbb{R}^d)\},
\end{align*}
where $\Delta u$ is defined as an element of
$\mathcal{D}'(\mathbb{R}^d\setminus\Gamma)$.

  \item Let $\chi\in C_0^\infty(\mathbb{R}^d)$ such that
$\Gamma \cap \supp \nabla\chi =\emptyset$.
Then, for any $u\in D(H_{\Gamma, \max})$, we have
$\chi u\in D(H_{\Gamma, \max})$.
 \end{enumerate}
\end{proposition}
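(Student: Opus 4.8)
The plan is to obtain part (i) directly from the definition of the adjoint together with interior elliptic regularity, and then to deduce part (ii) from part (i) by the Leibniz rule. For the first equality in (i), I would simply unravel the meaning of $H_{\Gamma,\max}=H_{\Gamma,\min}^*$. By definition, $u\in D(H_{\Gamma,\min}^*)$ with $H_{\Gamma,\min}^*u=v$ holds precisely when $u,v\in L^2(\mathbb{R}^d)$ and $(H_{\Gamma,\min}\phi,u)=(\phi,v)$ for every $\phi\in C_0^\infty(\mathbb{R}^d\setminus\Gamma)$. Expanding $(H_{\Gamma,\min}\phi,u)=\int_{\mathbb{R}^d}(-\Delta\overline{\phi})\,u\,dx$ and using that $\phi\mapsto\overline{\phi}$ is a bijection of $C_0^\infty(\mathbb{R}^d\setminus\Gamma)$ onto itself, this identity says exactly that $-\Delta u=v$ in $\mathcal{D}'(\mathbb{R}^d\setminus\Gamma)$, since $C_0^\infty(\mathbb{R}^d\setminus\Gamma)$ is by definition the class of test functions detecting distributions on $\mathbb{R}^d\setminus\Gamma$. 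Hence $u\in D(H_{\Gamma,\max})$ if and only if $u\in L^2$ and $\Delta u\in L^2$ as an element of $\mathcal{D}'(\mathbb{R}^d\setminus\Gamma)$, which is the first equality.

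For the second equality, the inclusion $\supseteq$ is immediate, so only $\subseteq$ requires work, and this is where I would invoke \emph{interior elliptic regularity}. If $u\in L^2$ satisfies $\Delta u=-v\in L^2$ in $\mathcal{D}'(\mathbb{R}^d\setminus\Gamma)$, then on the open set $\mathbb{R}^d\setminus\Gamma$ the function $u$ solves a Poisson equation with $L^2_{\rm loc}$ right-hand side; the standard interior regularity estimate for $-\Delta$ then yields $u\in H^2_{\rm loc}(\mathbb{R}^d\setminus\Gamma)$, giving the second description of $D(H_{\Gamma,\max})$.

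For part (ii), I would first record that $\Gamma$ is closed: local finiteness forbids accumulation points, so $\Gamma$ has no limit point in $\mathbb{R}^d$. Since $\supp\nabla\chi$ is compact and disjoint from the closed set $\Gamma$, we have $\dist(\Gamma,\supp\nabla\chi)>0$, so there is a bounded open set $U$ with $\supp\nabla\chi\subset U$ and $\overline{U}\subset\mathbb{R}^d\setminus\Gamma$. By part (i), $u\in H^2_{\rm loc}(\mathbb{R}^d\setminus\Gamma)$, hence $u\in H^2(U)$ and in particular $\nabla u\in L^2(U)$. Applying the Leibniz rule in $\mathcal{D}'(\mathbb{R}^d\setminus\Gamma)$,
\begin{equation*}
\Delta(\chi u)=\chi\,\Delta u+2\,\nabla\chi\cdot\nabla u+(\Delta\chi)\,u,
\end{equation*}
the first and third terms lie in $L^2(\mathbb{R}^d)$ because $\chi$ and $\Delta\chi$ are bounded while $\Delta u,u\in L^2$, and the middle term lies in $L^2(\mathbb{R}^d)$ because $\nabla\chi$ is supported in $U$, where $\nabla u\in L^2$. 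Thus $\chi u\in L^2$ and $\Delta(\chi u)\in L^2$, so $\chi u\in D(H_{\Gamma,\max})$ by part (i).

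The routine parts are the bookkeeping of complex conjugates in the adjoint computation and the citation of the elliptic estimate. The one genuinely load-bearing observation is that the hypothesis $\Gamma\cap\supp\nabla\chi=\emptyset$ confines the problematic first-order term $\nabla\chi\cdot\nabla u$ to the region where the regularity established in part (i) already guarantees $\nabla u\in L^2$; controlling this term is the step I expect to be the crux of (ii), and it is precisely the point at which the disjointness assumption is used.
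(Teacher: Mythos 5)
Your proposal is correct and follows essentially the same route as the paper: part (i) by unravelling the definition of the adjoint against test functions in $C_0^\infty(\mathbb{R}^d\setminus\Gamma)$ and then invoking interior elliptic regularity (the paper's Corollary \ref{corollary_agmon}) for the $H^2_{\rm loc}$ description, and part (ii) by the same Leibniz-rule decomposition, with the cross term $\nabla\chi\cdot\nabla u$ controlled exactly as in the paper by the fact that $\supp\nabla\chi$ is a compact subset of $\mathbb{R}^d\setminus\Gamma$ where $u$ is locally $H^2$. Your added remark that local finiteness makes $\Gamma$ closed, hence $\dist(\Gamma,\supp\nabla\chi)>0$, is a correct and harmless elaboration of a point the paper leaves implicit.
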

\begin{proof}
 (i) By definition, the statement
$u\in D(H_{\Gamma,\max})=D({H_{\Gamma,\min}}^*)$
is equivalent to  `$u\in L^2(\mathbb{R}^d)$ and there exists
$v\in L^2(\mathbb{R}^d)$ such that
\begin{align*}
 (u, -\Delta\phi)=(v, \phi)
\end{align*}
for any $\phi\in C_0^\infty(\mathbb{R}^d\setminus \Gamma)$'.
The latter statement is equivalent to $v=-\Delta u\in L^2(\mathbb{R}^d)$,
where $\Delta u$ is defined as an element of 
$\mathcal{D}'(\mathbb{R}^d\setminus\Gamma)$.
Moreover, by the elliptic inner regularity theorem 
(Corollary \ref{corollary_agmon}), we have $u\in H^2_{\rm loc}(\mathbb{R}^d\setminus \Gamma)$.

(ii) 
Let $\chi$ satisfy the assumption, and $u\in D(H_{\Gamma,\max})$.
By the chain rule,
we have
\begin{align}
\label{2-1-01}
 \Delta(\chi u)=(\Delta\chi)u+2 \nabla\chi\cdot \nabla u +\chi\Delta u.
\end{align}
Since $u,\Delta u\in L^2(\mathbb{R}^d)$, the first term of (\ref{2-1-01})
and the third
belong to $L^2(\mathbb{R}^d)$.
Moreover, since $\supp \nabla \chi$ is a compact subset of 
$\mathbb{R}^d\setminus\Gamma$ and 
$u\in H^2_{\rm loc}(\mathbb{R}^d\setminus\Gamma)$,
the second term also belongs to $L^2(\mathbb{R}^d)$.
Thus $\chi u, \Delta(\chi u)\in L^2(\mathbb{R}^d)$,
and the statement follows from (i).
\end{proof}
\noindent
The assumption $\Gamma \cap \supp\nabla \chi =\emptyset$ above cannot 
be removed when $d=2,3$. For example,
consider the case $d=2$, $\Gamma=O:=\{0\}$.
Take functions $u\in C^\infty(\mathbb{R}^2\setminus O)$ 
and $\chi\in C_0^\infty(\mathbb{R}^2)$ 
such that
\begin{align}
 u(x)=
\begin{cases}
 \log |x| & |x|<1,\cr
 0 & |x|>2,
\end{cases}
\quad
\chi(x)= 
\begin{cases}
x_1 & |x|<1,\cr
 0 & |x|>2.
\end{cases}
\end{align}
Since $\Delta \log|x|=0$ for $x\not=0$,
we see that $u,\Delta u\in L^2(\mathbb{R}^2)$, so $u\in D(H_{O,\max})$.
However, the chain rule (\ref{2-1-01}) implies
\begin{align*}
 \Delta(\chi u) = \frac{2x_1}{|x|^2}
\end{align*}
for $|x|<1$, so $\Delta(\chi u)\not\in L^2(\mathbb{R}^2)$.
This fact is crucial in our proof of self-adjointness criterion
(Theorem \ref{theorem_main}).

Next we define the (generalized) boundary values at $\gamma\in \Gamma$
of $u\in D(H_{\Gamma,\max})$.
In the case $d=2,3$,
similar argument is found in \cite{Al-Ge-Ho-Ho, Br-Ge-Pa}.
\begin{proposition}
\label{proposition_bvalue}
\begin{enumerate}
 \item Let $d=1$,
$u\in D(H_{\Gamma, \max})$ and $\gamma \in \Gamma$.
Then,
one-side limits $u(\gamma\pm 0)$ and $u'(\gamma\pm 0)$ exist.

 \item 
Let $d=2,3$, $u\in D(H_{\Gamma, \max})$ and $\gamma\in \Gamma$.
Let $\epsilon$ be a small positive constant
so that $\overline{B_\epsilon(\gamma)}\cap \Gamma=\{\gamma\}$.
Then, 
there exist unique constants $u_{\gamma,0}$ and $u_{\gamma,1}$,
and $\widetilde{u}\in H^2(B_\epsilon(\gamma))$ with $\widetilde{u}(\gamma)=0$,
such that for $x\in B_{\epsilon}(\gamma)$
\begin{equation}
\label{2-1-02}
\begin{array}{cc}
u(x) = u_{\gamma,0}\log|x-\gamma|+u_{\gamma,1}+ \widetilde{u}(x)
& (d=2),\\
u(x) = u_{\gamma,0}|x-\gamma|^{-1}+u_{\gamma,1}+ \widetilde{u}(x)
& (d=3).
\end{array}
\end{equation}

\end{enumerate}
\end{proposition}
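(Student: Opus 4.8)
The plan is to separate the elementary one--dimensional case from the genuinely singular behaviour, which appears only for $d=2,3$. Both parts are purely local, so throughout I fix $\epsilon$ so small that $\overline{B_\epsilon(\gamma)}$ meets $\Gamma$ only at $\gamma$, and I work inside this ball (resp.\ the two one--sided intervals when $d=1$).

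For part (i) I would argue on $(\gamma-\epsilon,\gamma)$, the right interval being symmetric. By Proposition \ref{proposition_hmax}, $u\in H^2_{\rm loc}(\mathbb{R}\setminus\Gamma)$, so $u$ and $u'$ are continuous on the open interval, while $u''=\Delta u\in L^2(\mathbb{R})\subset L^1((\gamma-\epsilon,\gamma))$. Fixing an interior point $x_0$ and writing $u'(x)=u'(x_0)+\int_{x_0}^x u''\,dt$, the integral converges as $x\to\gamma-0$ since $u''\in L^1$; hence $u'(\gamma-0)$ exists. Then $u'$ is bounded near $\gamma$, and $u(x)=u(x_0)+\int_{x_0}^x u'\,dt$ yields $u(\gamma-0)$. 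This settles (i).

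For part (ii) I would first \emph{remove the inhomogeneity}. Set $f:=\Delta u$, which restricts to an $L^2$ function on $B_\epsilon(\gamma)$, and let $w$ be the Newtonian potential of its extension by zero; by the Calder\'on--Zygmund estimates $w\in H^2(B_\epsilon(\gamma))$ with $\Delta w=f$. Then $v:=u-w\in L^2(B_\epsilon(\gamma))$ satisfies $\Delta v=0$ in $\mathcal{D}'(B_\epsilon(\gamma)\setminus\{\gamma\})$, reducing the problem to the structure of an $L^2$ function that is harmonic off a single point. Since $v\in L^2\subset L^1_{\rm loc}(B_\epsilon(\gamma))$, the distribution $\Delta v$ on the full ball is supported at $\gamma$, hence $\Delta v=\sum_{|\beta|\le N}c_\beta\,\partial^\beta\delta_\gamma$ for finitely many constants. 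Writing $G$ for the fundamental solution ($G(x)=\frac1{2\pi}\log|x|$ for $d=2$, $G(x)=-\frac1{4\pi}|x|^{-1}$ for $d=3$), the function $h:=v-\sum_\beta c_\beta\,\partial^\beta G(\cdot-\gamma)$ is harmonic on all of $B_\epsilon(\gamma)$, hence smooth by interior regularity (Corollary \ref{corollary_agmon}).

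The decisive step is to read off the admissible singularities from $v\in L^2$. A homogeneity count shows $\partial^\beta G$ is square integrable near the origin only for $\beta=0$: in $d=3$, $\partial^\beta G$ is homogeneous of degree $-1-|\beta|$, which is $L^2$ iff $|\beta|=0$, and in $d=2$ the derivatives $\partial^\beta G$ with $|\beta|\ge1$ are homogeneous of degree $-|\beta|$ and again excluded, while $\log|x|$ itself is $L^2$. Comparing the most singular surviving term therefore forces $c_\beta=0$ for all $|\beta|\ge1$, leaving $v=c_0\,G(\cdot-\gamma)+h$. Assembling, $u=w+v=c_0\,G(\cdot-\gamma)+(w+h)$; I set $u_{\gamma,0}=c_0/2\pi$ ($d=2$) or $u_{\gamma,0}=-c_0/4\pi$ ($d=3$), and observe that $w\in H^2(B_\epsilon(\gamma))$ while $h$ is smooth, so the remainder $w+h$ lies in $H^2(B_\epsilon(\gamma))$ (shrinking $\epsilon$ if necessary). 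Since $2>d/2$ for $d\le3$, Sobolev embedding gives $w+h\in C^0$, so $u_{\gamma,1}:=(w+h)(\gamma)$ and $\widetilde u:=(w+h)-u_{\gamma,1}$ are well defined with $\widetilde u(\gamma)=0$. Uniqueness is immediate, since $u_{\gamma,0}$ is the coefficient of the non-$H^2$ singular part and $u_{\gamma,1}$ is the value at $\gamma$ of the $H^2$ remainder. I expect the main obstacle to be this classification---verifying that the $L^2$ condition admits exactly the fundamental solution and excludes all of its derivatives---together with securing the $H^2$-regularity of the Newtonian potential that makes the reduction to the harmonic case clean.
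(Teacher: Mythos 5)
Your proof is correct, and for part (ii) it takes a genuinely different route from the paper. Part (i) coincides with the paper's argument (Sobolev embedding of $H^2(I)$ on each component $I$ of $\mathbb{R}\setminus\Gamma$), merely written out via the fundamental theorem of calculus. For part (ii), the paper first localizes to a one-point set by the cut-off of Proposition \ref{proposition_hmax} (ii) and then invokes von Neumann's decomposition $D(H_{O,\max})=\overline{D(H_{O,\min})}\oplus\mathcal{K}_{O,-}\oplus\mathcal{K}_{O,+}$, identifies $\overline{D(H_{O,\min})}=\{u\in H^2\,;\,u(0)=0\}$ by a dimension count, and reads off the $\log r$ (resp.\ $r^{-1}$) and constant terms from the small-$r$ expansions of the explicit deficiency elements $H_0^{(1)}(\sqrt{\pm i}\,r)$ and $e^{i\sqrt{\pm i}\,r}/r$. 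You instead subtract a Newtonian potential to reduce to an $L^2$ function harmonic off a point, classify the point-supported distribution $\Delta v$, and eliminate all derivatives of the fundamental solution by an $L^2$-homogeneity count. Your route is more elementary and self-contained---it needs neither the deficiency-index computation nor the Hankel-function expansions, and it would survive generalizations where the deficiency subspaces are not explicit---at the cost of being longer; the paper's route is shorter given the extension-theory machinery already in play, and its identification of $\overline{D(H_{O,\min})}$ feeds directly into the Green-formula computations used later. One phrasing issue worth tightening: from the failure of square integrability you may directly conclude only that each homogeneous function $S_k:=\sum_{|\beta|=k}c_\beta\,\partial^\beta G$ vanishes identically on $\mathbb{R}^d\setminus\{0\}$ for $k\geq 1$, not immediately that every $c_\beta$ vanishes (for $d=3$ and $k\geq 2$ there are nontrivial combinations, e.g.\ the one corresponding to $\Delta$, whose associated function vanishes away from the origin while the distribution does not). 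But since $h$ is smooth, $v-h$ is a locally integrable function, so the distribution $\sum_\beta c_\beta\,\partial^\beta G(\cdot-\gamma)$ can carry no point-supported part either, and $c_\beta=0$ for $|\beta|\geq 1$ does follow; this is a matter of wording, not a gap.
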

\begin{proof}
(i) This is a consequence of the Sobolev embedding theorem,
since the restriction of $u\in D(H_{\Gamma,\max})$ 
on $I$ belongs to
$H^2(I)$ 
for any connected component $I$ of $\mathbb{R}\setminus\Gamma$.

(ii)
We consider the case $d=2$.
By a cut-off argument ((ii) of Proposition \ref{proposition_hmax}),
we can reduce the proof to the case $\Gamma$ equals one point set.
 Without loss of generality, we assume $\Gamma=O$.
Then, by von Neumann's theory of self-adjoint extensions
(see e.g.\ \cite[Section X.1]{Re-Si}),
we have
\begin{align}
\label{2-1-03}
 D(H_{O,\max}) = 
\overline{D(H_{O,\min})}
\oplus \mathcal{K}_{O,-}
\oplus\mathcal{K}_{O,+},
\end{align}
where $\overline{D(H_{O,\min})}$ is the closure of $D(H_{O,\min})$
with respect to the graph norm (or $H^2$-norm),
and $\mathcal{K}_{O,\pm} = \Ker(H_{O,\max}\mp i)$ are deficiency subspaces.
It is known that
$\mathcal{K}_{O,\pm}$ are one dimensional spaces spanned by
\begin{align*}
\varphi_\pm(x) = H^{(1)}_0(\sqrt{\pm i}r),
\end{align*}
where $H^{(1)}_0$ is the 0-th order Hankel function of the first kind,
$r=|x|$, and the branches of $\sqrt{\pm i}$ are taken as $\Im \sqrt{\pm i}>0$
(see \cite{{Al-Ge-Ho-Ho}}).
Thus we have inclusion 
\begin{align}
\label{2-1-04}
 \overline{D(H_{O,\min})}
\subset \{u\in H^2(\mathbb{R}^2)\,;\,u(0)=0\}
\subset H^2(\mathbb{R}^2)
\subset
D(H_{O,\max}).
\end{align}
The first inclusion is due to the Sobolev embedding theorem.
The second inclusion is clearly strict,
and the third one is also strict
since $D(H_{O,\max})$ contains elements singular at $0$,
by (\ref{2-1-03})
(see (\ref{2-1-05}) below).
The decomposition (\ref{2-1-03}) also tells us 
$\dim \left(D(H_{O,\max})/ \overline{D(H_{O,\min})}\right)=2$,
so the first inclusion in (\ref{2-1-04}) must be equality, that is,
\begin{align*}
  \overline{D(H_{O,\min})} = \{u\in H^2(\mathbb{R}^2)\,;\,u(0)=0\}.
\end{align*}

By the series expansion of the Hankel function,
we have
\begin{align}
\label{2-1-05}
 \varphi_\pm(x) 
= 
1 + \frac{2i}{\pi}\left(\gamma_E+\log \frac{\sqrt{\pm i}r}{2}\right)
+O(r^2 \log r)\quad (r\to 0),
\end{align}
where $\gamma_E$ is the Euler constant.
It is easy to see the remainder term is in $H^2(B_\epsilon(0))$ 
and vanishes at $0$.
Thus by the decomposition (\ref{2-1-03}),
every $u\in D(H_{O,\max})$ can be uniquely written as (\ref{2-1-02}).

In the case $d=3$,
a basis of the deficiency subspace $\mathcal{K}_{O,\pm}$ is
\begin{align*}
 \varphi_\pm(x) = \frac{e^{i\sqrt{\pm i}r}}{r}
=
\frac{1}{r}+i\sqrt{\pm i}+O(r)
\quad (r\to 0)
\end{align*}
(see \cite{{Al-Ge-Ho-Ho}}).
Using this expression,
we can prove the statement for $d=3$ similarly.
\end{proof}

Next we introduce the generalized Green formula.
\begin{proposition}
 \label{proposition_green}
Let $d=1,2,3$.
Let $u,v \in D(H_{\Gamma, \max})$,
and assume $\supp u$ or $\supp v$ is bounded.
Then we have
\begin{align}
\label{2-1-06}
& (H_{\Gamma, \max}u,v)- (u,H_{\Gamma, \max}v)\notag\\
&=
\begin{cases}
\displaystyle
\sum_{\gamma \in \Gamma} 
\bigl(
-\overline{u'(\gamma-0)}v(\gamma-0)
+\overline{u(\gamma-0)}v'(\gamma-0)\\
\quad\quad+\overline{u'(\gamma+0)}v(\gamma+0)
-\overline{u(\gamma+0)}v'(\gamma+0)
\bigr)
&
(d=1),\\
\displaystyle
\sum_{\gamma \in \Gamma} 2\pi 
(
\overline{u_{\gamma,0}}v_{\gamma,1}
-
\overline{u_{\gamma,1}}v_{\gamma,0}
) 
& 
(d=2),\\
\displaystyle
\sum_{\gamma \in \Gamma} (-4\pi)
(
\overline{u_{\gamma,0}}v_{\gamma,1}
-
\overline{u_{\gamma,1}}v_{\gamma,0}
) 
& 
(d=3).
\end{cases}
\end{align}
\end{proposition}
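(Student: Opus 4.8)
\section*{Proof proposal}

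The plan is to localize, excise small balls around each point of $\Gamma$, apply the classical Green second identity on the punctured region, and let the radii shrink to zero, reading off the boundary contributions from the local expansions in Proposition \ref{proposition_bvalue}. Throughout I assume without loss of generality that $\supp u$ is bounded. Since $\Gamma$ is locally finite, only finitely many $\gamma\in\Gamma$ meet a neighbourhood of $\supp u$, and away from those points the integrand $\overline{u}\,\Delta v-\overline{\Delta u}\,v$ vanishes; hence the sum in (\ref{2-1-06}) is finite and all integrals below are effectively over a fixed bounded region.

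Fix $\epsilon>0$ small enough that the balls $\overline{B_\epsilon(\gamma)}$ ($\gamma$ relevant) are disjoint and each meets $\Gamma$ only in $\gamma$, and set $\Omega_\epsilon:=\mathbb{R}^d\setminus\bigcup_\gamma\overline{B_\epsilon(\gamma)}$. By Proposition \ref{proposition_hmax}, $u,v\in H^2_{\rm loc}(\mathbb{R}^d\setminus\Gamma)$, so the classical Green identity applies on $\Omega_\epsilon$ and yields
\[
\int_{\Omega_\epsilon}\bigl(\overline{u}\,\Delta v-\overline{\Delta u}\,v\bigr)\,dx
=\sum_{\gamma}\int_{\partial B_\epsilon(\gamma)}\bigl(\overline{\partial_r u}\,v-\overline{u}\,\partial_r v\bigr)\,dS,
\]
where $r=|x-\gamma|$ and the sign reflects that the outward normal of $\Omega_\epsilon$ on $\partial B_\epsilon(\gamma)$ is $-\partial_r$. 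Because $\overline{u}\,\Delta v-\overline{\Delta u}\,v\in L^1(\mathbb{R}^d)$ and $|\mathbb{R}^d\setminus\Omega_\epsilon|\to0$, the left-hand side converges to $(H_{\Gamma,\max}u,v)-(u,H_{\Gamma,\max}v)$ as $\epsilon\to0$, so it remains to evaluate the limit of each boundary integral (for $d=1$ the ``balls'' are the intervals $(\gamma-\epsilon,\gamma+\epsilon)$ and the boundary integral is the elementary endpoint evaluation of $[\overline{u}\,v'-\overline{u'}\,v]$, which directly gives the stated one-dimensional formula with the correct signs).

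For $d=2,3$ I would insert the expansions of Proposition \ref{proposition_bvalue}, writing $u=u_s+u_r$ near $\gamma$, where $u_s$ is the explicit singular part ($u_{\gamma,0}\log|x-\gamma|$ for $d=2$, $u_{\gamma,0}|x-\gamma|^{-1}$ for $d=3$) and $u_r=u_{\gamma,1}+\widetilde{u}\in H^2(B_\epsilon(\gamma))$ with $u_r(\gamma)=u_{\gamma,1}$, and likewise for $v$. Expanding the integrand bilinearly into four pieces, the singular--singular term cancels identically (the $r^{-3}$, resp.\ $(\log r)/r$, contributions drop out), and the regular--regular term is again a Green boundary integral of two $H^2$ functions, hence equals $\pm\int_{B_\epsilon(\gamma)}(\overline{u_r}\,\Delta v_r-\overline{\Delta u_r}\,v_r)\,dx\to0$. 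The two mixed terms survive: using $\int_{\partial B_\epsilon(\gamma)}w\,dS=|\partial B_\epsilon|\langle w\rangle$ for the spherical average, together with $\langle u_r\rangle\to u_r(\gamma)=u_{\gamma,1}$ by continuity (Sobolev embedding, $d\le3$), one extracts the products $\overline{u_{\gamma,0}}\,v_{\gamma,1}$ and $\overline{u_{\gamma,1}}\,v_{\gamma,0}$ with the prefactors $2\pi$ (for $d=2$) and $-4\pi$ (for $d=3$), matching (\ref{2-1-06}).

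The main obstacle is controlling the remainder $\widetilde{u}$ (and $\widetilde{v}$) inside the mixed terms, specifically showing that the contribution pairing a singular factor with a derivative of the regular factor vanishes. Here I would use the divergence theorem in the form $\int_{\partial B_\epsilon(\gamma)}\partial_r\widetilde{u}\,dS=\int_{B_\epsilon(\gamma)}\Delta\widetilde{u}\,dx$ together with the fact that $\Delta\widetilde{u}=\Delta u\in L^2$ near $\gamma$ (the singular part and the constant being harmonic there), which bounds this by $\|\Delta u\|_{L^2(B_\epsilon(\gamma))}\,|B_\epsilon(\gamma)|^{1/2}$. Multiplying by the $\epsilon^{-1}$ (for $d=3$) or $|\log\epsilon|$ (for $d=2$) coming from the singular factor, this is $O(\epsilon^{1/2})$, resp.\ $O(\epsilon|\log\epsilon|)$, and therefore drops out. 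The only remaining care is to verify that no order-one term is lost in the bilinear bookkeeping, which the above estimates guarantee.
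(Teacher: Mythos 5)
Your proposal is correct, and it reaches the same endpoint as the paper — the constants $2\pi$ and $-4\pi$ arise in both as the flux of $\log|x-\gamma|$, resp.\ $|x-\gamma|^{-1}$, through a shrinking sphere — but the reduction to that computation is organized differently. The paper first cuts off so that both supports are bounded, writes $u$ and $v$ as finite combinations of fixed model functions $\phi_\gamma$ (equal to $\log|x-\gamma|$ or $|x-\gamma|^{-1}$ near $\gamma$) and $\psi_\gamma$ (equal to $1$ near $\gamma$) plus remainders in $\overline{D(H_{\Gamma,\min})}$, and then disposes of every term except the cross terms by purely algebraic properties of the form $[\phi,\psi]=(H_{\Gamma,\max}\phi,\psi)-(\phi,H_{\Gamma,\max}\psi)$: it is antisymmetric under conjugation (so it vanishes on the diagonal for real-valued functions such as $\phi_\gamma,\phi_\gamma$ and $\psi_\gamma,\psi_\gamma$), and it vanishes whenever one entry lies in $\overline{D(H_{\Gamma,\min})}$; only the single number $[\phi_\gamma,\psi_\gamma]$ is then computed by the limiting boundary integral. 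You instead apply the classical Green identity on the punctured domain and estimate all four bilinear pieces of the sphere integral by hand, with the quantitative $O(\epsilon^{1/2})$ and $O(\epsilon|\log\epsilon|)$ bounds controlling the $\widetilde{u}$-contributions via the divergence theorem. Both routes work; the paper's is shorter because the remainder is killed once and for all by the structural fact that the boundary form annihilates $\overline{D(H_{\Gamma,\min})}$ (which in turn leans on the identification of $\overline{D(H_{O,\min})}$ with $\{w\in H^2:\,w(0)=0\}$ from the proof of Proposition \ref{proposition_bvalue}), whereas yours uses only the local expansion (\ref{2-1-02}) itself and makes the origin of the constants and the size of every error term explicit. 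Two small bookkeeping points in your write-up deserve a sentence each in a final version: you should excise balls around \emph{all} points of $\Gamma$ meeting the bounded region (those outside $\supp u$ contribute nothing since $u\equiv 0$ near them, but they must be removed for $u,v$ to be $H^2$ on the punctured domain), and the traces of $\partial_r\widetilde{u}$, $\partial_r\widetilde{v}$ on $\partial B_\epsilon(\gamma)$ should be justified for (a sequence of) good radii $\epsilon$.
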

\begin{proof}
The proof in the case $d=1$ is easy.
Consider the case $d=2$.
By a cut-off argument,
we can assume both $\supp u$ and $\supp v$ are bounded.
We can also assume
$\supp u \cup \supp v\subset B_R(0)$,
and $\Gamma \cap \partial B_R(0)=\emptyset$.
Then, we can decompose $u$ and $v$ as
\begin{align*}
 u=
\sum_{\gamma\in\Gamma\cap B_R(0)}
(u_{\gamma,0}\phi_\gamma
+
u_{\gamma,1}\psi_\gamma
)
+
\widetilde{u},\quad
 v=
\sum_{\gamma\in\Gamma\cap B_R(0)}
(v_{\gamma,0}\phi_\gamma
+
v_{\gamma,1}\psi_\gamma
)
+
\widetilde{v},
\end{align*}
where $\widetilde{u},\widetilde{v}\in \overline{D(H_{\Gamma,\min})}$,
and $\phi_\gamma,\psi_\gamma\in D(H_{\Gamma,\max})$ are real-valued functions 
such that
\begin{align*}
 \phi_\gamma(x)=\log|x-\gamma|,\quad
 \psi_\gamma(x)=1\quad \mbox{near }x=\gamma,
\end{align*}
and $\supp\phi_\gamma\cup \supp\psi_\gamma$ 
is contained in some small neighborhood of $\gamma$
so that 
$\{\supp\phi_\gamma\cup \supp\psi_\gamma\}_{\gamma \in \Gamma\cap B_R(0)}$ 
are disjoint sets in $B_R(0)$.

We use the notation
\begin{align*}
 [\phi,\psi]=(H_{\Gamma, \max}\phi,\psi)- (\phi,H_{\Gamma, \max}\psi).
\end{align*}
Clearly $[\phi,\psi]=-\overline{[\psi,\phi]}$,
so $[\phi,\phi]$=0 for real-valued $\phi\in D(H_{\Gamma,\max})$.
Moreover, $[\phi,\psi]=0$ 
if  $\phi\in D(H_{\Gamma,\max})$ and $\psi\in \overline{D(H_{\Gamma,\min})}$.
Thus we have
\begin{align*}
[u,v]=
\sum_{\gamma\in\Gamma\cap B_R(0)}
\left(
\overline{u_{\gamma,0}}v_{\gamma,1}
-
\overline{u_{\gamma,1}}v_{\gamma,0}
\right)
[\phi_\gamma,\psi_\gamma].
\end{align*}
Let us calculate $[\phi_\gamma,\psi_\gamma]$.
By translating the coordinate,
we assume $\gamma=0$,
and write $\phi_\gamma=\phi$, $\psi_\gamma=\psi$.
Then, since $\phi=\log r$ and $\psi=1$ near $x=0$,
\begin{align*}
[\phi,\psi]
=&
\lim_{\epsilon\downarrow 0}
\int_{B_\epsilon(0)^c}
\left((-\overline{\Delta \phi})\psi + \overline{\phi}(\Delta \psi)\right)dx\\
=&
\lim_{\epsilon\downarrow 0}
\int_{\partial B_\epsilon(0)}
\left((-\overline{\nabla \phi\cdot n})\psi + \overline{\phi}(\nabla \psi\cdot n)\right)ds\\
=&
\lim_{r\downarrow 0}
\int_0^{2\pi}
\left(\frac{\partial \phi}{\partial r}\cdot \psi - \phi\cdot \frac{\partial \psi}{\partial r}\right)rd\theta\\
=&2\pi,
\end{align*}
where $n$ is the unit inner normal vector on $\partial B_\epsilon(0)$,
$ds$ is the line element,
and $(r,\theta)$ is the polar coordinate.
Thus the assertion for $d=2$ holds.
The proof for the case $d=3$ is similar, 
but we take the function $\phi_\gamma$ as
\begin{align*}
 \phi_\gamma(x)=|x-\gamma|^{-1}\quad \mbox{near }x=\gamma.
\end{align*}
\end{proof}
If the uniform discreteness condition (\ref{intro03}) holds,
the results in this subsection can be formulated in terms of 
the boundary triplet for $H_{\Gamma,\max}$,
as is done in \cite{Br-Ge-Pa}.
When $d=1$ and $d_*=0$,
the boundary triplet for $H_{\Gamma,\max}$ 
is constructed in \cite{Ko-Ma}.
The construction in the case $d=2,3$ and $d_*=0$ seems to be unknown so far.

\subsection{Proof of Theorem \ref{theorem_main}}
Let $\Gamma$ be a locally finite discrete set in $\mathbb{R}^d$,
and  $\alpha=(\alpha_\gamma)_{\gamma\in \Gamma}$ be
a sequence of real numbers.
In this subsection we write $H=H_{\Gamma,\alpha}$,
that is,
\begin{align*}
 &Hu=-\Delta u,\\
&D(H)=\{u\in D(H_{\Gamma,\max})\,;\, u \mbox{ satisfies }(BC)_\gamma 
\mbox{ for every }\gamma\in \Gamma
\},
\end{align*}
where $(BC)_\gamma$ is defined in (\ref{intro02}).
We introduce an auxiliary operator $H_b$ by
\begin{align*}
 H_b u = -\Delta u,
\quad
D(H_b)=\{ u\in D(H)\, ;\, \mbox{$\supp u$ is bounded}\}.
\end{align*}
By the generalized Green formula (Proposition \ref{proposition_green}),
we have the following.
\begin{proposition}
\label{proposition_symmetric}
Let $d=1,2,3$.
For any $\Gamma$ and $\alpha$,
the operator $H_b$ is a densely defined symmetric operator, and ${H_b}^*=H$.
\end{proposition}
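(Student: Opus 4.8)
The plan is to establish the three assertions in turn, treating the operator identity ${H_b}^*=H$ as the substantive part and the first two properties as quick consequences of the material already at hand. For density, I would note that $C_0^\infty(\mathbb{R}^d\setminus\Gamma)=D(H_{\Gamma,\min})$ is contained in $D(H_b)$: a function vanishing in a neighbourhood of every point of $\Gamma$ has all its boundary data (the one-sided limits when $d=1$, and the coefficients $u_{\gamma,0},u_{\gamma,1}$ when $d=2,3$) equal to zero, hence satisfies each $(BC)_\gamma$ trivially, and it has compact support. Since $\Gamma$ is a Lebesgue-null set, $C_0^\infty(\mathbb{R}^d\setminus\Gamma)$ is dense in $L^2(\mathbb{R}^d)$, and therefore so is the larger set $D(H_b)$.

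For symmetry I would apply the generalized Green formula (Proposition \ref{proposition_green}) to $u,v\in D(H_b)$, which is legitimate because $\supp u$ is bounded. Both $u$ and $v$ satisfy $(BC)_\gamma$, so I substitute these relations into the boundary sum in (\ref{2-1-06}); using that each $\alpha_\gamma$ is real, every summand cancels. For instance, when $d=2$ one replaces $u_{\gamma,1}=-2\pi\alpha_\gamma u_{\gamma,0}$ and $v_{\gamma,1}=-2\pi\alpha_\gamma v_{\gamma,0}$ to find $\overline{u_{\gamma,0}}v_{\gamma,1}-\overline{u_{\gamma,1}}v_{\gamma,0}=0$, and the cases $d=1,3$ are identical. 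Hence $(H_bu,v)=(u,H_bv)$, so $H_b$ is symmetric. The same computation gives the inclusion $H\subset{H_b}^*$ at once: for $v\in D(H)$ and any $u\in D(H_b)$ the boundary sum again vanishes, so $(H_bu,v)=(u,Hv)$ for all $u\in D(H_b)$, whence $v\in D({H_b}^*)$ with ${H_b}^*v=Hv$.

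It remains to prove the reverse inclusion ${H_b}^*\subset H$. I would take $v\in D({H_b}^*)$ and first test the defining identity of ${H_b}^*$ against $C_0^\infty(\mathbb{R}^d\setminus\Gamma)\subset D(H_b)$; this shows $v\in D({H_{\Gamma,\min}}^*)=D(H_{\Gamma,\max})$ with $H_{\Gamma,\max}v={H_b}^*v$. Consequently $(H_{\Gamma,\max}u,v)-(u,H_{\Gamma,\max}v)=0$ for every $u\in D(H_b)$, so by Proposition \ref{proposition_green} the entire boundary sum vanishes for all such $u$. What is then needed is to upgrade this one global identity into the separate pointwise conditions $(BC)_\gamma$ for $v$, and this is the main obstacle.

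I would overcome it by localization: for each fixed $\gamma_0\in\Gamma$ I construct test functions $u\in D(H_b)$ supported in a small ball around $\gamma_0$, so that all boundary data of $u$ away from $\gamma_0$ vanish and the boundary sum collapses to the single term at $\gamma_0$. Concretely, when $d=2,3$ I take $u=\phi_{\gamma_0}+c\psi_{\gamma_0}$ with $\phi_{\gamma_0},\psi_{\gamma_0}$ the localized singular and constant functions used in the proof of Proposition \ref{proposition_green}, and $c$ chosen so that $(BC)_{\gamma_0}$ holds; this gives $u_{\gamma_0,0}=1$, and the surviving term forces $2\pi\alpha_{\gamma_0}v_{\gamma_0,0}+v_{\gamma_0,1}=0$ (respectively $-4\pi\alpha_{\gamma_0}v_{\gamma_0,0}+v_{\gamma_0,1}=0$), which is $(BC)_{\gamma_0}$ for $v$. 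When $d=1$ I instead vary the two independent boundary data $u(\gamma_0)$ and $u'(\gamma_0-0)$ of such a localized $u$, deducing first the continuity $v(\gamma_0+0)=v(\gamma_0-0)$ and then the jump relation $v'(\gamma_0+0)-v'(\gamma_0-0)=\alpha_{\gamma_0}v(\gamma_0)$. Since $\gamma_0$ is arbitrary, $v$ satisfies every $(BC)_\gamma$, so $v\in D(H)$. The only point requiring care is checking that these localized $u$ genuinely lie in $D(H_b)$, namely that they have bounded support and satisfy $(BC)_\gamma$ at every point of $\Gamma$, which holds because they vanish off a small ball about $\gamma_0$.
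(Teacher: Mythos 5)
Your proposal is correct and follows essentially the same route as the paper: symmetry and the inclusion $H\subset {H_b}^*$ via the generalized Green formula, and the reverse inclusion by first observing $D({H_b}^*)\subset D(H_{\Gamma,\max})$ and then testing against localized elements of $D(H_b)$ whose boundary data are concentrated at a single $\gamma$ (your $u=\phi_{\gamma_0}+c\psi_{\gamma_0}$ is, up to normalization, exactly the paper's choice $u_{\gamma,0}=1/(2\pi)$, $u_{\gamma,1}=-\alpha_\gamma$). The only additions are your explicit density argument and the two-parameter variation in the case $d=1$, both of which the paper leaves implicit.
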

\begin{proof}
We consider the case $d=2$,
since the case $d=1,3$ can be treated similarly.
For $u,v\in D(H_b)$,
the generalized Green formula (\ref{2-1-06}) and $(BC)_\gamma$ imply
\begin{align}
 [u,v]:=&
(H u,v)-(u,Hv)
=
\sum_{\gamma \in \Gamma} 2\pi 
(
\overline{u_{\gamma,0}}v_{\gamma,1}
-
\overline{u_{\gamma,1}}v_{\gamma,0}
) \notag\\
=&
\sum_{\gamma \in \Gamma} 
(2\pi)^2\alpha_\gamma
(
-
\overline{u_{\gamma,0}}v_{\gamma,0}
+
\overline{u_{\gamma,0}}v_{\gamma,0}
) 
=
0.
\label{2-2-01}
\end{align}
Thus $H_b$ is a symmetric operator.

The equality (\ref{2-2-01}) also holds for any $u\in D(H_b)$
and $v\in D(H)$, so $D(H)\subset D({H_b}^*)$.
Conversely, let $v\in D({H_b}^*)$.
By definition,
$[u,v]=0$ holds for any $u\in D(H_b)$.
For $\gamma\in \Gamma$, 
take $u\in D(H_b)$ such that 
$u_{\gamma,0}=1/(2\pi)$, $u_{\gamma,1}=-\alpha_\gamma$,
and $u_{\gamma',0}=u_{\gamma',1}=0$ for $\gamma'\not=\gamma$.
Since $D({H_b}^*)\subset D({H_{\Gamma,\min}}^*)=D(H_{\Gamma,\max})$,
we have by the generalized Green formula (\ref{2-1-06})
\begin{align*}
[u,v]= v_{\gamma,1}+2\pi \alpha_\gamma v_{\gamma,0}=0.
\end{align*}
Thus $v$ satisfies $(BC)_\gamma$ for every $\gamma\in \Gamma$,
and we conclude $v\in D(H)$.
This means $H={H_b}^*$.
\end{proof}
Now Theorem \ref{theorem_main} is a corollary of the following proposition.
\begin{proposition}
 \label{proposition_dense}
Suppose Assumption \ref{assumption_no_percolation} holds.
Then, $\overline{H_b}=H$. In other words,
$D(H_b)$ is an operator core for the operator $H$.
\end{proposition}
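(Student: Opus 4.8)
The plan is to prove the equivalent statement that every $u\in D(H)$ can be approximated in the graph norm by bounded-support functions in $D(H)$, i.e.\ by elements of $D(H_b)$. Since $D(H_b)\subset D(H)$ with the same action, and since $H=H_b^*$ (Proposition \ref{proposition_symmetric}) is closed, one automatically has $\overline{H_b}=H_b^{**}\subset H$; so it suffices to produce, for each $u\in D(H)$, a sequence $u_n\in D(H_b)$ with $u_n\to u$ and $\Delta u_n\to\Delta u$ in $L^2(\mathbb{R}^d)$. This gives $D(H)\subset D(\overline{H_b})$ and hence $\overline{H_b}=H$. (Combined with $H=H_b^*$ it also yields $H^*=H_b^{**}=\overline{H_b}=H$, which is Theorem \ref{theorem_main}.)

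The natural candidates are $u_n=\chi_n u$ for cut-offs $\chi_n\in C_0^\infty(\mathbb{R}^d)$, and two conditions must hold for $u_n$ to remain in $D(H)$. First, $\chi_n u\in D(H_{\Gamma,\max})$: by Proposition \ref{proposition_hmax}(ii) this requires $\Gamma\cap\supp\nabla\chi_n=\emptyset$, and the example following that proposition shows this cannot be dropped when $d=2,3$. Second, $\chi_n u$ must satisfy every $(BC)_\gamma$; this is automatic once $\chi_n$ is \emph{constant near each} $\gamma\in\Gamma$, since then the boundary data of $\chi_n u$ at $\gamma$ are $\chi_n(\gamma)(u_{\gamma,0},u_{\gamma,1})$ (resp.\ $\chi_n(\gamma)(u(\gamma\pm0),u'(\gamma\pm0))$ when $d=1$), and each $(BC)_\gamma$ in (\ref{intro02}) is homogeneous. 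Thus I want cut-offs that are locally constant on a full neighborhood $(\Gamma)_\delta$ of $\Gamma$, equal to $1$ on larger and larger sets, with $\supp\nabla\chi_n$ escaping to infinity and $|\nabla\chi_n|,|\Delta\chi_n|$ bounded uniformly in $n$.

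This is exactly where Assumption \ref{assumption_no_percolation} enters, and constructing these cut-offs is the main obstacle. Fix $\delta\in(0,R)$. If $\gamma,\gamma'$ lie in different connected components of $(\Gamma)_R$ then $|\gamma-\gamma'|\ge 2R$, so the $\delta$-neighborhoods of distinct clusters are bounded and separated by distance at least $2(R-\delta)>0$. Using these uniform gaps I would take a standard radial exhausting cut-off, freeze it to be constant on each connected component of $(\Gamma)_\delta$, and interpolate back to the radial cut-off inside the surrounding gap; the uniform lower bound on the gap width, together with the boundedness and local finiteness of the clusters, is what keeps $|\nabla\chi_n|$ and $|\Delta\chi_n|$ bounded independently of $n$, while $\supp\nabla\chi_n\subset\{x:\dist(x,\Gamma)\ge\delta\}$ leaves every compact set as $n\to\infty$. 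Arranging exhausting surfaces that simultaneously dodge $\Gamma$ by the fixed margin $\delta$ and carry uniformly bounded first and second derivatives is the delicate point; everything else is routine.

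Granting such $\chi_n$, the convergence follows easily. By the above, $u_n=\chi_n u\in D(H_b)$, and $u_n\to u$ in $L^2$ by dominated convergence. Writing $\Delta u_n-\Delta u=(\Delta\chi_n)u+2\nabla\chi_n\cdot\nabla u-(1-\chi_n)\Delta u$, the last term tends to $0$ in $L^2$ by dominated convergence, while the commutator $(\Delta\chi_n)u+2\nabla\chi_n\cdot\nabla u$ is supported in $\supp\nabla\chi_n$ and bounded in $L^2$ by $C(\|u\|_{L^2(\supp\nabla\chi_n)}+\|\nabla u\|_{L^2(\supp\nabla\chi_n)})$. The first norm vanishes in the limit since $\supp\nabla\chi_n$ escapes to infinity. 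The second is the crucial point: $\nabla u$ is \emph{not} globally square-integrable, since it blows up like $\dist(\cdot,\Gamma)^{-1}$ or $\dist(\cdot,\Gamma)^{-2}$ near $\Gamma$ when $d=2,3$, so it is essential that $\supp\nabla\chi_n\subset\{\dist(\cdot,\Gamma)\ge\delta\}$. There the interior elliptic estimate (Corollary \ref{corollary_agmon}) bounds $\|\nabla u\|_{L^2(\supp\nabla\chi_n)}$, with a constant depending only on $\delta$, by $\|u\|$ and $\|\Delta u\|$ over a slightly larger $\delta/2$-collar, which again escapes to infinity and therefore vanishes. Hence $\Delta u_n\to\Delta u$, and the proof is complete.
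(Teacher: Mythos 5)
Your overall strategy is the paper's: reduce to showing that $u_n=\chi_n u$ approximates $u$ in graph norm for cut-offs $\chi_n$ that are locally constant near each $\gamma\in\Gamma$ (so that membership in $D(H_{\Gamma,\max})$ and the homogeneous conditions $(BC)_\gamma$ are preserved), with $\supp\nabla\chi_n$ escaping to infinity while staying a fixed distance from $\Gamma$, and then handle the commutator term $2\nabla\chi_n\cdot\nabla u$ via the interior elliptic estimate of Corollary \ref{corollary_agmon} on a collar around $\supp\nabla\chi_n$. That convergence argument, including the observation that $\nabla u$ is not globally in $L^2$ and that the support condition on $\nabla\chi_n$ is therefore essential, matches the paper exactly and is correct.

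The gap is precisely the step you flag as ``the delicate point'': the construction of the $\chi_n$, which is the only place Assumption \ref{assumption_no_percolation} is used. As written, your recipe does not work: you invoke the uniform separation $2(R-\delta)$, but that separation holds only between the $\delta$-neighborhoods of \emph{distinct clusters} (connected components of $(\Gamma)_R$), whereas you propose to freeze the cut-off on each connected component of $(\Gamma)_\delta$ and interpolate ``in the surrounding gap''. Two components of $(\Gamma)_\delta$ lying in the same cluster can be separated by an arbitrarily small gap $\epsilon$, and interpolating between two different frozen constants across such a gap destroys any uniform bound on $\nabla\chi_n$ and $\Delta\chi_n$. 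The fix is to assign a \emph{single} constant to the whole $\delta$-neighborhood of each cluster (then the jump to be interpolated is at most $1$ over a gap of width at least $2(R-\delta)$, and one must still check that $\supp\nabla\chi_n$ eventually leaves every compact set even though cluster diameters are unbounded). The paper avoids all of this with a cleaner device: it sets $S_n$ equal to the connected component of $B_n(0)\cup(\Gamma)_R$ containing $B_n(0)$ --- bounded by Assumption \ref{assumption_no_percolation} and local finiteness --- and takes $\chi_n$ to be the indicator of $S_n$ mollified at scale $R/3$. Since each ball $B_R(\gamma)$ lies entirely inside or entirely outside $S_n$, one gets $\dist(\Gamma,\partial S_n)\geq R$ automatically, so $\supp\nabla\chi_n\subset(\partial S_n)_{R/3}$ avoids $\Gamma$ by a fixed margin, and mollifying an indicator with a fixed kernel yields the uniform bounds on $\nabla\chi_n$ and $\Delta\chi_n$ for free. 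You should either adopt this construction or carry out the per-cluster version in detail; without one of them the proof is not complete.
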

\begin{proof}
 Let $R$ be the constant in Assumption \ref{assumption_no_percolation}.
For a positive integer $n$, let $S_n$ be the connected component
of $B_n(0)\cup (\Gamma)_R$ containing $B_n(0)$
(see Figure \ref{figure_Sn}).

\begin{figure}[htbp]
 \begin{center}
  \includegraphics[width=6cm]{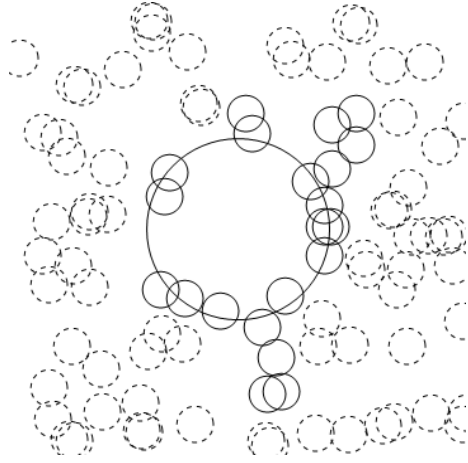}
   \caption{
The union of non-dashed disks is the set $S_n$
for $n=2$.
Here $(\Gamma)_R$ is the set in Figure \ref{figure_gammaR}.
}
\label{figure_Sn}
 \end{center}
\end{figure}

By assumption,
$S_n$ is a bounded open set in $\mathbb{R}^d$.
Let $\eta\in C_0^\infty(\mathbb{R}^d)$ 
be a rotationally symmetric function such that
$\eta\geq 0$, 
$\supp \eta\subset B_{R/3}(0)$,
and $\int_{\mathbb{R}^d}\eta dx =1$.
Put
\begin{align*}
 \chi_n(x)=\int_{Sn}\eta(x-y)dy.
\end{align*}
The function $\chi_n$ has the following properties.
\begin{enumerate}
 \item $\chi_n\in C_0^\infty(\mathbb{R}^d)$, $0\leq \chi_n(x)\leq 1$,
and
\begin{align*}
 \chi_n(x)=
\begin{cases}
1 & (x\in S_n,\ \dist(x,\partial S_n)>R/3),  \\
0 & (x\not\in S_n,\ \dist(x,\partial S_n)>R/3).
\end{cases}
\end{align*}

In particular, $\chi_n(x)\to 1$ as $n\to \infty$ for every $x\in \mathbb{R}^d$.

 \item $\supp \nabla \chi_n\subset (\partial S_n)_{R/3}$,
and $\supp \nabla \chi_n\cap \Gamma=\emptyset$.

 \item 
$\|\nabla \chi_n\|_\infty$,
$\|\Delta \chi_n\|_\infty$ are bounded uniformly with respect to $n$,
where $\|\cdot\|_\infty$ denotes the sup norm.
\end{enumerate}
Let $u\in D(H)$.
By (i), (ii) and Proposition \ref{proposition_hmax}, $\chi_n u \in D(H_b)$.
By the dominated convergence theorem and (i),
$\chi_n u \to u$ in $L^2(\mathbb{R}^d)$.
Moreover,
\begin{align}
\label{2-2-02}
 \Delta(\chi_n u)-\Delta u
= (\chi_n-1)\Delta u + 2\nabla \chi_n\cdot \nabla u + (\Delta \chi_n) u.
\end{align}
Since $u,\Delta u\in L^2$,
the first term of (\ref{2-2-02}) and the third tend to $0$ in $L^2(\mathbb{R}^d)$
by the dominated convergence theorem.
As for the second term, we 
apply the elliptic inner regularity estimate
(Corollary \ref{corollary_agmon})
for $U=(\partial S_n)_{R/2}$ and $V=(\partial S_n)_{R/3}$,
and obtain
\begin{align*}
& \|\nabla \chi_n\cdot \nabla u\|_{L^2(\mathbb{R}^d)}\\
\leq& \|\nabla \chi_n\|_\infty
\|\nabla u\|_{L^2(V)}\\
\leq& C \|\nabla \chi_n\|_\infty
\left(\|\Delta u\|_{L^2(U)}
+
\|u\|_{L^2(U)}
\right)\\
\leq& C \|\nabla \chi_n\|_\infty
\left(\|\Delta u\|_{L^2(B_{n-R/2}(0)^c)}
+
\|u\|_{L^2(B_{n-R/2}(0)^c)}
\right).
\end{align*}
Here the constant $C$ is independent of $n$, 
since $\dist (V,\partial U)\geq R/6$ and the lower bound
is independent of $n$.
The last expression tends to $0$ as $n\to \infty$,
so $\Delta(\chi_n u)-\Delta u\to 0$ in $L^2(\mathbb{R}^d)$.
Thus $\chi_n u \in D(H_b)$ converges to $u$ in $D(H)$,
and we conclude $D(H_b)$ is dense in $D(H)$.
\end{proof}
\begin{proof}[Proof of Theorem \ref{theorem_main}]
Proposition \ref{proposition_symmetric} implies 
$H={H_b}^*$, and ${H_b}^*=(\overline{H_b})^*$ always holds.
On the other hand, Proposition \ref{proposition_dense} says
$\overline{H_b}=H$, so
\begin{align*}
 H = {H_b}^* = (\overline{H_b})^* = H^*.
\end{align*}
Thus $H$ is self-adjoint.
\end{proof}

\section{Random point interactions}
Using Theorem \ref{theorem_main},
we study the Schr\"odinger operators 
with random point interactions so that $d_*$ can be $0$.

\subsection{Self-adjointness}
First we give a simple corollary of Theorem \ref{theorem_main}.
\begin{corollary}
\label{corollary_bounded}
Assume that there exists $R_0>0$ and $M>0$ such that
$\#(\Gamma \cap B_{R_0}(x))\leq M$ 
for every $x\in \mathbb{R}^d$.
Then, $H_{\Gamma,\alpha}$ is self-adjoint for any $\alpha=(\alpha_\gamma)_{\gamma\in \Gamma}$.
\end{corollary}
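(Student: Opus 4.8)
The plan is to deduce this from Theorem \ref{theorem_main} by showing that the bounded-density hypothesis forces Assumption \ref{assumption_no_percolation} for a suitably chosen radius. Thus the entire task reduces to a combinatorial-geometric claim about the set $(\Gamma)_R$: I will take $R>0$ small (quantitatively, $R<R_0/(2M)$) and show that every connected component of $(\Gamma)_R$ then consists of only finitely many of the disks $B_R(\gamma)$, hence is bounded.

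First I would translate the topology of $(\Gamma)_R=\bigcup_{\gamma\in\Gamma}B_R(\gamma)$ into the language of a graph on the vertex set $\Gamma$. Since the balls are open, $B_R(\gamma)\cap B_R(\gamma')\neq\emptyset$ holds exactly when $|\gamma-\gamma'|<2R$; declaring $\gamma,\gamma'$ adjacent in that case, the connected components of $(\Gamma)_R$ correspond bijectively to the connected components of this graph, and a component of $(\Gamma)_R$ is bounded precisely when the corresponding vertex set is finite (local finiteness of $\Gamma$ guarantees that an infinite vertex set escapes every ball, so it cannot be bounded).

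The core step is a breadth-first growth estimate. Fix a component $C$ and a base point $\gamma_0\in C$, and let $C_k\subseteq C$ denote the set of vertices reachable from $\gamma_0$ in at most $k$ adjacency steps. Each step moves a distance $<2R$, so by the triangle inequality $C_k\subset B_{2Rk}(\gamma_0)$; for $k\le M$ the choice $R<R_0/(2M)$ gives $2Rk<R_0$, whence $C_k\subset\Gamma\cap B_{R_0}(\gamma_0)$ and therefore $\#C_k\le M$ by hypothesis, in particular $\#C_M\le M$. On the other hand, if $C$ were infinite, then since each $C_k$ is finite (again by local finiteness of $\Gamma$), one would have $C_k\neq C$ for every $k$, forcing the inclusions $C_0\subsetneq C_1\subsetneq\cdots$ to be strict and hence $\#C_M\ge M+1$, a contradiction. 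Thus every component $C$ is finite.

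Therefore, with any $R<R_0/(2M)$ every connected component of $(\Gamma)_R$ is bounded, so Assumption \ref{assumption_no_percolation} is satisfied and Theorem \ref{theorem_main} yields the self-adjointness of $H_{\Gamma,\alpha}$ for every $\alpha$. I expect the only delicate point to be the bookkeeping in the growth estimate: making sure the strict inequality $2Rk<R_0$ holds throughout the range $0\le k\le M$, and that the strict growth of the $C_k$ for an infinite component is correctly justified via the local finiteness of $\Gamma$. Everything else is routine.
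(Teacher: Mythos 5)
Your proof is correct and is essentially the paper's argument: the paper also deduces the corollary from Theorem \ref{theorem_main} by verifying Assumption \ref{assumption_no_percolation} with $R=R_0/(2M)$, observing in one line that the connected component of $(\Gamma)_R$ containing a point $x$ must stay inside $B_{R_0}(x)$. Your breadth-first growth estimate is just a careful, fully written-out justification of that same chain-of-overlapping-balls bound.
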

\begin{proof}
The assumption 
implies Assumption \ref{assumption_no_percolation}
holds with $R=R_0/(2M)$,
since the connected component of 
$(\Gamma)_R$ containing $x\in \mathbb{R}^d$
is contained in the bounded set $B_{R_0}(x)$.
\end{proof}
The assumption of Corollary \ref{corollary_bounded} is satisfied
for \textit{random displacement model} (Corollary \ref{corollary_RD}).

\begin{proof}[Proof of Corollary \ref{corollary_RD}]
Under the assumption of Corollary \ref{corollary_RD},
we have
\begin{equation*}
\#(\Gamma\cap B_1(x)) \leq 
\#(\mathbb{Z}^d\cap B_{C+1}(x))
\leq
\left|B_{C+1+\sqrt{d}/2}(0)\right|,
\end{equation*}
where $|S|$ denotes the Lebesgue measure of a measurable set $S$.
Thus the assumption of Corollary \ref{corollary_bounded}
is satisfied.
\end{proof}

Next, we consider the case $\Gamma=\Gamma_\omega$ is the Poisson configuration
(Corollary \ref{corollary_poisson}).
We review the definition of the Poisson configuration (see e.g.\ \cite{Pa-Fi,An-Iw-Ka-Na,Ka-Mi,Re}).
\begin{definition}
\label{definition_poisson}
Let $\mu_\omega$ be a random measure on $\mathbb{R}^d$ ($d\geq 1$)
dependent on $\omega\in \Omega$ for some probability space $\Omega$.
For a positive constant $\lambda$,
we say $\mu_\omega$ is the Poisson point process
with intensity measure $\lambda dx$ 
if the following conditions hold.
\begin{enumerate}
 \item For every Borel measurable set $E\in \mathbb{R}^d$ 
with the Lebesgue measure $|E|<\infty$, 
$\mu_\omega(E)$ is an integer-valued random variable on $\Omega$ and
\begin{align*}
 \mathbb{P}(\mu_\omega(E)=k)=\frac{(\lambda|E|)^k}{k!}e^{-\lambda|E|}
\end{align*}
for every non-negative integer $k$.

 \item 
For any disjoint Borel measurable sets $E_1,\ldots,E_n$ in $\mathbb{R}^d$
with finite Lebesgue measure,
the random variables $\{\mu_\omega(E_j)\}_{j=1}^n$
are independent.
\end{enumerate}
We call the support $\Gamma_\omega$
of the Poisson point process measure $\mu_\omega$
the \textit{Poisson configuration}.
\end{definition}
We introduce a basic result in the theory of continuum percolation
(see e.g.\ \cite{Me-Ro}).
\begin{theorem}[Continuum percolation]
\label{theorem_CP}
Let $\Gamma=\Gamma_\omega$
be the Poisson configuration on $\mathbb{R}^d$ ($d\geq 2$)
with intensity measure $\lambda dx$,
where $\lambda$ is a positive constant.
For $R>0$, let $\theta_R(\lambda)$ be the probability of the event
`the connected component of $(\Gamma)_R$ containing the origin
is unbounded'.
Then,
for any $R>0$,
there exists a positive constant $\lambda_c(R)$,
called the \textit{critical density},
such that 
\begin{align*}
\begin{cases}
\theta_R(\lambda)=0  & (\lambda < \lambda_c(R)),\cr
\theta_R(\lambda)>0  & (\lambda > \lambda_c(R)).
\end{cases}
\end{align*}
Moreover, the scaling property
\begin{equation}
\label{3-1-01}
 \lambda_c(R)=R^{-d}\lambda_c(1)
\end{equation}
holds for any $R>0$.
\end{theorem}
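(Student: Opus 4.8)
The plan is to recognize this as the fundamental phase-transition theorem for the Boolean (Gilbert disc) model: two Poisson points $\gamma,\gamma'$ lie in the same connected component of $(\Gamma)_R$ exactly when they are joined by a chain of points of $\Gamma$ with consecutive distances less than $2R$, since the balls $B_R(\gamma)$ and $B_R(\gamma')$ overlap iff $|\gamma-\gamma'|<2R$. I would split the argument into four parts: monotonicity together with the definition of $\lambda_c(R)$; a subcritical bound $\lambda_c(R)>0$; a supercritical bound $\lambda_c(R)<\infty$; and finally the scaling identity.

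First, for monotonicity I would invoke the superposition property of Poisson processes: a process of intensity $\lambda'>\lambda$ is realizable as the union of an independent process of intensity $\lambda$ and one of intensity $\lambda'-\lambda$. Adjoining points can only enlarge the component of $(\Gamma)_R$ through the origin, so $\theta_R(\lambda)$ is non-decreasing in $\lambda$. Moreover, by translation invariance and ergodicity of the Poisson configuration, $\theta_R(\lambda)>0$ is equivalent to the almost sure existence of an unbounded component, while $\theta_R(\lambda)=0$ means there is a.s. no such component; this makes the dichotomy a genuine percolation transition. Defining $\lambda_c(R):=\sup\{\lambda:\theta_R(\lambda)=0\}$ then yields the two displayed cases at once, provided I establish $0<\lambda_c(R)<\infty$.

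For $\lambda_c(R)>0$ I would dominate cluster growth by a subcritical branching process: exploring from a point, its ``children'' are the Poisson points within distance $2R$, whose mean number is $\lambda\,|B_{2R}(0)|$; when this is less than $1$ the exploration is stochastically dominated by a subcritical Galton--Watson tree with Poisson offspring, which is a.s.\ finite, forcing $\theta_R(\lambda)=0$ and hence $\lambda_c(R)\ge |B_{2R}(0)|^{-1}>0$. For $\lambda_c(R)<\infty$ I would run a renormalization (block) argument: tile $\mathbb{R}^d$ by cubes of side comparable to $R$, call a cube \emph{good} if it contains a Poisson point, and choose the side length so that points in any two adjacent good cubes are automatically within distance $2R$. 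Since the good-cube events for disjoint cubes are independent and have probability tending to $1$ as $\lambda\to\infty$, a stochastic domination by supercritical Bernoulli site percolation on $\mathbb{Z}^d$—which percolates for $d\ge2$ when the parameter is close to $1$—produces an unbounded continuum component with positive probability. This block step is the main obstacle: it is where the hypothesis $d\ge2$ is essential (in $d=1$ gaps appear infinitely often and $\lambda_c=\infty$), and one must couple the comparison so that the discrete infinite cluster forces a genuine unbounded component of $(\Gamma)_R$, controlling the short-range geometric interactions between neighboring cubes.

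Finally, for the scaling identity I would apply the dilation $\Phi:x\mapsto x/R$. It pushes the Poisson process of intensity $\lambda$ forward to a Poisson process of intensity $\lambda R^d$, since a region of volume $V$ maps to one of volume $V/R^d$, and it sends the radius-$R$ connectivity threshold $2R$ to the radius-$1$ threshold $2$. As $\Phi$ is a homeomorphism preserving boundedness of sets, it identifies the two percolation events, giving $\theta_R(\lambda)=\theta_1(\lambda R^d)$; comparing thresholds through $\lambda R^d=\lambda_c(1)$ then yields $\lambda_c(R)=R^{-d}\lambda_c(1)$.
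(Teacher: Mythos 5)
The paper does not prove this theorem at all: it is imported as a known black-box result from the theory of continuum percolation, with a citation to Meester--Roy, and the only thing the authors add is the one-line remark that $\theta_R(\lambda)=0$ for all $\lambda$ when $d=1$. Your proposal therefore cannot be compared against an in-paper argument; what you have written is essentially the standard textbook proof of the Boolean-model phase transition, and it is sound in outline. The four ingredients are the right ones: (a) monotonicity of $\theta_R$ in $\lambda$ via superposition, which makes $\lambda_c(R):=\sup\{\lambda:\theta_R(\lambda)=0\}$ well defined and yields the displayed dichotomy; (b) $\lambda_c(R)\geq |B_{2R}(0)|^{-1}>0$ by dominating the cluster exploration by a Galton--Watson tree with $\mathrm{Poisson}(\lambda|B_{2R}(0)|)$ offspring (here one should say a word about why the unexplored region still carries a fresh Poisson process, which is what makes the domination legitimate); (c) $\lambda_c(R)<\infty$ for $d\geq 2$ by coarse-graining into cubes of side $s<2R/\sqrt{d+3}$ so that any two points in face-adjacent cubes are within $2R$ of each other, and comparing with supercritical Bernoulli site percolation on $\mathbb{Z}^d$ --- you correctly flag that this is exactly where $d\geq 2$ enters and where $d=1$ fails; (d) the scaling identity $\theta_R(\lambda)=\theta_1(\lambda R^d)$ from the dilation $x\mapsto x/R$, which maps $\mathrm{Poisson}(\lambda\,dx)$ to $\mathrm{Poisson}(\lambda R^d\,dx)$ and the radius-$R$ adjacency threshold $2R$ to the threshold $2$. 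The only cosmetic gap is that $\theta_R(\lambda)$ as defined in the statement concerns the component of $(\Gamma)_R$ containing the origin, which is not a point of $\Gamma$; your exploration should start from the (a.s.\ finite, possibly empty) set $\Gamma\cap B_R(0)$, all of whose balls contain the origin and hence lie in a single cluster, after which the argument runs as you describe. None of this affects correctness; it simply supplies a proof where the paper chose to cite one.
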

When $d=1$, it is easy to see $\theta_R(\lambda)=0$ for every $R>0$ and $\lambda>0$,
so we put $\lambda_c(R)=\infty$.

\begin{proof}[Proof of Corollary \ref{corollary_poisson}]
By the scaling property (\ref{3-1-01}),
the condition $\lambda<\lambda_c(R)$ is satisfied
if we take $R$ sufficiently small.
Then, since the Poisson point process is statistically translationally invariant
and $\mathbb{R}^d$ has a countable dense subset,
we see that every connected component of $(\Gamma_\omega)_R$ is bounded,
almost surely.
Thus Theorem \ref{theorem_main} implies the conclusion.
\end{proof}

\subsection{Admissible potentials for Poisson-Anderson type point interactions}
By
Corollary \ref{corollary_poisson},
we can define 
the Schr\"odinger operator with random point interactions 
of \textit{Poisson-Anderson type}, that is,
$(\Gamma_\omega,\alpha_\omega)$
satisfies Assumption \ref{assumption_PA}.
We write $H_\omega=H_{\Gamma_\omega,\alpha_\omega}$ for simplicity,
and study the spectrum of $H_\omega$.
For this purpose,
we use the method of \textit{admissible potentials},
which is a useful method when we determine the spectrum
of the random Schr\"odinger operators (see e.g.\ 
\cite{Ki-Ma,Pa-Fi,An-Iw-Ka-Na,Ka-Mi}).

\begin{definition}
Let $\nu$ be the single-site measure in (ii) of Assumption \ref{assumption_PA}.
 \begin{enumerate}
  \item We say a pair $(\Gamma,\alpha)$ belongs to ${\cal A}_F$
if
$\Gamma$ is a finite set in $\mathbb{R}^d$
and $\alpha=(\alpha_\gamma)_{\gamma\in \Gamma}$ 
with $\alpha_\gamma\in \supp \nu$
for every $\gamma\in \Gamma$.

  \item We say a pair $(\Gamma,\alpha)$ belongs to ${\cal A}_P$
if
$\Gamma$ is expressed as
\begin{align*}
 \Gamma=\bigcup_{k=1}^n
\left(\gamma_k+
\bigoplus_{j=1}^d \mathbb{Z}e_j
\right)
\end{align*}
for 
some $n=0,1,2,\ldots$,
some vectors
$\gamma_1,\ldots,\gamma_n\in \mathbb{R}^d$
and independent vectors $e_1,\ldots,e_d\in \mathbb{R}^d$,
and $\alpha=(\alpha_\gamma)_{\gamma\in \Gamma}$ 
is a $\supp\nu$-valued periodic sequence on $\Gamma$,
i.e., $\alpha_\gamma \in \supp\nu$ for every $\gamma\in \Gamma$
and $\alpha_{\gamma+e_j}=\alpha_\gamma$ for every $\gamma\in \Gamma$
and $j=1,\ldots,d$.
 \end{enumerate}
\end{definition}
\noindent
Notice that $(\Gamma,\alpha)$
belongs to both $\mathcal{A}_F$ and $\mathcal{A}_P$
if $\Gamma=\emptyset$.

We need a lemma about the continuous dependence of 
the operator domain
$D(H_{\Gamma,\alpha})$ with respect to $(\Gamma,\alpha)$.
\begin{lemma}
\label{lemma_approximatingEF}
Let 
$\Gamma=\{\gamma_j\}_{j=1}^n$ be an $n$-point set
and $\alpha=(\alpha_j)_{j=1}^n$ a real-valued sequence on $\Gamma$,
where we denote $\alpha_{\gamma_j}$ by $\alpha_j$.
Let $\delta=\min_{j\not=k}|\gamma_j-\gamma_k|$.
Let $\epsilon>0$, 
$E\in \mathbb{R}$,
and $U$ be a bounded open set.
Suppose that there exists
$u_\epsilon \in D(H_{\Gamma,\alpha})$ such that
$\|u_\epsilon\|=1$,
$\supp u_\epsilon\subset U$, and
$\|(H_{\Gamma,\alpha}-E)u_\epsilon\|\leq \epsilon$.
Then the following holds.

\begin{enumerate}
\item 
There exists $\epsilon'>0$ satisfying the following property;
for any $\widetilde{\Gamma}=\{\widetilde{\gamma}_j\}_{j=1}^n$ with
$|\gamma_j-\widetilde{\gamma}_j|\leq \epsilon'$,
there exists $v_\epsilon\in H_{\widetilde{\Gamma},\alpha}$
such that
$\|v_\epsilon\|=1$,
$\supp v_\epsilon\subset U$, and
$\|(H_{\widetilde{\Gamma},\alpha}-E)v_\epsilon\|\leq 2\epsilon$.

 \item 
There exists $\epsilon''>0$ satisfying the following property;
for any $\widetilde{\alpha}=(\widetilde{\alpha}_j)_{j=1}^n$ with
$|\alpha_j-\widetilde{\alpha}_j|\leq \epsilon''$,
there exists $v_\epsilon\in H_{\Gamma,\widetilde{\alpha}}$
such that
$\|v_\epsilon\|=1$,
$\supp v_\epsilon\subset U$, and
$\|(H_{\Gamma,\widetilde{\alpha}}-E)v_\epsilon\|\leq 2\epsilon$.
Moreover, $\epsilon''$ can be taken uniformly with respect to $\Gamma$
so that $\delta=\delta(\Gamma)$ is bounded uniformly from below.
\end{enumerate}
\end{lemma}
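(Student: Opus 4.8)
The plan is to treat both parts in the same spirit: out of the given approximate eigenfunction $u_\epsilon$ I would manufacture a nearby approximate eigenfunction $v_\epsilon$ for the perturbed operator, keeping $\|v_\epsilon\|=1$, $\supp v_\epsilon\subset U$, and the error $\|(\cdot-E)v_\epsilon\|$ below $2\epsilon$. Throughout I use the decomposition of $u\in D(H_{\Gamma,\max})$ into singular and regular parts near each point (Proposition \ref{proposition_bvalue}) together with the fact that $(BC)_\gamma$ is expressed solely through the coefficients $u_{\gamma,0},u_{\gamma,1}$ (resp.\ the one-sided data when $d=1$). Since $\supp u_\epsilon$ is a compact subset of the open set $U$ and $\delta>0$, I may fix a radius $r\le\min(\delta/2,\dist(\supp u_\epsilon,\partial U))$ and work in the disjoint balls $B_r(\gamma_j)$.

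For (i), I would move the singularities by a diffeomorphism. Choose cut-offs $\psi_j\in C_0^\infty(B_r(\gamma_j))$ with $\psi_j\equiv1$ on $B_{r/2}(\gamma_j)$ and set $\Phi(x)=x+\sum_j\psi_j(x)(\widetilde\gamma_j-\gamma_j)$; for $\epsilon'$ small this is a diffeomorphism equal to the \emph{pure translation} $x\mapsto x+(\widetilde\gamma_j-\gamma_j)$ on $B_{r/2}(\gamma_j)$ and to the identity outside $\bigcup_j B_r(\gamma_j)$, with $\Phi(\gamma_j)=\widetilde\gamma_j$. Put $v_\epsilon=(u_\epsilon\circ\Phi^{-1})/\|u_\epsilon\circ\Phi^{-1}\|$. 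Near $\widetilde\gamma_j$ the function $v_\epsilon$ is literally a translate of $u_\epsilon$, so it inherits the expansion (\ref{2-1-02}) with the same coefficients and therefore satisfies $(BC)_{\widetilde\gamma_j}$ with the same $\alpha_j$; hence $v_\epsilon\in D(H_{\widetilde\Gamma,\alpha})$ and $\supp v_\epsilon\subset U$. Because $-\Delta$ commutes with translations, $(-\Delta-E)v_\epsilon$ is a translate of $(-\Delta-E)u_\epsilon$ on each $B_{r/2}(\widetilde\gamma_j)$, while on the transition annuli and outside (where $u_\epsilon\in H^2$ and no singularity is present) the change-of-variables formula writes $\Delta(u_\epsilon\circ\Phi^{-1})-(\Delta u_\epsilon)\circ\Phi^{-1}$ as a first/second-order expression whose coefficients are $O(\epsilon')$ in sup norm, controlled by $\epsilon'\,\|u_\epsilon\|_{H^2}$ on that region. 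Taking $\epsilon'$ small then makes the total error at most $2\epsilon$ after renormalization.

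For (ii), I keep $\Gamma$ fixed and adjust the boundary coefficients additively. For each $j$ pick a fixed smooth profile $g_j$ supported in $B_r(\gamma_j)$ with $g_j\equiv1$ near $\gamma_j$, so that $(g_j)_{j,0}=0$ and $(g_j)_{j,1}=1$ (for $d=1$ take instead $g_j$ continuous with $g_j(\gamma_j)=0$ and unit derivative-jump). Then $v_\epsilon=u_\epsilon+\sum_j t_jg_j$ has coefficients $(u_{j,0},u_{j,1}+t_j)$ at $\gamma_j$, and imposing $(BC)_{\gamma_j}$ for $\widetilde\alpha_j$ forces the explicit choice $t_j=c_d(\alpha_j-\widetilde\alpha_j)u_{j,0}$ with $c_2=2\pi$, $c_3=-4\pi$ (and $t_j=(\widetilde\alpha_j-\alpha_j)u_\epsilon(\gamma_j)$ when $d=1$); note $t_j=0$ whenever $u_{j,0}=0$, so only points of $\supp u_\epsilon\subset U$ are touched. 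As $g_j$ is fixed, $\|(-\Delta-E)(t_jg_j)\|\le C|t_j|$, so it remains to bound $|t_j|\le|c_d|\,|\alpha_j-\widetilde\alpha_j|\,|u_{j,0}|\le|c_d|\,\epsilon''\,|u_{j,0}|$.

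The decisive point — and the main obstacle — is the \emph{uniform} bound on the singular coefficients, $|u_{j,0}|\le C(\delta_0)(\|u_\epsilon\|+\|\Delta u_\epsilon\|)$, with $C$ depending only on a lower bound $\delta_0$ for $\delta$. I would obtain this from the generalized Green formula (Proposition \ref{proposition_green}): testing $u_\epsilon$ against translates of two fixed profiles supported in $B_{\delta_0/2}(\gamma_j)$ expresses $u_{j,0}$ and $u_{j,1}$ as linear combinations of the pairings $(H_{\Gamma,\max}w,u_\epsilon)-(w,H_{\Gamma,\max}u_\epsilon)$, each bounded by the graph norm of $u_\epsilon$ with a translation-invariant constant, hence uniform in $\Gamma$ once $\delta\ge\delta_0$. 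Combined with $\|u_\epsilon\|=1$ and $\|\Delta u_\epsilon\|\le\epsilon+|E|$, this gives $|t_j|\le C(\delta_0)\epsilon''$ and $\|\sum_j t_jg_j\|_{D(H)}\le C\epsilon''$; choosing $\epsilon''$ small, depending only on $\epsilon,E,U,n,\delta_0$, yields $\|(H_{\Gamma,\widetilde\alpha}-E)v_\epsilon\|\le2\epsilon$ after renormalization, with $\epsilon''$ uniform in $\Gamma$ as claimed.
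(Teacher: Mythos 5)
Your proposal is correct and follows essentially the same route as the paper: part (i) uses the identical diffeomorphism $\Phi(x)=x+\sum_j\eta(x-\gamma_j)(\widetilde\gamma_j-\gamma_j)$ (a pure translation near each $\gamma_j$, so the boundary data are preserved) with the change-of-variables error supported in the transition annuli, and part (ii) performs the same modification as the paper's replacement of $-2\pi\alpha_j\psi_j$ by $-2\pi\widetilde\alpha_j\psi_j$ in the singular/regular decomposition. Your only genuine addition is the explicit uniform bound $|u_{j,0}|\leq C(\delta_0)(\|u_\epsilon\|+\|\Delta u_\epsilon\|)$ extracted from the generalized Green formula (Proposition \ref{proposition_green}), which correctly supplies the uniformity of $\epsilon''$ in $\Gamma$ that the paper asserts but does not spell out.
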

\begin{proof}
 (i)
Let $\eta\in C_0^\infty(\mathbb{R}^d)$ such that
$0\leq \eta \leq 1$, 
$\eta(x)=1$ for $x\leq \delta/4$, and 
$\eta(x)=0$ for $x\geq \delta/3$.
Let $\widetilde{\Gamma}=\{\widetilde{\gamma}_j\}_{j=1}^n$
with
$|\gamma_j-\widetilde{\gamma}_j|\leq \epsilon'$
for sufficiently small $\epsilon'$ (specified later).
Consider the map
\begin{align}
\label{3-2-01}
\Phi(x) = x + \sum_{j=1}^n \eta (x-\gamma_j)\cdot
(\widetilde{\gamma}_j-\gamma_j).
\end{align}
By definition,
$\Phi$ is a $C^\infty$ map from $\mathbb{R}^d$ to itself,
$\Phi(\gamma_j)=\widetilde{\gamma}_j$,
and
\begin{align*}
|\Phi(x)-x|
+
 |\nabla (\Phi(x)-x)|
+
 |\Delta (\Phi(x)-x)|
\leq C\epsilon'
\end{align*}
for some positive constant $C$.
Thus, by Hadamard's global inverse function theorem,
$\Phi$ is a diffeomorphism from $\mathbb{R}^d$ to itself,
for sufficiently small $\epsilon'$.

Put
$w_\epsilon := u_\epsilon\circ \Phi^{-1}$.
We can easily check $w_\epsilon \in D(H_{\widetilde{\Gamma},\alpha})$,
since the map $\Phi$ is just a translation in $B_{\delta/4}(\gamma_j)$.
We use the the coordinate change $x=\Phi(y)$ or $y=\Phi^{-1}(x)$.
By (\ref{3-2-01}) and the inverse function theorem,
we have estimates
\begin{align}
&\frac{\partial x_j}{\partial y_k}(y) = \delta_{jk} + O(\epsilon'),\quad
\frac{\partial y_j}{\partial x_k}(x) = \delta_{jk} + O(\epsilon'),\quad
\frac{\partial^2 y_j}{\partial x_k \partial x_\ell}(x) = O(\epsilon'),\notag\\
&\det\left(\frac{\partial x}{\partial y}\right)=1 + O(\epsilon')
\label{3-2-02}
\end{align}
as $\epsilon'\to 0$, 
where $\delta_{jk}$ is Kronecker's delta,
and ${\partial x}/{\partial y}=(\partial x_j/\partial y_k)_{jk}$ 
is the Jacobian matrix.
The remainder terms are uniform with respect to $x$ (or $y$),
and are equal to $0$ for 
$x \not\in \bigcup_j \bigl(B_{\delta/3}(\gamma_j)\setminus B_{\delta/4}(\gamma_j)\bigr)$.
Thus we have by (\ref{3-2-02})
\begin{align*}
& \|w_\epsilon\|^2
=\int_{\mathbb{R}^d} |u_\epsilon(y)|^2 dx
=\int_{\mathbb{R}^d} |u_\epsilon(y)|^2 
\left|\det\left(\frac{\partial x}{\partial y}\right)\right|dy
= 1+O(\epsilon').
\end{align*}
Next, by the chain rule
\begin{align*}
\Delta w_\epsilon(x)
=&
\sum_{j=1}^d \frac{\partial^2}{\partial x_j^2} u_\epsilon(y)\\
=&
\sum_{j=1}^d \frac{\partial}{\partial x_j}
\left(
\sum_{k=1}^d \frac{\partial u_\epsilon}{\partial y_k}(y)\cdot 
\frac{\partial y_k}{\partial x_j}(x)
\right)\\
=&
\sum_{j=1}^d 
\sum_{k=1}^d
\left(
\sum_{\ell=1}^d
 \frac{\partial^2 u_\epsilon}{\partial^2 y_k y_\ell}(y)
\cdot \frac{\partial y_\ell}{\partial x_j}(x)
\cdot 
\frac{\partial y_k}{\partial x_j}(x)
+
\frac{\partial u_\epsilon}{\partial y_k}(y)\cdot 
\frac{\partial^2 y_k}{\partial x_j^2}(x)
\right).
\end{align*}
Thus we have by (\ref{3-2-02})
\begin{align*}
&\|(H_{\widetilde{\Gamma},\alpha} - E)w_\epsilon\|^2\\
&\int_{\mathbb{R}^d}|(-\Delta_x-E)w_\epsilon(x)|^2 dx\\
=&\int_{\mathbb{R}^d}|(-\Delta_y-E)u_\epsilon(y)|^2 dy\cdot (1+O(\epsilon'))
+\sum_{j=1}^n
\|u_\epsilon\|_{H^2(B_{\delta/3}(\gamma_j)\setminus B_{\delta/4}(\gamma_j))}^2
\cdot O(\epsilon').\\
\leq&
\epsilon^2  (1+O(\epsilon'))
+\sum_{j=1}^n
\|u_\epsilon\|_{H^2(B_{\delta/3}(\gamma_j)\setminus B_{\delta/4}(\gamma_j))}^2
\cdot O(\epsilon').
\end{align*}
By the elliptic inner regularity estimate (Corollary \ref{corollary_agmon})
\begin{align*}
& 
\sum_{j=1}^n
\|u_\epsilon\|_{H^2(B_{\delta/3}(\gamma_j)\setminus B_{\delta/4}(\gamma_j))}^2\\
\leq&
C
\sum_{j=1}^n
\left(
\|(-\Delta u_\epsilon-E)u_\epsilon\|_{L^2(B_{\delta/2}(\gamma_j))}^2
+
\|u_\epsilon\|_{L^2(B_{\delta/2}(\gamma_j))}^2
\right)\\
\leq&
C(\epsilon^2+1),
\end{align*}
where $C$ is a positive constant independent of $u_\epsilon$.
Taking $\epsilon'$ sufficiently small
and putting $v_\epsilon = w_\epsilon/\|w_\epsilon\|$,
we conclude $v_\epsilon$ has the desired property.

(ii) We give the proof only in the case $d=2$
(the case $d=1,3$ can be treated similarly).
Let $\phi_j=\phi_{\gamma_j}$ and $\psi_j=\psi_{\gamma_j}$
be the functions introduced in the proof of Proposition \ref{proposition_green}.
Then, the function $u_\epsilon$ can be uniquely expressed as
\begin{align*}
 u_\epsilon
=
\sum_{j=1}^d C_j(\phi_j-2\pi\alpha_j\psi_j)+\widetilde{u}_\epsilon,
\end{align*}
where $C_j$ is a constant
and $\widetilde{u}_\epsilon\in H^2(\mathbb{R}^2)$ such that
$\widetilde{u}_\epsilon(\gamma_j)=0$ for every $j$.

Suppose $|\widetilde{\alpha}_j-\alpha_j|\leq \epsilon''$ 
for sufficiently small $\epsilon''$, and put
\begin{align*}
 w_\epsilon
=
\sum_{j=1}^d C_j(\phi_j-2\pi\widetilde{\alpha_j}\psi_j)+\widetilde{u}_\epsilon,
\end{align*}
and $v_\epsilon=w_\epsilon/\|w_\epsilon\|$.
Then we can prove that $v_\epsilon$ has the desired property.
\end{proof}

\begin{proposition}
\label{proposition_admissible}
Let $d=1,2,3$,
and $\Gamma_\omega$ and $\alpha_\omega$ satisfy
Assumption \ref{assumption_PA}.
Then, for $H_\omega=H_{\Gamma_\omega,\alpha_\omega}$,
\begin{align}
\label{3-2-03}
\sigma(H_\omega) 
= 
\overline{\bigcup_{(\Gamma,\alpha)\in \mathcal{A}_F}\sigma(H_{\Gamma,\alpha})}
=
\overline{\bigcup_{(\Gamma,\alpha)\in \mathcal{A}_P}\sigma(H_{\Gamma,\alpha})}
\end{align}
holds almost surely.
\end{proposition}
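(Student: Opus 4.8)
The plan is to exploit that $H_\omega$ is an ergodic random operator: the Poisson configuration is translation invariant and ergodic and the couplings are i.i.d.\ and independent of $\Gamma_\omega$, so by the standard theory (see e.g.\ \cite{Ki-Ma,Pa-Fi}) the spectrum $\sigma(H_\omega)$ equals almost surely a deterministic closed set $\Sigma$. Note that $H_\omega$ is self-adjoint almost surely by Corollary \ref{corollary_poisson}, so Weyl's criterion is available. Writing $\Sigma_F$ and $\Sigma_P$ for the two right-hand sides of (\ref{3-2-03}), it suffices to prove the cycle of inclusions
\[
\Sigma \subseteq \Sigma_F \subseteq \Sigma_P \subseteq \Sigma,
\]
after which all four sets coincide. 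Throughout, the two workhorses are Proposition \ref{proposition_dense}, which lets me take Weyl functions with \emph{bounded} support, and Lemma \ref{lemma_approximatingEF}, which transplants an approximate eigenfunction when the point positions and couplings are perturbed slightly.

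For the first inclusion $\Sigma=\sigma(H_\omega)\subseteq\Sigma_F$ I argue pathwise on the full-measure event where $H_\omega$ is self-adjoint and all couplings lie in $\supp\nu$. Given $E\in\sigma(H_\omega)$ and $\epsilon>0$, Weyl's criterion together with Proposition \ref{proposition_dense} yields $u\in D(H_\omega)$ with $\|u\|=1$, bounded support, and $\|(H_\omega-E)u\|\leq\epsilon$. Setting $\Gamma'=\Gamma_\omega\cap(\supp u)_r$ for a small margin $r>0$, the set $\Gamma'$ is finite with couplings $\alpha'$ in $\supp\nu$, so $(\Gamma',\alpha')\in\mathcal{A}_F$. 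Since $u$ vanishes in a neighborhood of every point of $\Gamma_\omega\setminus\Gamma'$, we get $u\in D(H_{\Gamma',\alpha'})$ with $H_{\Gamma',\alpha'}u=H_\omega u$, hence $\dist(E,\sigma(H_{\Gamma',\alpha'}))\leq\epsilon$; letting $\epsilon\to0$ gives $E\in\Sigma_F$.

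The inclusion $\Sigma_F\subseteq\Sigma_P$ is a periodization argument: for $(\Gamma,\alpha)\in\mathcal{A}_F$ and $E\in\sigma(H_{\Gamma,\alpha})$, take a bounded-support Weyl function $u_\epsilon$ (Proposition \ref{proposition_dense}); choosing the period $L$ large enough that $\supp u_\epsilon$ sits strictly inside one fundamental cell of $L\mathbb{Z}^d$, the periodized pair $(\Gamma_L,\alpha_L)\in\mathcal{A}_P$ satisfies $u_\epsilon\in D(H_{\Gamma_L,\alpha_L})$ and the two operators agree on $\supp u_\epsilon$, so $\dist(E,\sigma(H_{\Gamma_L,\alpha_L}))\leq\epsilon$. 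Finally $\Sigma_P\subseteq\Sigma$ is the probabilistic transplant: for $(\Gamma,\alpha)\in\mathcal{A}_P$ and $E\in\sigma(H_{\Gamma,\alpha})$, fix a bounded-support Weyl function $u_\epsilon$ supported in $U$; only finitely many points $\gamma_1,\dots,\gamma_n$ of $\Gamma$ meet $U$. With positive probability a translate of $\Gamma_\omega$ has exactly $n$ points within $\epsilon'$ of the translated $\gamma_j$, with couplings within $\epsilon''$ of $\alpha_j$ (possible because $\alpha_j\in\supp\nu$ forces $\nu(B_{\epsilon''}(\alpha_j))>0$), and no further point inside the translated $U$; by translation invariance, independence over disjoint boxes, and Borel--Cantelli this event occurs almost surely. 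Applying Lemma \ref{lemma_approximatingEF}(i),(ii) transplants $u_\epsilon$ to $v_\epsilon\in D(H_{\widetilde\Gamma,\widetilde\alpha})$ with $\|(H_{\widetilde\Gamma,\widetilde\alpha}-E)v_\epsilon\|\leq C\epsilon$. Because $v_\epsilon$ is supported where $H_\omega$ coincides with $H_{\widetilde\Gamma,\widetilde\alpha}$ and vanishes near every other point of $\Gamma_\omega$, we obtain $v_\epsilon\in D(H_\omega)$ with $\|(H_\omega-E)v_\epsilon\|\leq C\epsilon$, so $E\in\Sigma$ since $\Sigma$ is deterministic.

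I expect the main obstacle to be the transplant step $\Sigma_P\subseteq\Sigma$: one must realize, \emph{locally and with the correct emptiness}, an approximate copy of the target configuration inside the Poisson cloud, and simultaneously preserve the Weyl property. The delicate requirement is membership $v_\epsilon\in D(H_\omega)$, which involves the boundary conditions at \emph{all} random points at once; this is forced only by confining $\supp v_\epsilon$ to a region containing exactly the matched points (the empty-region Poisson event) and vanishing beyond, which is precisely why bounded-support Weyl functions (Proposition \ref{proposition_dense}) and the position/coupling stability of Lemma \ref{lemma_approximatingEF} are indispensable. The coupling approximation hinges on $\alpha_j\in\supp\nu$, and the passage from ``positive probability'' to ``almost surely'' uses ergodicity of the Poisson--Anderson model.
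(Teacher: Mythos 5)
Your argument is correct and takes essentially the same route as the paper: both directions rest on Proposition \ref{proposition_dense} (bounded-support Weyl functions), restriction of $\Gamma_\omega$ to the support of such a function to land in $\mathcal{A}_F$, and the ergodic transplant of an approximate eigenfunction into the Poisson cloud via Lemma \ref{lemma_approximatingEF}. The only difference is organizational: you close a cycle $\Sigma\subseteq\Sigma_F\subseteq\Sigma_P\subseteq\Sigma$ with an explicit large-period periodization step, whereas the paper proves $\sigma(H_\omega)=\Sigma_F$ by two direct inclusions and then obtains the $\mathcal{A}_P$ identity by the ``periodic extension'' of the same two steps.
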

\begin{proof}
First, let
\begin{align*}
 \Sigma=\overline{\bigcup_{(\Gamma,\alpha)\in \mathcal{A}_F}\sigma(H_{\Gamma,\alpha})},
\end{align*}
and prove $\sigma(H_\omega)=\Sigma$ holds almost surely.

Recall that $\Gamma_\omega$ is a locally finite discrete subset
satisfying Assumption \ref{assumption_no_percolation} (so $H_\omega$ is self-adjoint), almost surely.
For such $\omega$, let $E\in \sigma(H_\omega)$.
Then, by Proposition \ref{proposition_dense},
for any $\epsilon>0$
there exists $u_\epsilon \in D(H_\omega)$ such that 
$\supp u_\epsilon$ is bounded,
$\|u_\epsilon\|=1$,
and $\|(H_\omega-E)u_\epsilon\|\leq \epsilon$.
Let $\widetilde{\Gamma}=\Gamma_\omega\cap \supp u_\epsilon$
and $\widetilde{\alpha}=(\alpha_{\omega,\gamma})_{\gamma\in \widetilde{\Gamma}}$.
Then, 
$(\widetilde{\Gamma},\widetilde{\alpha})\in \mathcal{A}_F$,
$u_\epsilon \in D(H_{\widetilde{\Gamma},\widetilde{\alpha}})$
and $\|(H_{\widetilde{\Gamma},\widetilde{\alpha}}-E)u_\epsilon\|\leq \epsilon$.
This implies $\dist(E,\Sigma)\leq \epsilon$ for any $\epsilon>0$,
so $E\in \Sigma$.
Thus we conclude $\sigma(H_\omega)\subset \Sigma$ almost surely.

Conversely,
let $E\in \sigma(H_{\Gamma,\alpha})$ for some $(\Gamma,\alpha)\in \mathcal{A}_F$.
Then, for any $\epsilon>0$,
there exists $u_\epsilon \in D(H_{\Gamma,\alpha})$ 
such that 
$\supp u_\epsilon$ is contained in some bounded open set $U$,
$\|u_\epsilon\|=1$,
and $\|(H_{\Gamma,\alpha}-E)u_\epsilon\|\leq \epsilon$.
We write
$\widetilde{\Gamma}:=\Gamma\cap U=\{\gamma_j\}_{j=1}^n$
and $\alpha_j=\alpha_{\gamma_j}$.
By the ergodicity of $(\Gamma_\omega,\alpha_\omega)$,
for any $\epsilon', \epsilon''>0$
we can almost surely find
$y\in \mathbb{R}^d$ such that
$\Gamma_{\epsilon'} := \Gamma_\omega \cap (y+U)=\{\gamma_j'\}_{j=1}^n$,
$\gamma_j'=\gamma_j+ y+\epsilon_j'$ with $|\epsilon_j'|\leq \epsilon'$,
and
$\alpha_{\omega,\gamma_j'}=\alpha_j+ \epsilon_j''$ with
$|\epsilon_j''|\leq \epsilon''$.
Taking $\epsilon'$ and $\epsilon''$ sufficiently small
and applying Lemma \ref{lemma_approximatingEF},
we can almost surely find $v_\epsilon\in D(H_\omega)$
such that
$\supp v_\epsilon$ is bounded,
$\|v_\epsilon\|=1$,
and $\|(H_\omega-E)v_\epsilon\|\leq 4\epsilon$.
Then we have $\dist(\sigma(H_\omega),E)\leq 4\epsilon$ for any $\epsilon>0$,
so $E\in \sigma(H_\omega)$ almost surely.
Thus $\Sigma\subset\sigma(H_\omega)$,
and the first equality in (\ref{3-2-03}) holds.

The proof of the second equality in (\ref{3-2-03}) is similar;
we have only to replace 
$\mathcal{A}_F$ by $\mathcal{A}_P$,
and $(\widetilde{\Gamma},\widetilde{\alpha})$
in the first part of the proof by its periodic extension.
\end{proof}

\subsection{Calculation of the spectrum}
By Proposition \ref{proposition_admissible},
the proof of Theorem \ref{theorem_main2} is reduced
to the calculation of the spectrum of $H_{\Gamma,\alpha}$
for $(\Gamma,\alpha)\in \mathcal{A}_F$ or $\mathcal{A}_P$.

First we consider the case $d=1$ and the interactions
are non-negative.
\begin{lemma}
\label{lemma_spec1}
 Let $d=1$.
Let $\Gamma$ be a finite set and $\alpha=(\alpha_\gamma)_{\gamma\in \Gamma}$
with $\alpha_\gamma\geq 0$ for every $\gamma\in \Gamma$.
Then, $\sigma(H_{\Gamma,\alpha})= [0,\infty)$.
\end{lemma}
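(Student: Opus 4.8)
The plan is to establish the two inclusions $\sigma(H_{\Gamma,\alpha})\subset[0,\infty)$ and $[0,\infty)\subset\sigma(H_{\Gamma,\alpha})$ separately. Since $\#\Gamma<\infty$, Assumption \ref{assumption_no_percolation} holds trivially and $H_{\Gamma,\alpha}$ is self-adjoint by Theorem \ref{theorem_main}; its spectrum is then pinned down by combining a lower bound on the bottom of the spectrum with the construction of approximate eigenfunctions at every non-negative energy.

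For the inclusion $\sigma(H_{\Gamma,\alpha})\subset[0,\infty)$, I would appeal to the quadratic form identity (\ref{intro05}). Because every $\alpha_\gamma\ge 0$, for any $u\in D(H_{\Gamma,\alpha})$ with bounded support we have
\[
(u,H_{\Gamma,\alpha}u)=\|\nabla u\|^2+\sum_{\gamma\in\Gamma}\alpha_\gamma|u(\gamma)|^2\ge 0.
\]
By Proposition \ref{proposition_dense} the bounded-support functions $D(H_b)$ form an operator core for $H_{\Gamma,\alpha}$; approximating an arbitrary $u\in D(H_{\Gamma,\alpha})$ in the graph norm by elements of this core shows $(u,H_{\Gamma,\alpha}u)\ge 0$ for all $u\in D(H_{\Gamma,\alpha})$. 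Hence $H_{\Gamma,\alpha}\ge 0$ and $\sigma(H_{\Gamma,\alpha})\subset[0,\infty)$.

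For the reverse inclusion I would produce, for each $E=k^2\ge 0$, a singular Weyl sequence supported far from $\Gamma$. Choose $M>0$ with $\Gamma\subset(-M,M)$, fix $\chi\in C_0^\infty(\mathbb{R})$ with $\supp\chi\subset(1,2)$ and $\chi\not\equiv 0$, and set $u_n(x)=n^{-1/2}\chi(x/n)e^{ikx}$. Then $\supp u_n\subset(n,2n)$ is disjoint from $\Gamma$ once $n>M$, so $u_n\in D(H_{\Gamma,\alpha})$ and $H_{\Gamma,\alpha}u_n=-u_n''$ there. A direct computation gives $\|u_n\|=\|\chi\|$ for all $n$, while the substitution $y=x/n$ yields
\[
\|(H_{\Gamma,\alpha}-E)u_n\|^2=\int_{\mathbb{R}}\bigl|n^{-2}\chi''(y)+2ikn^{-1}\chi'(y)\bigr|^2\,dy\le C(n^{-4}+n^{-2})\to 0.
\]
Thus $u_n/\|u_n\|$ is a Weyl sequence for $H_{\Gamma,\alpha}$ at $E$; since the supports escape to $+\infty$ we have $u_n\rightharpoonup 0$, so in fact $E\in\sigma_{\mathrm{ess}}(H_{\Gamma,\alpha})$. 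This proves $[0,\infty)\subset\sigma(H_{\Gamma,\alpha})$, and together with the first inclusion gives $\sigma(H_{\Gamma,\alpha})=[0,\infty)$.

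I do not anticipate a serious obstacle. The only two points needing care are the passage from the form inequality on bounded-support functions to the operator inequality $H_{\Gamma,\alpha}\ge 0$, which is exactly where the core property of Proposition \ref{proposition_dense} is used, and the verification that each $u_n$ genuinely lies in $D(H_{\Gamma,\alpha})$ --- valid precisely because $\Gamma$ is finite, so the supports of the $u_n$ can be pushed off of it and the boundary conditions $(BC)_\gamma$ are vacuously satisfied.
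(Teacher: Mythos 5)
Your proof is correct, and the first inclusion $\sigma(H_{\Gamma,\alpha})\subset[0,\infty)$ is exactly the paper's argument via the form identity (\ref{intro05}); you are in fact slightly more careful than the paper, which states the inequality ``for any $u\in D(H_{\Gamma,\alpha})$'' without commenting on the passage from bounded-support elements to the whole domain, whereas you justify it through the core property of Proposition \ref{proposition_dense} (for finite $\Gamma$ this is needed, since elements of $D(H_{\Gamma,\alpha})$ need not have bounded support). Where you genuinely diverge is the reverse inclusion: the paper simply cites \cite[Theorem II-2.1.3]{Al-Ge-Ho-Ho}, which gives the resolvent and spectrum of finitely many one-dimensional point interactions explicitly, while you construct a singular Weyl sequence $u_n(x)=n^{-1/2}\chi(x/n)e^{ikx}$ supported far from the (finite) set $\Gamma$, so that the boundary conditions $(BC)_\gamma$ hold vacuously and the scaling computation gives $\|(H_{\Gamma,\alpha}-k^2)u_n\|\to 0$. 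Your route is self-contained and more elementary, and it even yields the slightly stronger conclusion $[0,\infty)\subset\sigma_{\mathrm{ess}}(H_{\Gamma,\alpha})$; the paper's citation buys brevity and consistency with the later lemmas, which all rely on the explicit determinant conditions from \cite{Al-Ge-Ho-Ho}. Both arguments are valid.
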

\begin{proof}
Under the assumption of the lemma, we have
\begin{align*}
 (u, H_{\Gamma,\alpha} u)=\|\nabla u\|^2 + \sum_{\gamma\in \Gamma}\alpha_\gamma |u(\gamma)|^2\geq 0
\end{align*}
for any $u\in D(H_{\Gamma,\alpha})$.
Thus $\sigma(H_{\Gamma,\alpha})\subset [0,\infty)$.
The inverse inclusion $\sigma(H_{\Gamma,\alpha})\supset [0,\infty)$
follows from \cite[Theorem II-2.1.3]{Al-Ge-Ho-Ho}.
\end{proof}
Lemma \ref{lemma_spec1} seems obvious,
but the same statement does \textit{not} hold when $d=2,3$,
since the point interaction is always negative in that case,
as stated in the introduction.

Next we consider the other cases.
In the following lemmas, 
the sequence $\alpha=(\alpha_\gamma)_{\gamma\in\Gamma}$ is assumed to be a 
\textit{constant sequence},
that is,  all the coupling constants $\alpha_\gamma$ are the same. 
We denote the common coupling constant $\alpha_\gamma$ also by $\alpha$,
by abuse of notation.
\begin{lemma}
\label{lemma_spec2}
Let $d=1$, and 
$x_1,\ldots,x_N$ be $N$ distinct points in $\mathbb{R}$
with $2\leq N<\infty$.
For $L>0$,
put $\Gamma_{N,L}=\{L x_j\}_{j=1}^N$.
Let $\alpha$
be a constant sequence on $\Gamma_{N,L}$
with common coupling constant $\alpha<0$.
Then, the following holds.

\begin{enumerate}
 \item Let $N=2$ and $|x_1-x_2|=1$.
Then, $H_{\Gamma_{2,L},\alpha}$ has only one negative eigenvalue $E_1(L)$
for $L \leq -2/\alpha$, and two negative eigenvalues 
$E_1(L)$ and $E_2(L)$ ($E_1(L)<E_2(L)$) for $L>-2/\alpha$.
The function $E_1(L)$ (resp.\ $E_2(L)$) 
is continuous and monotone increasing (resp.\ decreasing)
with respect to $L\in (0,\infty)$ (resp.\ $L\in (-2/\alpha,\infty)$), 
and
\begin{align*}
 \lim_{L\to +0} E_1(L)=-\alpha^2,\quad
 \lim_{L\to \infty} E_1(L)=-\frac{\alpha^2}{4},\\
 \lim_{L\to -2/\alpha+0} E_2(L)=0,\quad
 \lim_{L\to \infty} E_2(L)=-\frac{\alpha^2}{4}.
\end{align*}

 \item Let $N\geq 3$.
Then the operator $H_{\Gamma_{N,L},\alpha}$ has at least 
one negative eigenvalue for any $L>0$.
The smallest eigenvalue
$E_1(L)$ is simple, continuous and monotone increasing with respect to
$L\in (0,\infty)$, and
\begin{align*}
 \lim_{L\to +0} E_1(L)=-\frac{(N\alpha)^2}{4},\quad
 \lim_{L\to \infty} E_1(L)=-\frac{\alpha^2}{4}.
\end{align*}

\end{enumerate}
\end{lemma}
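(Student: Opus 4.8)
The plan is to reduce the computation of the negative spectrum to a finite-dimensional (nonlinear) eigenvalue problem for an explicit matrix, and then to analyze that problem by elementary calculus when $N=2$ and by Perron--Frobenius theory when $N\geq 3$. Write $\gamma_j=Lx_j$ and look for eigenvalues $E=-\kappa^2$ with $\kappa>0$. On each component of $\mathbb{R}\setminus\Gamma_{N,L}$ the equation $-u''=Eu$ has solutions $Ae^{\kappa x}+Be^{-\kappa x}$, and the $L^2$ requirement at $\pm\infty$ forces any eigenfunction into the form $u(x)=\sum_{j=1}^N a_j e^{-\kappa|x-\gamma_j|}$; this ansatz already fulfils the continuity part of $(BC)_\gamma$, so only the derivative jumps remain. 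Using $(-d^2/dx^2-E)e^{-\kappa|x-\gamma|}=2\kappa\delta_\gamma$ for $E=-\kappa^2$, the jump conditions become the linear system $(\alpha G(\kappa)+2\kappa I)a=0$, where $G(\kappa)_{jk}=e^{-\kappa|\gamma_j-\gamma_k|}=e^{-\kappa L|x_j-x_k|}$. Since the point interactions give a rank-$N$ perturbation of $H_0$ in the resolvent sense, Weyl's theorem yields $\sigma_{\mathrm{ess}}(H_{\Gamma_{N,L},\alpha})=[0,\infty)$, so the negative spectrum is finite and consists exactly of the $E=-\kappa^2$ for which this system is singular; as $\alpha<0$, this occurs precisely when $2\kappa/|\alpha|$ is an eigenvalue of the symmetric, entrywise positive matrix $G(\kappa)$. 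The lowest eigenvalue $E_1$ corresponds to the largest admissible $\kappa$, hence to matching the largest eigenvalue of $G(\kappa)$.

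For part (i), with $N=2$ and $|\gamma_1-\gamma_2|=L$, the matrix $G(\kappa)$ has eigenvalues $1\pm e^{-\kappa L}$, so the bound states are the positive roots of $2\kappa=|\alpha|(1\pm e^{-\kappa L})$. For the $+$ sign put $F(\kappa,L)=2\kappa-|\alpha|(1+e^{-\kappa L})$; then $\partial_\kappa F>0$ and $\partial_L F>0$, so the implicit function theorem gives a unique root $\kappa_1(L)$, decreasing in $L$, with $\kappa_1\to|\alpha|$ as $L\to0$ and $\kappa_1\to|\alpha|/2$ as $L\to\infty$, yielding the stated $E_1(L)=-\kappa_1(L)^2$. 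For the $-$ sign the function $f(\kappa)=2\kappa-|\alpha|(1-e^{-\kappa L})$ is convex with $f(0)=0$ and $f'(0)=2-|\alpha|L$: if $L\leq -2/\alpha$ then $f'(0)\geq 0$ forces $f>0$ on $(0,\infty)$ and there is no second bound state, while if $L>-2/\alpha$ convexity produces exactly one positive root $\kappa_2(L)$ at which $f'(\kappa_2)>0$. The implicit function theorem then makes $\kappa_2$ increasing, so $E_2=-\kappa_2^2$ is decreasing, with $\kappa_2\to0$ (hence $E_2\to0$) as $L\to(-2/\alpha)+$ and $\kappa_2\to|\alpha|/2$ (hence $E_2\to-\alpha^2/4$) as $L\to\infty$; finally $\kappa_1>|\alpha|/2>\kappa_2$ gives $E_1<E_2$.

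For part (ii), with $N\geq3$, the matrix $G(\kappa)$ is symmetric with strictly positive entries, so by Perron--Frobenius its largest eigenvalue $\lambda_{\max}(\kappa,L)$ is simple with a strictly positive eigenvector $v$. Setting $g(\kappa)=2\kappa-|\alpha|\lambda_{\max}(\kappa,L)$, we have $g(0)=-|\alpha|N<0$ (the matrix $G(0)$ is all ones, with top eigenvalue $N$) and $g(\kappa)\to+\infty$ (since $G(\kappa)\to I$ as $\kappa\to\infty$, with top eigenvalue $1$), so a root $\kappa_1(L)$ exists, giving at least one negative eigenvalue; its simplicity follows since the multiplicity of $E_1=-\kappa_1^2$ equals that of the Perron eigenvalue of $G(\kappa_1)$, namely one. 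Monotonicity and the limits come from the Hellmann--Feynman identity $\partial_\bullet\lambda_{\max}=v^{\!\top}(\partial_\bullet G)v$: the entries of $\partial_\kappa G$ and $\partial_L G$ are nonpositive and strictly negative off the diagonal, and $v>0$, so both partials of $\lambda_{\max}$ are negative, whence $\partial_\kappa g>0$ and $\partial_L g>0$; the implicit function theorem gives $\kappa_1'(L)<0$, i.e.\ $E_1$ increasing, while $\lambda_{\max}\to N$ as $L\to0$ and $\lambda_{\max}\to1$ as $L\to\infty$ yield $E_1\to-(N\alpha)^2/4$ and $E_1\to-\alpha^2/4$.

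The main obstacle will be the clean justification of the reduction step, namely that the exponential ansatz exhausts all $L^2$ eigenfunctions at negative energy and that no negative essential spectrum intrudes, together with, in part (ii), the control of the nonlinear dependence of $G(\kappa)$ on $\kappa$ when proving simplicity and strict monotonicity; the Perron--Frobenius structure combined with Hellmann--Feynman differentiation of the simple top eigenvalue is precisely what renders the latter tractable.
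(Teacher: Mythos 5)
Your proof is correct and follows essentially the same route as the paper: both reduce the negative spectrum to the condition that $2\kappa/|\alpha|$ be an eigenvalue of the matrix $\bigl(e^{-\kappa L|x_j-x_k|}\bigr)_{jk}$ (the paper imports this from \cite[Theorem II-2.1.3]{Al-Ge-Ho-Ho}, you rederive it from the exponential ansatz), then use the explicit eigenvalues $1\pm e^{-\kappa L}$ for $N=2$ and Perron--Frobenius for the top eigenvalue when $N\geq 3$. Your convexity/implicit-function-theorem and Hellmann--Feynman arguments simply make precise what the paper settles by ``inspecting the graphs'' and the min--max principle, so no substantive gap remains.
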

\begin{proof}
According to \cite[Theorem II-2.1.3]{Al-Ge-Ho-Ho},
$H_{\Gamma_{N,L},\alpha}$ has a negative eigenvalue $E=-s^2$ ($s>0$)
if and only if $\det M=0$,
where $M=(M_{jk})$ is the $N\times N$ matrix given by
\begin{align*}
 M_{jk}= 
\begin{cases}
-\alpha^{-1}- (2s)^{-1} & (j=k),\\
-(2s)^{-1}e^{-sL |x_j-x_k|} & (j\not=k).
\end{cases}
\end{align*}
Let $\widetilde{M}=(\widetilde{M}_{jk})$ be the $N\times N$-matrix given by
$\widetilde{M}_{jk}=e^{-sL|x_j-x_k|}$. Then, 
since $M= -(2s)^{-1}(2s/\alpha\cdot I +\widetilde{M})$
($I$ is the identity matrix),
\begin{align*}
\det M=0 &\Leftrightarrow \mbox{$M$ has eigenvalue $0$}\\
&\Leftrightarrow
\mbox{$-2s/\alpha$ coincides with one of eigenvalues of $\widetilde{M}$}.
\end{align*}

(i)
Let $N=2$ and $|x_1-x_2|=1$.
Then the eigenvalues of $\widetilde{M}$ are $1\pm e^{-sL}$.
So we have $E_1(L)=-s_1(L)^2$ and $E_2(L)=-s_2(L)^2$,
where $s=s_1(L)$ and $s=s_2(L)$ are solutions of
\begin{align}
\label{3-3-01}
 -\frac{2 s}{\alpha} = 1 + e^{-sL},\quad
 -\frac{2 s}{\alpha} = 1 - e^{-sL},
\end{align}
respectively, if the solutions exist.
Then the statement can be proved by inspecting the graphs of both sides
of (\ref{3-3-01})
(see Figure \ref{figure_d1_1}, \ref{figure_d1_2}).
\begin{figure}[htbp]
 \begin{center}
  \begin{tabular}{cc}
\begin{minipage}[c]{6cm}
\includegraphics[width=6cm]{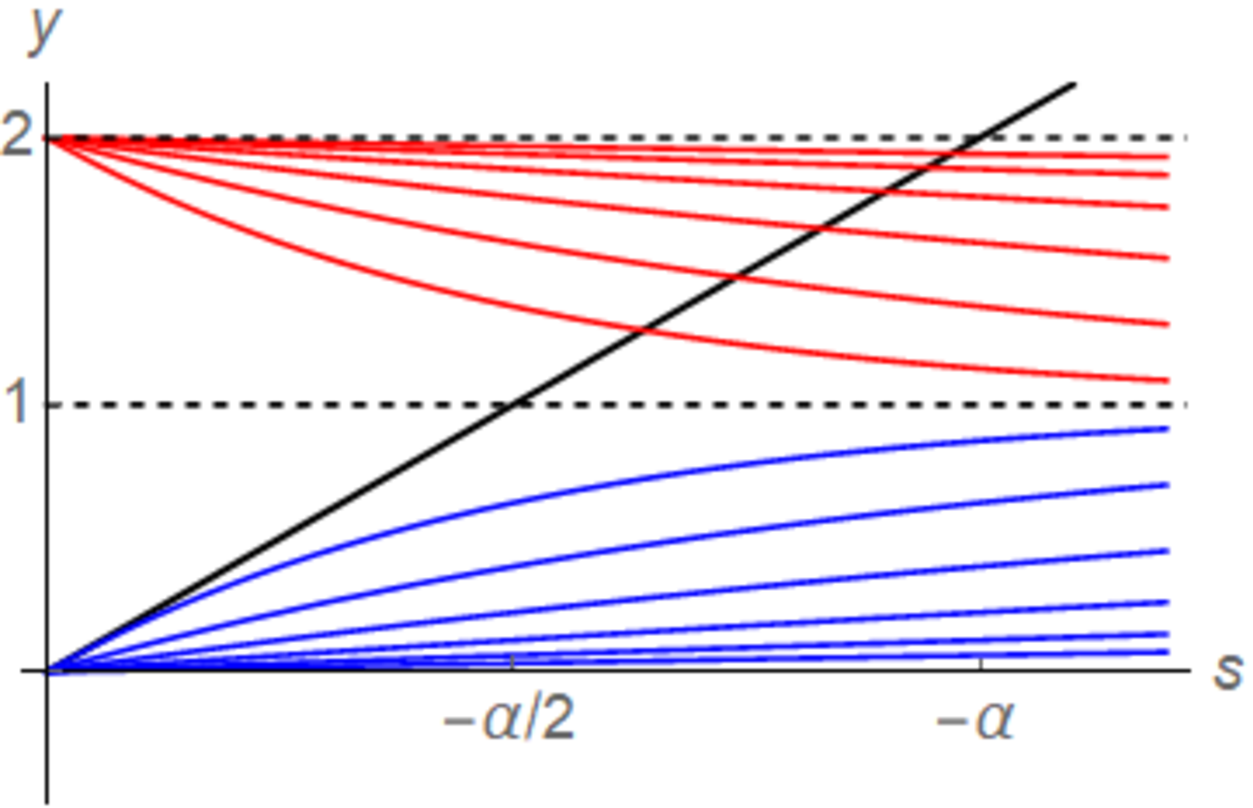} 
\caption{Graphs of both sides of (\ref{3-3-01}) for $\alpha=-1$ and $L=2^n$ ($n=-4,\ldots,1$).}
\label{figure_d1_1}
\end{minipage}
&
\begin{minipage}[c]{6cm}
\includegraphics[width=6cm]{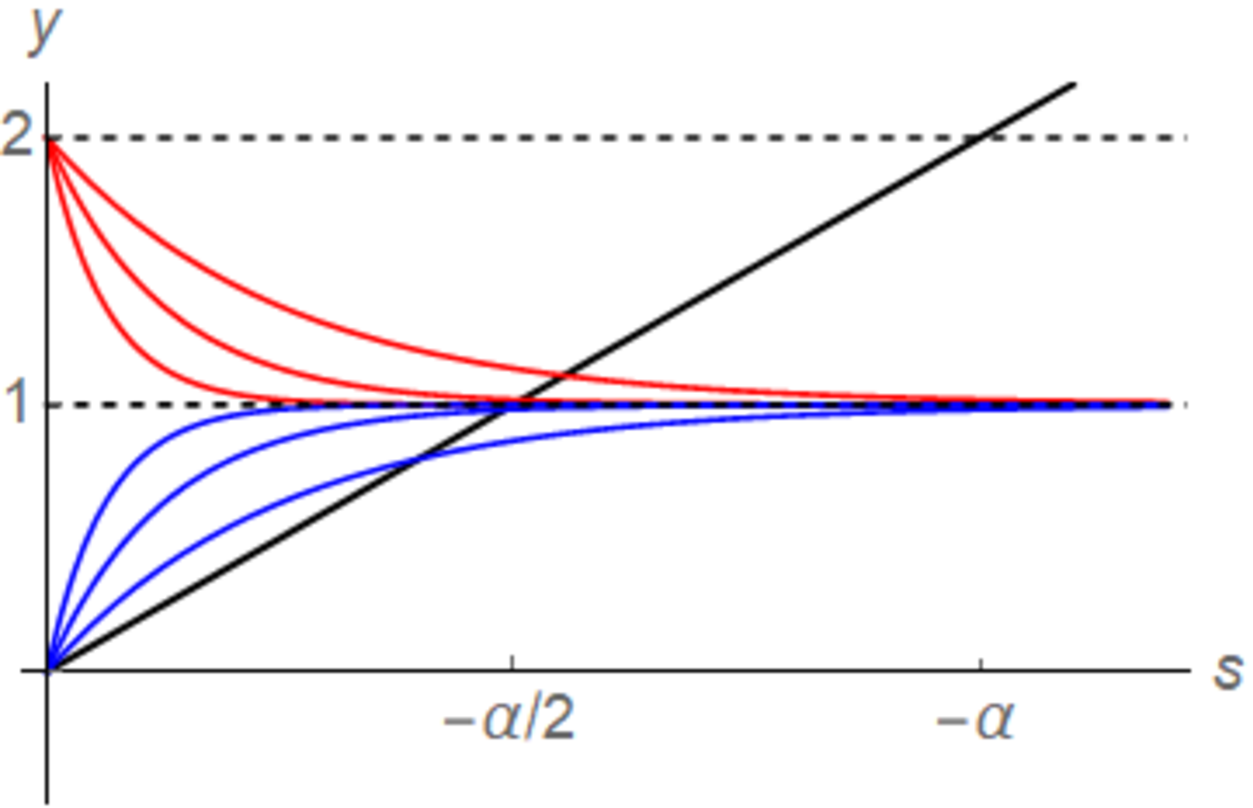} 
\caption{Graphs of both sides of (\ref{3-3-01}) for $\alpha=-1$ and $L=2^n$ ($n=2,\ldots,4$).}
\label{figure_d1_2}
\end{minipage}

 \\
  \end{tabular}
 \end{center}
\end{figure}

(ii) Let $N\geq 3$.
Let $\mu_1(s,L)$ be the largest eigenvalue of $\widetilde{M}$.
Since $\widetilde{M}$ is a symmetric matrix with positive components,
we can prove the following properties by the Perron--Frobenius theorem 
and the min-max principle.
\begin{itemize}
 \item The eigenvalue $\mu_1(s,L)$ is simple and positive,
and there is an eigenvector with only positive components.

 \item $\mu_1(s,L)$ is continuous and strictly monotone decreasing
with respect to $sL \in (0,\infty)$.

\item For fixed $L>0$,
$\displaystyle\lim_{s\to 0}\mu_1(s,L)=N$, 
$\displaystyle\lim_{s\to \infty}\mu_1(s,L)=1$.
The same properties also hold if we replace $s$ and $L$.
\end{itemize}
\begin{figure}[htbp]
 \begin{center}
  \begin{tabular}{cc}
\begin{minipage}[c]{6cm}
\includegraphics[width=6cm]{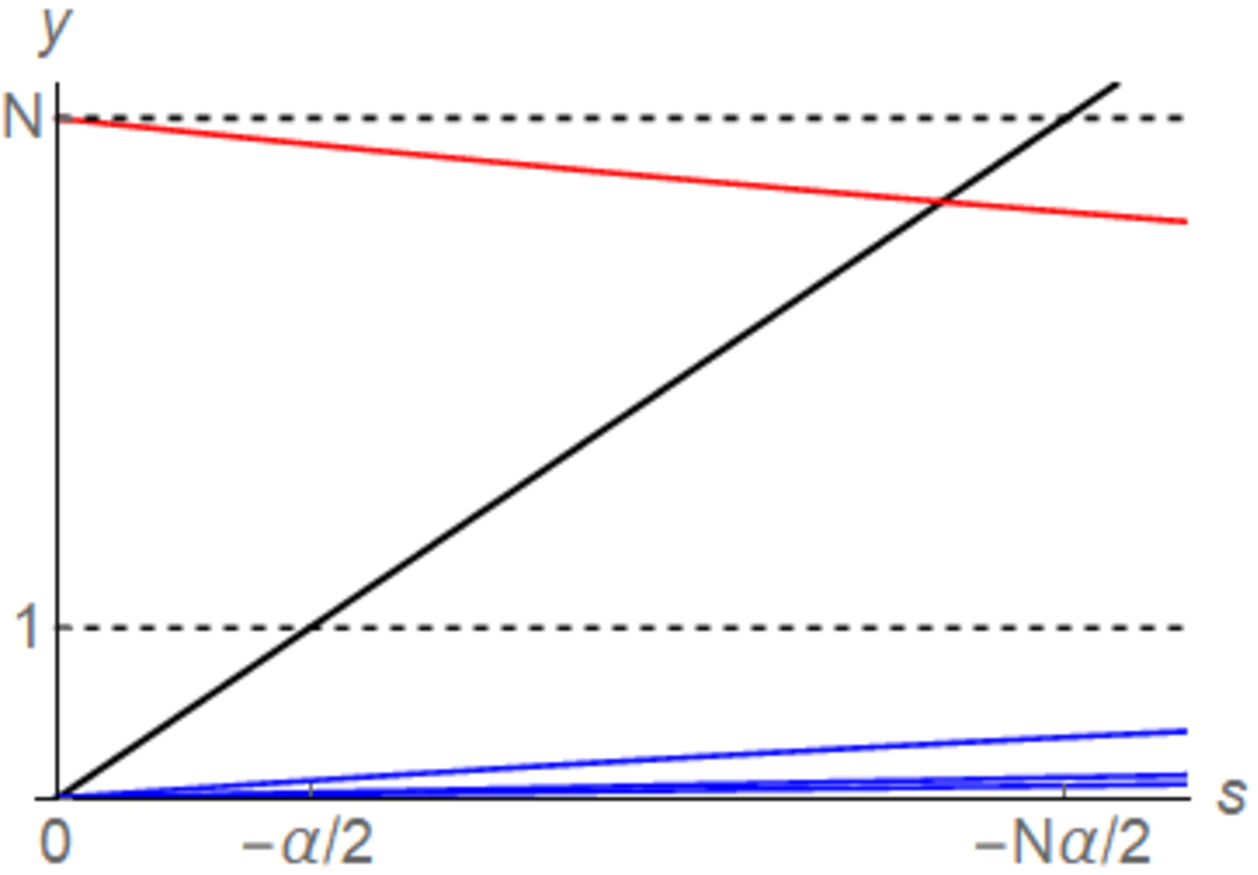} 
\caption{Graphs of $-2 s/\alpha$ and eigenvalues of $\widetilde{M}$ 
for $N=4$, $\alpha=-1$ and $L=1/16$.}
\label{figure_d1_3}
\end{minipage}
&
\begin{minipage}[c]{6cm}
\includegraphics[width=6cm]{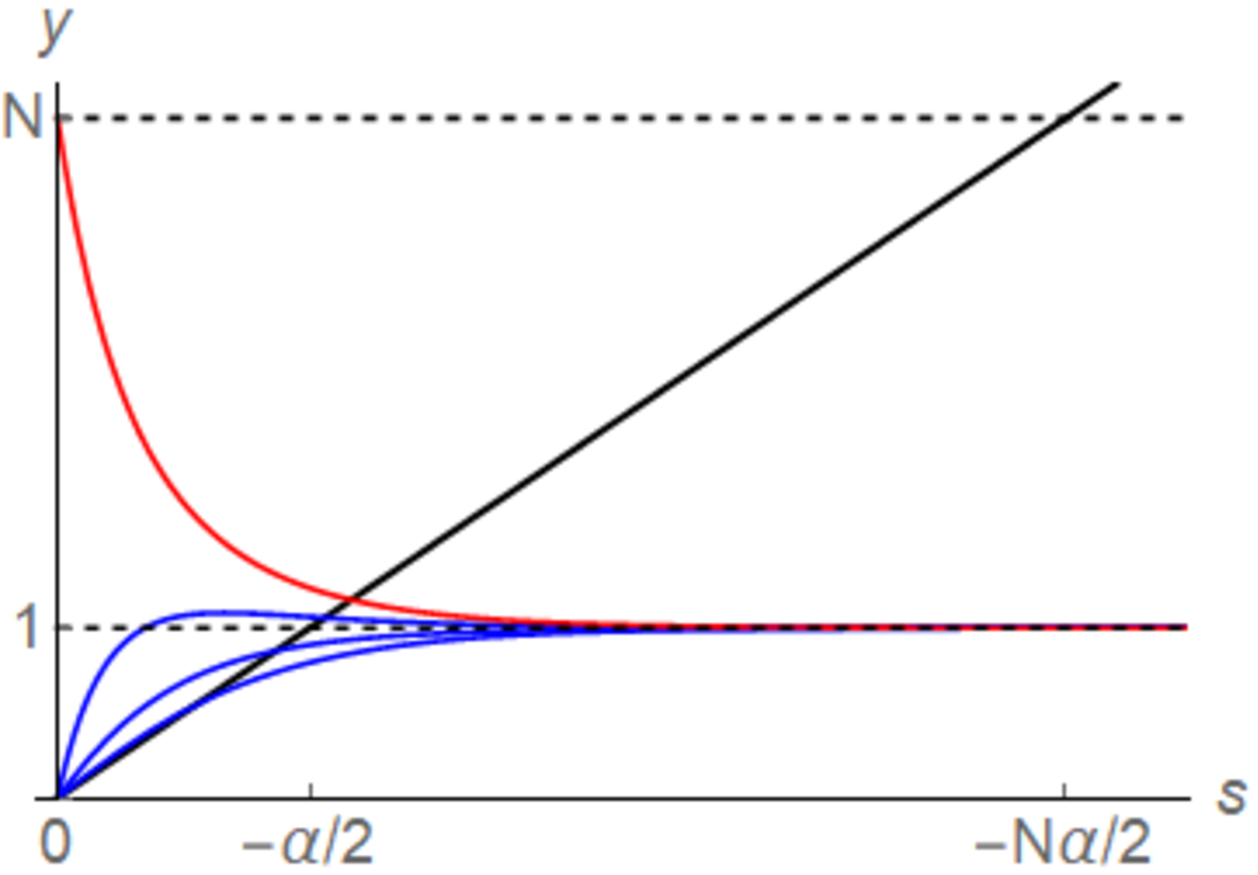} 
\caption{Graphs of $-2 s/\alpha$ and eigenvalues of $\widetilde{M}$ 
for $N=4$, $\alpha=-1$ and $L=4$.}
\label{figure_d1_4}
\end{minipage}

 \\
  \end{tabular}
 \end{center}
\end{figure}
\noindent
In Figure \ref{figure_d1_3}, \ref{figure_d1_4},
we give the graphs of $-\alpha s /2$ and eigenvalues of $\widetilde{M}$
for $N=4$, $x_j=j$ ($j=1,\ldots,4$), $\alpha=-1$ and $L=1/16,4$.

By the above properties and $\alpha<0$,
there exists a unique positive solution $s=s_1(L)$ of 
the equation $-2s/\alpha=\mu_1(s,L)$. The function $s_1(L)$ is continuous 
and strictly monotone decreasing on $(0,\infty)$.
Moreover, by inspecting the limiting equation $-2s/\alpha=\mu_1(s,0)=N$ and
$-2s/\alpha=\mu_1(s,\infty)=1$, we see that
\begin{align*}
 \lim_{L\to 0}s_1(L)=-\frac{N\alpha}{2},\quad
 \lim_{L\to \infty}s_1(L)=-\frac{\alpha}{2}.
\end{align*}
Since $E_1(L)=-s_1(L)^2$, the statement holds.

\end{proof}

\begin{lemma}
\label{lemma_spec3}
Let $d=2$.
For $L>0$,
let $\Gamma_L=\{\gamma_1, \gamma_2\}$ with $|\gamma_1-\gamma_2|=L$.
Let $\alpha$ be a constant sequence on $\Gamma_L$
with common coupling constant $\alpha\in \mathbb{R}$.
Then, $H_{\Gamma_L,\alpha}$ has only one negative eigenvalue $E_1(L)$ for 
$L\leq e^{2\pi \alpha}$,
and two negative eigenvalues 
$E_1(L)$ and $E_2(L)$ ($E_1(L)<E_2(L)$) for $L> e^{2\pi \alpha}$.
The function $E_1(L)$ (resp.\ $E_2(L)$) is continuous, 
monotone increasing (resp.\ decreasing)
with respect to $L\in (0,\infty)$ (resp.\ $L\in (e^{2\pi \alpha},\infty)$), 
and
\begin{align*}
\lim_{L\to +0}E_1(L)=-\infty,\quad
\lim_{L\to \infty}E_1(L)=-4e^{-4\pi\alpha-2\gamma_E},\\
\lim_{L\to e^{2\pi \alpha}+0}E_2(L)=0,\quad
\lim_{L\to \infty}E_2(L)=-4e^{-4\pi\alpha-2\gamma_E},
\end{align*}
where $\gamma_E$ is the Euler constant.
\end{lemma}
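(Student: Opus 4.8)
The plan is to imitate the proof of Lemma \ref{lemma_spec2}(i), with the one–dimensional Green function replaced by its two–dimensional analogue $K_0(s|x|)$ (the modified Bessel function). A negative eigenvalue $E=-s^2$ ($s>0$) corresponds to an $L^2$–solution of $(-\Delta+s^2)u=0$ on $\mathbb{R}^2\setminus\Gamma_L$, and the only such solutions with an admissible singularity at the $\gamma_j$ are combinations of the Bessel kernel, so I would start from the ansatz
\[
u(x)=c_1 K_0(s|x-\gamma_1|)+c_2 K_0(s|x-\gamma_2|)
\]
(equivalently, invoke the $d=2$ analogue of \cite[Theorem II-2.1.3]{Al-Ge-Ho-Ho}). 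Using $K_0(sr)=-\log r-(\log(s/2)+\gamma_E)+o(1)$ as $r\to0$ together with Proposition \ref{proposition_bvalue}, I read off the coefficients $u_{\gamma_j,0}$ and $u_{\gamma_j,1}$ at each point and impose the boundary condition $(BC)_{\gamma_j}$, namely $2\pi\alpha\,u_{\gamma_j,0}+u_{\gamma_j,1}=0$.

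This produces the homogeneous system with the symmetric (circulant) coefficient matrix
\[
\begin{pmatrix}-A(s)&K_0(sL)\\ K_0(sL)&-A(s)\end{pmatrix}\begin{pmatrix}c_1\\ c_2\end{pmatrix}=0,\qquad A(s):=2\pi\alpha+\log(s/2)+\gamma_E,
\]
so that $E=-s^2$ is an eigenvalue iff $A(s)^2-K_0(sL)^2=0$, i.e.\ iff $A(s)=K_0(sL)$ (symmetric mode $c_1=c_2$) or $A(s)=-K_0(sL)$ (antisymmetric mode $c_1=-c_2$). As in the one–dimensional lemma I would then analyse these two scalar equations graphically. After the substitution $t=sL$ they become $G(t)=c(L)$ and $F(t)=c(L)$, where $G(t)=\log t-K_0(t)$, $F(t)=\log t+K_0(t)$ and $c(L)=\log(2L)-2\pi\alpha-\gamma_E$.

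The content lies in the elementary analysis of these two functions. Since $G'(t)=1/t+K_1(t)>0$ and $G$ maps $(0,\infty)$ bijectively onto $\mathbb{R}$, the symmetric equation always has a unique root, giving $E_1(L)$. For the antisymmetric equation the decisive fact is that $F$ is strictly increasing with $\inf F=F(0^+)=\log 2-\gamma_E$; the monotonicity $F'(t)=1/t-K_1(t)>0$ is precisely the inequality $tK_1(t)<1$, which I would obtain from the Bessel identity $\frac{d}{dt}\bigl(tK_1(t)\bigr)=-tK_0(t)<0$ together with $\lim_{t\to0^+}tK_1(t)=1$. Hence $F(t)=c(L)$ has a (necessarily unique) root iff $c(L)>\log 2-\gamma_E$, which is exactly $L>e^{2\pi\alpha}$; this yields the threshold, and letting $c(L)\downarrow\log 2-\gamma_E$ forces the root to tend to $0$, so that $E_2(L)\to 0$ as $L\to e^{2\pi\alpha}+0$.

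Finally, continuity and monotonicity in $L$ come from the implicit function theorem applied to $A(s)\mp K_0(sL)=0$. For the symmetric branch $\partial_s=1/s+LK_1(sL)>0$ and $\partial_L=sK_1(sL)>0$, giving $ds_1/dL<0$, hence $E_1$ increasing; for the antisymmetric branch $\partial_s=1/s-LK_1(sL)>0$ (again by $sL\,K_1(sL)<1$) while $\partial_L=-sK_1(sL)<0$, giving $ds_2/dL>0$, hence $E_2$ decreasing. The remaining limits follow from $K_0(z)\to+\infty$ as $z\to0$ and $K_0(z)\to0$ as $z\to\infty$: as $L\to\infty$ both modes collapse onto the single–point condition $A(s)=0$, i.e.\ $s\to s_*:=2e^{-2\pi\alpha-\gamma_E}$, giving the common value $-s_*^2=-4e^{-4\pi\alpha-2\gamma_E}$, while as $L\to0$ the symmetric equation forces $s_1\to\infty$, so $E_1\to-\infty$. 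The main obstacle is isolating and proving the single analytic input $tK_1(t)<1$ (equivalently the monotonicity of $F$ and the sign of $\partial_s$ on the antisymmetric branch); once that is in hand everything else is calculus on the two graphs, exactly in the spirit of Figures \ref{figure_d1_1} and \ref{figure_d1_2}.
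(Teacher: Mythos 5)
Your proposal is correct and follows essentially the same route as the paper: both reduce the eigenvalue condition to the two scalar equations $2\pi\alpha+\gamma_E+\log(s/2)\mp K_0(sL)=0$ (the paper cites \cite[Theorem II-4.2]{Al-Ge-Ho-Ho} for the $2\times2$ determinant condition rather than rederiving it from the $K_0$-ansatz) and then carry out the same monotonicity/limit analysis, with the single nontrivial input being $1/s-LK_1(sL)>0$. The only substantive variation is that you prove $tK_1(t)<1$ via the identity $\frac{d}{dt}\bigl(tK_1(t)\bigr)=-tK_0(t)<0$ together with $tK_1(t)\to1$ as $t\to0^+$, whereas the paper deduces it from the integral representation of $K_1$; both are valid.
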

\begin{proof}
By \cite[Theorem II-4.2]{Al-Ge-Ho-Ho},
$H_{\Gamma_L,\alpha}$ has a negative eigenvalue
$E=-s^2$ $(s>0)$
if and only if $\det M=0$, where $M=(M_{jk})$ is a $2\times 2$-matrix given by
\begin{align*}
 M_{jk}=
\begin{cases}
 (2\pi)^{-1}(2\pi \alpha + \gamma_E + \log(s/2)) & (j=k),\\
 - \frac{i}{4}H_0^{(1)}(isL) & (j\not= k).
\end{cases}
\end{align*}
Here $H_0^{(1)}$ is the $0$-th order Hankel function of the first kind.
By \cite[9.6.4]{Ab-St}, we have
\begin{align*}
  - \frac{i}{4}H_0^{(1)}(isL) = -\frac{1}{2\pi}K_0(sL),
\end{align*}
where $K_\nu(z)$ is the $\nu$-th order modified Bessel function
of the second kind.
Thus $\det M=0$ if and only if one of the following two equations hold.
\begin{align}
\label{3-3-02}
 f(s,L):= 2\pi \alpha + \gamma_E + \log\frac{s}{2} - K_0(sL)=0,\\
\label{3-3-03}
 g(s,L):= 2\pi \alpha + \gamma_E + \log\frac{s}{2} + K_0(sL)=0.
\end{align}
Let us review formulas for the modified Bessel functions
\cite[9.6.23,9.6.27,9.6.13, 9.7.2]{Ab-St}.
\begin{align}
 \label{3-3-04}
K_\nu(z)=\frac{\pi^{1/2}}{\Gamma(\nu+1/2)}\left(\frac{z}{2}\right)^{\nu}
\int_1^\infty e^{-zt}(t^2-1)^{\nu-1/2}dt
\quad (|\arg z|<\frac{\pi}{2}),
\end{align}
\begin{align}
 K_0'(z) 
= -K_1(z),\label{3-3-05}
\end{align}
\begin{align}
\label{3-3-06}
 K_0(z)= -\log\frac{z}{2}-\gamma_E + O(z^2\log z) \quad \mbox{as } z\to 0,\\
\label{3-3-07}
 K_0(z)= \sqrt{\frac{\pi}{2z}}e^{-z}(1+O(z^{-1}))\quad \quad \mbox{as } z\to +\infty.
\end{align}
By (\ref{3-3-04})-(\ref{3-3-07}), 
we see that $K_\nu(z)>0$ for $z>0$ and $\nu > -1/2$,
and
\begin{align*}
&\frac{\partial f}{\partial s}=\frac{1}{s}+LK_1(sL)>0,\quad
\frac{\partial f}{\partial L}= s K_1(sL)>0,\\
 &\lim_{s\to +0}f(s,L)=-\infty,\quad
 \lim_{s\to \infty}f(s,L)=\infty,\\
& \lim_{L\to +0}f(s,L)=-\infty,\quad
\lim_{L\to \infty}f(s,L)=2\pi\alpha + \gamma_E +\log\frac{s}{2}.
\end{align*}
The graphs of $y=f(s,L)$ are given as curves below the dashed curve
in 
Figure \ref{figure_d2_1}, \ref{figure_d2_2}.
Here the dashed curve is the limiting curve $y=2\pi\alpha+\gamma_E+\log s/2$.
\begin{figure}[htbp]
 \begin{center}
  \begin{tabular}{cc}
\begin{minipage}[c]{6cm}
\includegraphics[width=6cm]{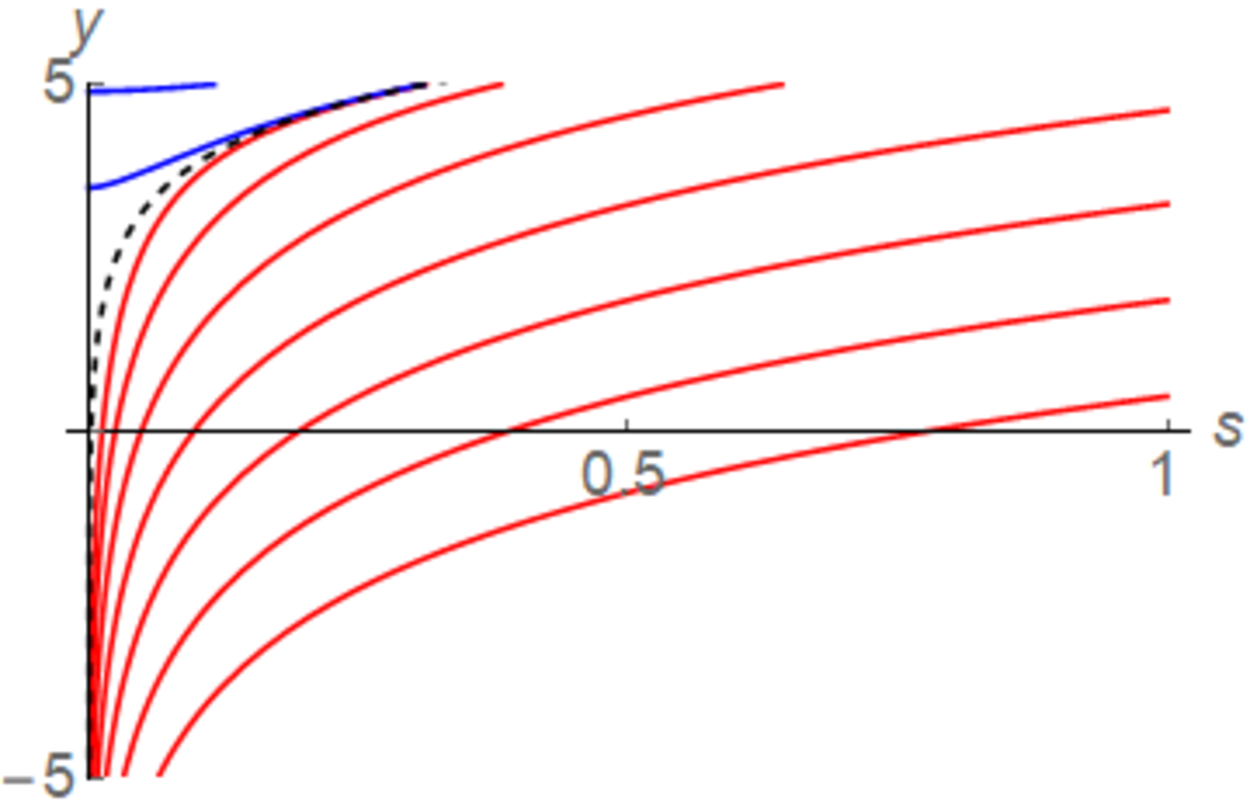} 
\caption{Graphs of $y=f(s,L)$ and $y=g(s,L)$
for $\alpha=1$ and $L=4^n$ ($n=-4,\ldots,2$).}
\label{figure_d2_1}
\end{minipage}
&
\begin{minipage}[c]{6cm}
\includegraphics[width=6cm]{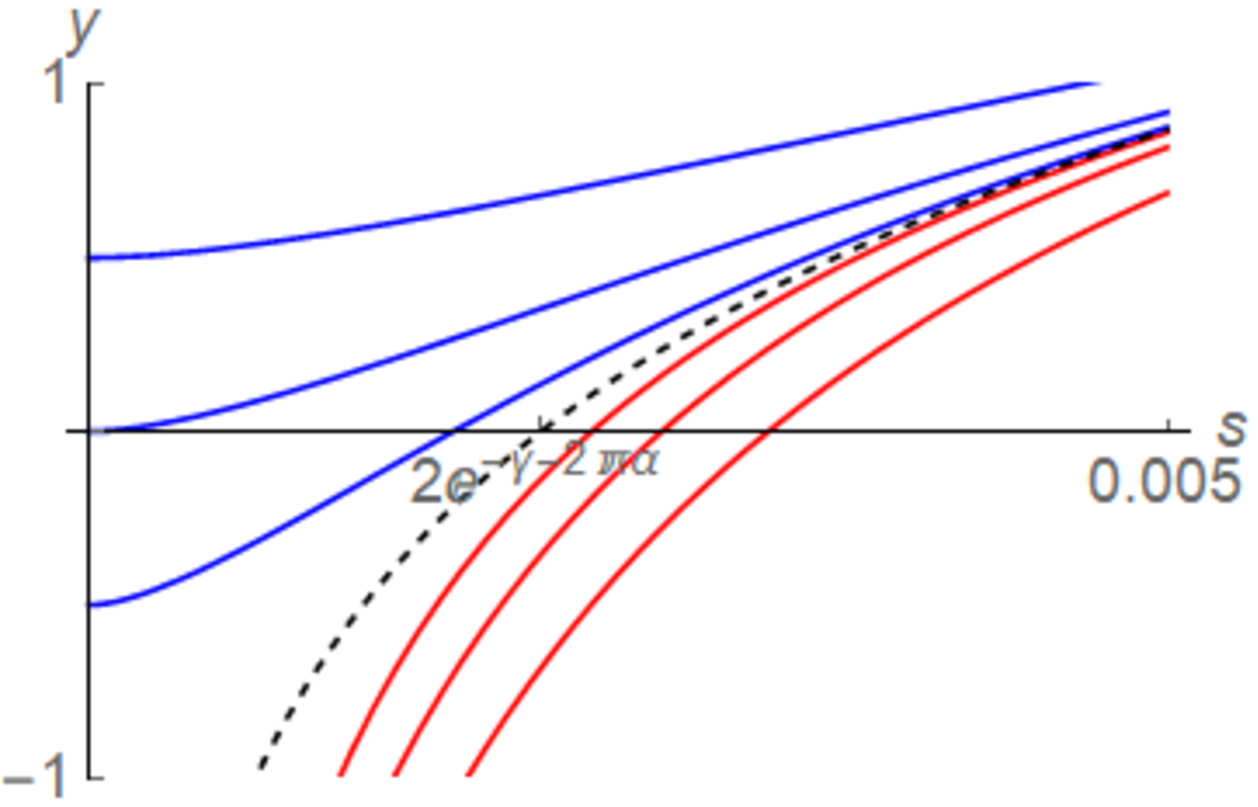} 
\caption{Graphs of $y=f(s,L)$ and $y=g(s,L)$
for $\alpha=1$ and $L=e^{2\pi\alpha+n/2}$ ($n=-1,0,1$).}
\label{figure_d2_2}
\end{minipage}

 \\
  \end{tabular}
 \end{center}
\end{figure}

By these properties, we conclude
that the equation (\ref{3-3-02}) 
has unique positive solution $s=s_1(L)$ for any $L>0$, and
\begin{align*}
\displaystyle\lim_{L \to +0}s_1(L)=\infty, \quad
\displaystyle\lim_{L \to \infty}s_1(L)=2e^{-2\pi\alpha-\gamma_E}.
\end{align*}

Next, again by (\ref{3-3-04})-(\ref{3-3-07}),
\begin{align*}
\frac{\partial g}{\partial s}&=\frac{1}{s}-LK_1(sL)
=\frac{1}{s}
-sL^2\int_1^\infty e^{-sLt}(t^2-1)^{1/2}dt\\
&>\frac{1}{s}-sL^2\int_0^\infty e^{-sLt}tdt= 0,\\
 \frac{\partial g}{\partial L}&= -s K_1(sL)<0,
\end{align*}
\begin{align*}
& \lim_{s\to +0}g(s,L)=2\pi \alpha -\log L,\quad
 \lim_{s\to \infty}g(s,L)=\infty,\\
& \lim_{L\to +0}g(s,L)=\infty,\quad
\lim_{L\to \infty}g(s,L)=2\pi\alpha + \gamma_E +\log\frac{s}{2}.
\end{align*}
The graphs of $y=g(s,L)$ are given as curves above the dashed curve in 
Figure \ref{figure_d2_1}, \ref{figure_d2_2}.
By these properties, we conclude
the equation (\ref{3-3-03}) has no positive solution 
for $ L\leq e^{2\pi \alpha}$, has unique positive solution $s=s_2(L)$ for
$L>e^{2\pi \alpha}$, and
\begin{align*}
\lim_{L \to e^{2\pi \alpha}+0}s_2(L)=0, \quad
\lim_{L \to \infty}s_2(L)=2e^{-2\pi\alpha-\gamma_E}.
\end{align*}
Since $E_1(L)=-s_1(L)^2$ and $E_2(L)=-s_2(L)^2$, 
the statements hold.
\end{proof}

\begin{lemma}
\label{lemma_spec4}
Let $d=3$.
For $L>0$,
let $\Gamma_L=\{\gamma_1,\gamma_2\}$ with $|\gamma_1-\gamma_2|=L$.
Let $\alpha$ be a constant sequence on $\Gamma_L$
with common coupling constant $\alpha\in \mathbb{R}$.
Then, the following holds.

\begin{enumerate}
 \item Assume $\alpha\geq 0$.
Then, 
$H_{\Gamma_L,\alpha}$ has no negative eigenvalue for $L \geq 1/(4\pi\alpha)$,
and has one negative eigenvalue $E_1(L)$ for  $0<L < 1/(4\pi\alpha)$
(when $\alpha=0$, we interpret $1/(4\pi\alpha)=\infty$ and the first case does not occur).
The function $E_1(L)$ is continuous, monotone increasing with respect to 
$L\in (0,1/(4\pi\alpha))$, and
\begin{align*}
 \lim_{L\to +0}E_1(L)=-\infty,\quad
 \lim_{L\to 1/(4\pi\alpha)-0}E_1(L)=0.
\end{align*}
 \item Assume $\alpha<0$.
Then, 
$H_{\Gamma_L,\alpha}$ 
has one negative eigenvalue $E_1(L)$ for  $L\leq 1/(-4\pi\alpha)$,
and two negative eigenvalues $E_1(L)$ and $E_2(L)$ ($E_1(L)<E_2(L)$)
for $L> 1/(-4\pi\alpha)$.
The function $E_1(L)$ (resp.\ $E_2(L)$)
is continuous, monotone increasing (resp.\ decreasing)
with respect to $L\in (0,\infty)$ (resp.\ $L\in (1/(-4\pi\alpha),\infty)$), 
and
\begin{align*}
 \lim_{L\to +0}E_1(L)=-\infty,\quad
 \lim_{L\to \infty}E_1(L)=-(4\pi\alpha)^2,\\
 \lim_{L\to 1/(-4\pi\alpha)+0}E_2(L)=0,\quad
 \lim_{L\to \infty}E_2(L)=-(4\pi\alpha)^2.
\end{align*}
\end{enumerate}
\end{lemma}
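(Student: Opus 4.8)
The plan is to proceed exactly as in the proofs of Lemmas \ref{lemma_spec2} and \ref{lemma_spec3}: reduce the eigenvalue problem to a transcendental equation via the explicit resolvent formula in \cite{Al-Ge-Ho-Ho}, and then study that equation by elementary calculus. For $d=3$ the relevant result states that $E=-s^2$ ($s>0$) is a negative eigenvalue of $H_{\Gamma_L,\alpha}$ if and only if $\det M=0$, where $M=(M_{jk})$ is the $2\times2$ matrix with
\begin{align*}
M_{jk}=
\begin{cases}
\alpha+\dfrac{s}{4\pi} & (j=k),\\
-\dfrac{e^{-sL}}{4\pi L} & (j\not=k),
\end{cases}
\end{align*}
the off-diagonal entry being the free resolvent kernel of $-\Delta+s^2$ in three dimensions evaluated at distance $L$ (cf.\ \cite{Al-Ge-Ho-Ho}), and the diagonal entry reflecting the sign convention $-4\pi\alpha$ fixed in (\ref{intro02}). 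Since $M$ has the form $\begin{pmatrix} a & b \\ b & a\end{pmatrix}$ with $a=\alpha+s/(4\pi)$ and $b=-e^{-sL}/(4\pi L)$, its determinant vanishes exactly when $a=\pm b$, so $\det M=0$ is equivalent to one of the two equations
\begin{align*}
f(s,L):=\alpha+\frac{s}{4\pi}-\frac{e^{-sL}}{4\pi L}=0,\quad
g(s,L):=\alpha+\frac{s}{4\pi}+\frac{e^{-sL}}{4\pi L}=0.
\end{align*}

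Next I would analyse $f$ and $g$ by computing their partial derivatives and boundary values, just as with the Bessel-function equations in Lemma \ref{lemma_spec3}, except that here the functions are elementary. A direct computation gives $\partial_s f=(1+e^{-sL})/(4\pi)>0$ and $\partial_s g=(1-e^{-sL})/(4\pi)>0$, so both are strictly increasing in $s$ on $(0,\infty)$; moreover $\partial_L f=(sL+1)e^{-sL}/(4\pi L^2)>0$ while $\partial_L g=-(sL+1)e^{-sL}/(4\pi L^2)<0$. The limits $\lim_{s\to+0}f=\alpha-1/(4\pi L)$, $\lim_{s\to+0}g=\alpha+1/(4\pi L)$, and $\lim_{s\to\infty}f=\lim_{s\to\infty}g=+\infty$ then show, by the intermediate value theorem, that $f(\cdot,L)=0$ has a unique positive root $s_1(L)$ precisely when $\alpha<1/(4\pi L)$ (automatic for $\alpha<0$, and holding for $0<L<1/(4\pi\alpha)$ when $\alpha\ge0$), and that $g(\cdot,L)=0$ has a unique positive root $s_2(L)$ precisely when $\alpha+1/(4\pi L)<0$, i.e.\ when $\alpha<0$ and $L>1/(-4\pi\alpha)$. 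This case distinction reproduces exactly the eigenvalue counts asserted in (i) and (ii).

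Finally, monotonicity and the limiting values follow from the implicit function theorem. Since $\partial_s f>0$ and $\partial_L f>0$, implicit differentiation of $f(s_1(L),L)=0$ gives $s_1'(L)=-\partial_L f/\partial_s f<0$, so $s_1$ is strictly decreasing and $E_1(L)=-s_1(L)^2$ is strictly increasing; the reversed sign $\partial_L g<0$ gives $s_2'(L)>0$, hence $E_2(L)=-s_2(L)^2$ is strictly decreasing. The boundary behaviour is read off from the limiting equations: as $L\to+0$ the term $e^{-sL}/L$ blows up and forces $s_1\to\infty$, giving $E_1\to-\infty$; as $L\to\infty$ that term vanishes and both equations reduce to $\alpha+s/(4\pi)=0$, forcing $s_1,s_2\to-4\pi\alpha$ and hence $E_1,E_2\to-(4\pi\alpha)^2$ when $\alpha<0$; and the thresholds $L=1/(4\pi\alpha)$ (resp.\ $L=1/(-4\pi\alpha)$) are exactly where $\lim_{s\to+0}f$ (resp.\ $\lim_{s\to+0}g$) changes sign, which yields $E_1(L)\to0$ (resp.\ $E_2(L)\to0$) there.

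I do not anticipate a serious obstacle, since the structure parallels the $d=2$ case but with the Bessel function $K_0$ replaced by the explicit kernel $e^{-sL}/(4\pi L)$, which makes every derivative and limit elementary. The only real care needed is in pinning down the diagonal entry of $M$ under the $-4\pi\alpha$ convention of (\ref{intro02}), and in keeping the two regimes $\alpha\ge0$ and $\alpha<0$ cleanly separated, since it is precisely the presence or absence of a positive root of $g$ that distinguishes part (i) from part (ii).
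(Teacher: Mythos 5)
Your proposal is correct and follows essentially the same route as the paper: the same matrix condition $\det M=0$ from \cite[Theorem II-1.1.4]{Al-Ge-Ho-Ho}, the same splitting into the two transcendental equations $4\pi\alpha+s=\pm e^{-sL}/L$, and the same monotonicity-plus-limits analysis. The only difference is that you carry out the derivative computations and implicit-function-theorem argument explicitly where the paper appeals to ``inspecting the graphs,'' which is a welcome tightening rather than a different method.
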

\begin{proof}
By \cite[Theorem II-1.1.4]{Al-Ge-Ho-Ho},
$H_{\Gamma_L,\alpha}$ has a negative eigenvalue
$E=-s^2$ $(s>0)$
if and only if $\det M=0$, where $M=(M_{jk})$ is a $2\times 2$-matrix given by
\begin{align*}
 M_{jk} = 
\begin{cases}
\displaystyle
 \alpha +\frac{s}{4\pi} & (j=k),
\vspace{2mm}
\\
\displaystyle
-\frac{e^{-sL}}{4\pi L} & (j\not=k).
\end{cases}
\end{align*}
So $\det M=0$ if and only if one of the following equations holds.
\begin{align}
\label{3-3-08}
 4\pi \alpha + s = \frac{e^{-sL}}{L},\\
\label{3-3-09}
 4\pi \alpha + s = -\frac{e^{-sL}}{L}.
\end{align}
The graphs of both sides of (\ref{3-3-08}) and (\ref{3-3-09})
are given in Figure \ref{figure_d3_1}, \ref{figure_d3_2}.
\begin{figure}[htbp]
 \begin{center}
  \begin{tabular}{ccc}
\begin{minipage}[c]{6cm}
\includegraphics[width=6cm]{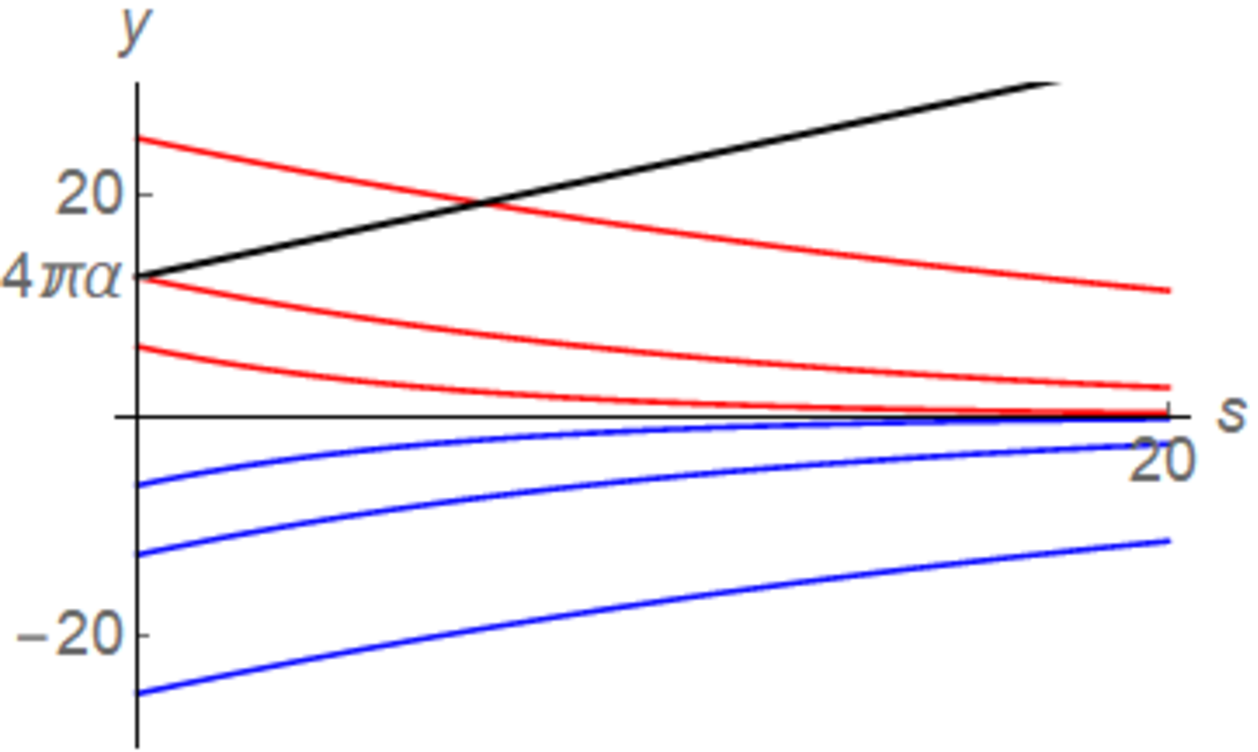} 
\caption{Graphs of both sides of (\ref{3-3-08}) and (\ref{3-3-09}) 
for $\alpha=1$ and $L=2^n/(4\pi\alpha)$ ($n=-1,0,1$).}
\label{figure_d3_1}
\end{minipage}
&
\begin{minipage}[c]{6cm}
\includegraphics[width=6cm]{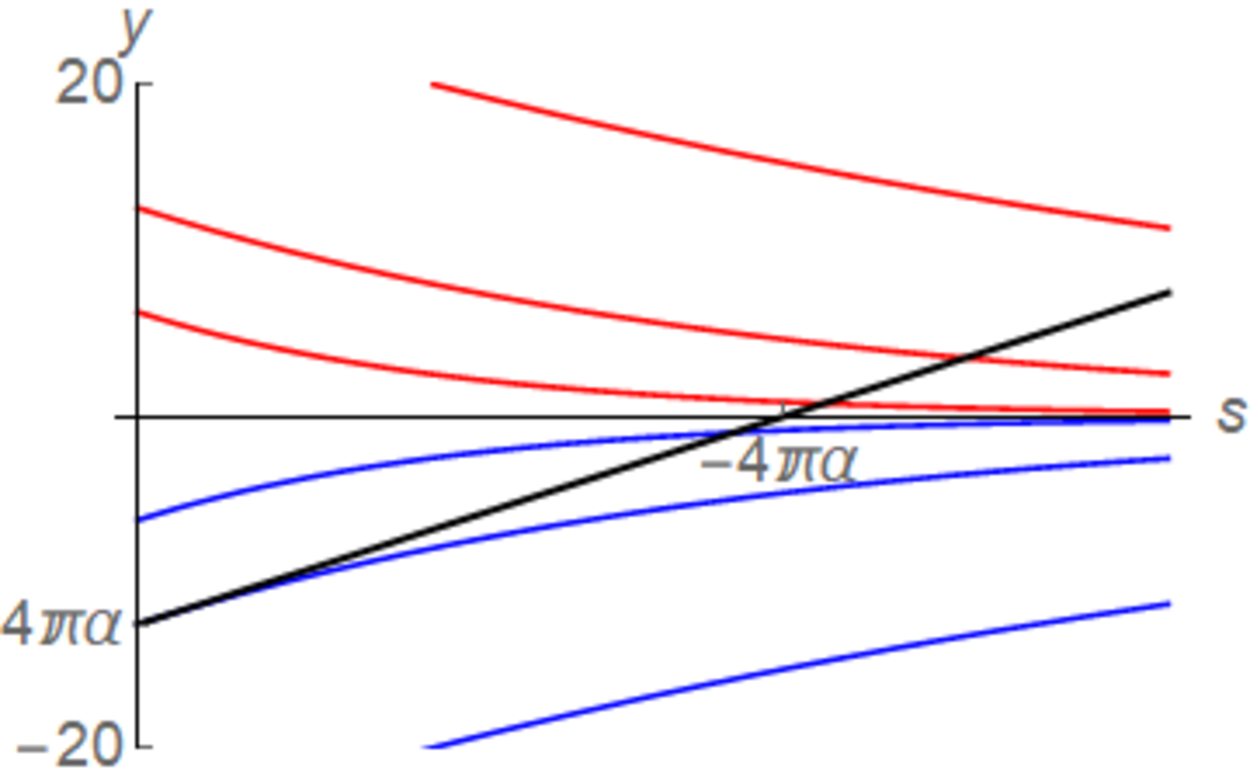} 
\caption{Graphs of both sides of (\ref{3-3-08}) and (\ref{3-3-09}) 
for $\alpha=-1$ and $L=2^n/(-4\pi\alpha)$ ($n=-1,0,1$).}
\label{figure_d3_2}
\end{minipage}
  \end{tabular}
 \end{center}
\end{figure}

By inspecting the graphs, we conclude the following.
\begin{enumerate}
 \item For $\alpha \geq 0$,
the equation (\ref{3-3-08}) has no positive solution for
 $L \geq 1/(4\pi\alpha)$, 
and has one positive solution $s=s_1(L)$ for  $0<L<1/(4\pi\alpha)$.
Moreover,
$\displaystyle \lim_{L\to +0}s_1(L)=\infty$,
$\displaystyle \lim_{L\to 1/(4\pi\alpha)-0}s_1(L)=0$.
The equation (\ref{3-3-09}) has no positive solution.

 \item For $\alpha<0$,
the equation (\ref{3-3-08}) 
has one positive solution $s=s_1(L)$ for any $L>0$,
and 
$\displaystyle \lim_{L\to +0}s_1(L)=\infty$,
$\displaystyle \lim_{L\to \infty}s_1(L)=-4\pi \alpha$.
The equation (\ref{3-3-09}) has no positive solution for $L\leq 1/(-4\pi\alpha)$,
has one positive solution $s=s_2(L)$ for $L> 1/(-4\pi\alpha)$,
and 
$\displaystyle \lim_{L\to 1/(-4\pi\alpha)+0}s_2(L)=0$,
$\displaystyle \lim_{L\to \infty}s_2(L)=-4\pi \alpha$.
\end{enumerate}
These facts and $E_1(L)=-s_1(L)^2$, $E_2(L)=-s_2(L)^2$ imply the statements.
\end{proof}
\begin{proof}[Proof of Theorem \ref{theorem_main2}]
Put
\begin{align*}
\Sigma 
= \overline{\bigcup_{(\Gamma,\alpha)\in \mathcal{A}_F}\sigma(H_{\Gamma,\alpha})}
. 
\end{align*}
By Proposition \ref{proposition_admissible},
we have $\sigma(H_\omega) = \Sigma$ almost surely.

First consider the case
$d=1$ and $\supp \nu \subset [0,\infty)$.
Then, for any $(\Gamma,\alpha)\in \mathcal{A}_F$,
we have $\sigma(H_{\Gamma,\alpha})=[0,\infty)$
by Lemma \ref{lemma_spec1}.
So $\Sigma =[0,\infty)$.

In all other cases, we have to prove $\Sigma=\mathbb{R}$.
Since $\sigma(H_{\Gamma,\alpha}) = [0,\infty)$ for
$\Gamma=\emptyset$, we have only to prove $(-\infty,0)\subset \Sigma$.

Consider the case $d=1$ and $\supp\nu\cap(-\infty,0)\not=\emptyset$.
Let $\Gamma_{N,L}$ given in Lemma \ref{lemma_spec2},
and $\alpha$ be a constant sequence on $\Gamma_{N,L}$ 
with common coupling constant $\alpha\in \supp\nu\cap (-\infty,0)$.
Then $(\Gamma_{N,L},\alpha)\in \mathcal{A}_F$ for any $N\geq 2$ and $L>0$,
so
\begin{align*}
 \Sigma \supset \bigcup_{N \geq 2, L>0} \sigma(H_{\Gamma_{N,L},\alpha}).
\end{align*}
By Lemma \ref{lemma_spec2}, the right hand side contains $(-\infty,0)$.
When $d=2,3$, the statement can be proved similarly by using 
Lemma \ref{lemma_spec3}, \ref{lemma_spec4}.
\end{proof}

In the case $d=1$ and $\supp\nu$ has negative part,
there is a simple another proof 
using the spectrum of the \textit{Kronig--Penney model} 
(see \cite{Kr-Pe,Al-Ge-Ho-Ho}).
\begin{proof}[Another proof of Theorem \ref{theorem_main2} (i)]
Put
\begin{align*}
\Sigma 
= \overline{\bigcup_{(\Gamma,\alpha)\in \mathcal{A}_P}\sigma(H_{\Gamma,\alpha})}
. 
\end{align*}
By Proposition \ref{proposition_admissible},
we have $\sigma(H_\omega) = \Sigma$ almost surely.

Assume $d=1$ and 
$(-\infty,0)\cap \supp \nu\not=\emptyset$.
It is sufficient to show $\Sigma \supset (-\infty,0)$.
For $L>0$, let $\Gamma_L=L\mathbb{Z}$, and 
$\alpha$ be a constant sequence on $\Gamma_L$
with common coupling constant $\alpha\in \supp\nu\cap(-\infty,0)$.
Then $(\Gamma_L,\alpha)\in\mathcal{A}_P$.
By \cite[Theorem III.2.3.1]{Al-Ge-Ho-Ho},
the spectrum of  $H_{\Gamma_L,\alpha}$ is given by
\begin{align*}
\sigma(H_{\Gamma_L,\alpha}) =
\left\{k^2 \in \mathbb{R} \mid 
\left|
\cos(kL) +{\alpha}/(2k)\sin(kL)\right|
\leq 1\right\}.
\end{align*}
Put $k=is$ for $s>0$.
Then, $E=-s^2\in \sigma(H_{\Gamma_L,\alpha})$ 
if and only if
\begin{align}
\label{3-3-10}
 \left|\cosh(sL)+\alpha/(2s)\sinh(sL)\right|\leq 1.
\end{align}
Take arbitrary $s_0>0$, and let $s\in (0,s_0]$.
Consider the Taylor expansion with respect to $L$
\begin{align}
\label{3-3-11}
 f(s,L):=\cosh(sL)+\frac{\alpha}{2s}\sinh(sL)
= 1 +\frac{\alpha}{2}L + O(L^2) \quad \mbox{as }L\to 0.
\end{align}
The remainder term is uniform with respect to $s\in (0,s_0]$.
Since $\alpha<0$,
(\ref{3-3-11}) implies (\ref{3-3-10}) holds for sufficiently small $L$
uniformly with respect to $s\in (0,s_0]$
(see also Figure \ref{figure_d3_3}).
Thus $[-s_0^2,0)\subset \sigma(H_{\Gamma_L,\alpha})$ for 
sufficiently small $L$, so $(-\infty,0)\subset \Sigma$.
\begin{figure}[htbp]
 \begin{center}
\includegraphics[width=6cm]{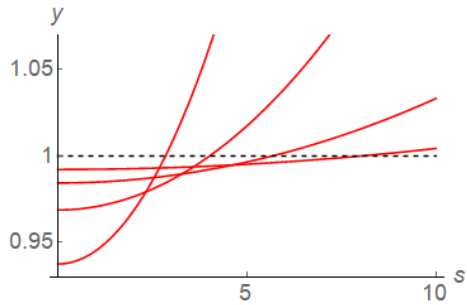} 
\caption{Graphs of $y=f(s,L)$ for $L=2^{-n}$ ($n=3,\ldots,6$).
As $L\to +0$, the negative band
becomes longer and longer.
}
\label{figure_d3_3}
 \end{center}
\end{figure}

\end{proof}

\section{Appendix}
\subsection{Elliptic inner regularity estimate}
The following is a special case of the elliptic inner regularity theorem
(\cite[Theorem 6.3]{Ag}).
\begin{theorem}
\label{theorem_agmon}
Let $U$ be an open set in $\mathbb{R}^d$
and $u\in L^2(U)$.
Assume that there exists a positive constant $M$ such that
\begin{equation}
\label{ap01}
  |(u,\Delta \phi )_{L^2(U)}|\leq M \|\phi\|_{L^2(U)}
\end{equation}
holds for every $\phi \in C_0^\infty(U)$.
Then, $u\in H^2_{\rm loc}(U)$.
Moreover, for any open set $V$ such that $\overline{V}$
is a compact subset of $U$,
there exists a positive constant 
$C$ dependent only on $U$ and $V$ such that
\begin{align*}
\|u\|_{H^2(V)}\leq C(M + \|u\|_{L^2(U)}),
\end{align*}
where $M$ is the constant in (\ref{ap01}).
\end{theorem}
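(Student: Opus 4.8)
The plan is to recognize the statement as the standard interior (Friedrichs--Agmon) elliptic estimate for the constant--coefficient operator $-\Delta$; one could of course simply specialize \cite[Theorem 6.3]{Ag}, but I will sketch a self-contained argument. First I would translate the hypothesis into an $L^2$ statement about $\Delta u$. Since $C_0^\infty(U)$ is dense in $L^2(U)$, the bound (\ref{ap01}) says that the antilinear functional $\phi\mapsto(u,\Delta\phi)_{L^2(U)}$ extends continuously to all of $L^2(U)$; by the Riesz representation theorem there is $f\in L^2(U)$ with $(u,\Delta\phi)_{L^2(U)}=(f,\phi)_{L^2(U)}$ for every $\phi$, i.e.\ the distributional Laplacian satisfies $\Delta u=f\in L^2(U)$ with $\|\Delta u\|_{L^2(U)}\le M$. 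The problem is thereby reduced to the interior a priori estimate $\|u\|_{H^2(V)}\le C(\|\Delta u\|_{L^2(U)}+\|u\|_{L^2(U)})$ with $C=C(U,V)$, which automatically yields $u\in H^2_{\rm loc}(U)$ as $V$ ranges over open sets with $\overline V\Subset U$.

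Next I would reduce to smooth functions by mollification. Fix $V$ with $\overline V\Subset U$ and choose an intermediate open set $W$ with $\overline V\Subset W\Subset U$. Let $\rho_\epsilon$ be a standard mollifier and set $u_\epsilon=u*\rho_\epsilon$, which is smooth on $W$ for small $\epsilon$; there $\Delta u_\epsilon=(\Delta u)*\rho_\epsilon$, so $\|u_\epsilon\|_{L^2(W)}\le\|u\|_{L^2(U)}$ and $\|\Delta u_\epsilon\|_{L^2(W)}\le\|\Delta u\|_{L^2(U)}\le M$ uniformly in $\epsilon$. If I can prove the interior estimate for the \emph{smooth} functions $u_\epsilon$ with a constant independent of $\epsilon$, then the uniform bound $\sup_\epsilon\|u_\epsilon\|_{H^2(V)}<\infty$ produces a weakly convergent subsequence in $H^2(V)$; its limit must be $u$, since $u_\epsilon\to u$ in $L^2$, and weak lower semicontinuity of the norm transfers the estimate to $u$.

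The heart of the matter is the interior estimate for a smooth $w$ on $W$, where I would use the algebraic identity special to the Laplacian,
\begin{equation*}
\sum_{i,j=1}^d\|\partial_i\partial_j v\|_{L^2(\mathbb{R}^d)}^2=\|\Delta v\|_{L^2(\mathbb{R}^d)}^2\qquad(v\in C_0^\infty(\mathbb{R}^d)),
\end{equation*}
which follows by integrating by parts twice. Choosing a cutoff $\zeta\in C_0^\infty(W)$ with $\zeta\equiv1$ on $V$ and applying the identity to $v=\zeta w$ gives
\begin{equation*}
\sum_{i,j}\|\partial_i\partial_j(\zeta w)\|^2=\|\Delta(\zeta w)\|^2=\|\zeta\Delta w+2\nabla\zeta\cdot\nabla w+(\Delta\zeta)w\|^2 .
\end{equation*}
Expanding the left side and discarding the contributions coming from derivatives of $\zeta$, I would bound $\sum_{i,j}\|\partial_i\partial_j w\|_{L^2(V)}^2$ in terms of $\|\Delta w\|_{L^2(W)}$, $\|\nabla w\|_{L^2(W)}$ and $\|w\|_{L^2(W)}$.

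The main obstacle is the first-order term: the inequality above still contains $\|\nabla w\|_{L^2(W)}$, which is not yet controlled by the data. I would close this gap with an auxiliary Caccioppoli-type estimate obtained by testing the equation against $\zeta^2 w$ and integrating by parts,
\begin{equation*}
\int\zeta^2|\nabla w|^2=-\int\zeta^2 w\,\Delta w-2\int\zeta w\,\nabla\zeta\cdot\nabla w,
\end{equation*}
whose last term is absorbed by Young's inequality to yield $\|\nabla w\|_{L^2(V')}\le C(\|\Delta w\|_{L^2(W)}+\|w\|_{L^2(W)})$ on an intermediate set $V'$ with $\overline V\Subset V'\Subset W$. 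Feeding this back produces the desired second-order estimate. The delicate point throughout is to keep \emph{every} constant independent of the mollification parameter $\epsilon$, so that the final passage to the weak limit is legitimate; this is guaranteed because all constants arise only from Young's inequality and the fixed geometry $V\Subset V'\Subset W\Subset U$ together with the fixed cutoff functions, hence depend only on $U$ and $V$.
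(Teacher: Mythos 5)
Your argument is correct, but it takes a genuinely different route from the paper, because the paper supplies no proof of Theorem \ref{theorem_agmon} at all: the statement is simply quoted as a special case of \cite[Theorem 6.3]{Ag}, and the only argument given in the appendix is the deduction of Corollary \ref{corollary_agmon} from it (a covering of $V$ by small cubes so that the constant depends only on $\dist(\partial U,V)$ and $d$). What you provide instead is a self-contained proof of the cited estimate, specialized to the constant-coefficient Laplacian: Riesz representation converts (\ref{ap01}) into $\Delta u\in L^2(U)$ with $\|\Delta u\|_{L^2(U)}\le M$; mollification reduces to smooth functions with uniform bounds; the integration-by-parts identity $\sum_{i,j}\|\partial_i\partial_j v\|^2=\|\Delta v\|^2$ applied to $\zeta u_\epsilon$ gives the second-order bound up to a gradient term; and a Caccioppoli inequality closes the loop. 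This is sound, and it is considerably more elementary than Agmon's general variable-coefficient elliptic theory --- that is what exploiting the special structure of $-\Delta$ buys you, at the price of not generalizing beyond the Laplacian. Two points of bookkeeping to tighten: the Caccioppoli estimate must be performed with a second cutoff equal to $1$ on $\supp\zeta$ (the second-order step needs $\|\nabla u_\epsilon\|_{L^2(\supp\zeta)}$, not merely $\|\nabla u_\epsilon\|_{L^2(V')}$ for some $V'\supset V$), so you should insist $\supp\zeta\subset V'$ in your nesting $V\subset V'\subset W$; and for complex-valued $u$ the identities require the obvious complex conjugates, which changes nothing. With those adjustments every constant indeed depends only on the fixed cutoffs, hence only on $U$ and $V$, and the weak-limit passage is legitimate.
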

From 
Theorem \ref{theorem_agmon},
we have the following corollary useful for our purpose.
\begin{corollary}
\label{corollary_agmon}
Let $U$, $V$ be open sets in $\mathbb{R}^d$ such that
$\overline{V}\subset U$ and 
\begin{align*}
\dist(\partial U, V)\geq \delta
\end{align*}
for some positive constant $\delta$.
Let $u\in L^2(U)$ such that $\Delta u\in L^2(U)$ in the distributional sense.
Then, $u\in H^2_{\rm loc}(U)$, and 
there exists a constant $C$ dependent only on $\delta$ and the dimension $d$
such that
\begin{equation}
 \label{app04}
\|u\|_{H^2(V)}^2
\leq
C \left(
\|\Delta u \|_{L^2(U)}^2+
\| u \|_{L^2(U)}^2
\right).
\end{equation}
\end{corollary}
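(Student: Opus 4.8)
The plan is to derive the corollary directly from Theorem~\ref{theorem_agmon}, the only real content beyond that theorem being the claim that the constant $C$ can be chosen to depend on $\delta$ and $d$ alone. First I would verify the hypothesis (\ref{ap01}). Since $\Delta u\in L^2(U)$ in the distributional sense, for every $\phi\in C_0^\infty(U)$ one has $(u,\Delta\phi)_{L^2(U)}=(\Delta u,\phi)_{L^2(U)}$, and Cauchy--Schwarz gives
\begin{equation*}
|(u,\Delta\phi)_{L^2(U)}|\leq \|\Delta u\|_{L^2(U)}\,\|\phi\|_{L^2(U)}.
\end{equation*}
Thus (\ref{ap01}) holds with $M=\|\Delta u\|_{L^2(U)}$, so Theorem~\ref{theorem_agmon} already yields $u\in H^2_{\rm loc}(U)$. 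It remains to produce an estimate of the form (\ref{app04}) whose constant is geometry-independent.

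To remove the dependence on the particular sets $U$ and $V$, I would localize to concentric balls of fixed radii. For each $x\in V$ the assumption $\dist(\partial U,V)\geq\delta$ forces $\dist(x,\partial U)\geq\delta$ and hence $B_\delta(x)\subset U$. Applying Theorem~\ref{theorem_agmon} to the pair $\overline{B_{\delta/2}(x)}\subset B_\delta(x)$ produces
\begin{equation*}
\|u\|_{H^2(B_{\delta/2}(x))}\leq C_0\left(\|\Delta u\|_{L^2(B_\delta(x))}+\|u\|_{L^2(B_\delta(x))}\right),
\end{equation*}
and here is the crucial observation: because the Laplacian is translation invariant, the constant $C_0$ for this pair of balls is the same for every center $x$ and depends only on the radii, that is, only on $\delta$ and $d$.

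Next I would reassemble these local estimates through a covering with dimensionally controlled overlap. Choosing a maximal family $\{x_i\}\subset V$ with $|x_i-x_j|\geq\delta/2$ for $i\neq j$, maximality gives $V\subset\bigcup_i B_{\delta/2}(x_i)$, while the balls $B_{\delta/4}(x_i)$ are pairwise disjoint; a volume comparison then shows that no point of $\mathbb{R}^d$ lies in more than $N=5^d$ of the enlarged balls $B_\delta(x_i)\subset U$. Summing the local estimates, squaring, and using this finite overlap to bound $\sum_i\|\Delta u\|_{L^2(B_\delta(x_i))}^2\leq N\|\Delta u\|_{L^2(U)}^2$ and likewise for $u$, together with $\|u\|_{H^2(V)}^2\leq\sum_i\|u\|_{H^2(B_{\delta/2}(x_i))}^2$, yields (\ref{app04}) with $C$ a fixed multiple of $C_0^2 N$, hence depending only on $\delta$ and $d$.

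The one delicate step is precisely this uniformity of the constant: a direct application of Theorem~\ref{theorem_agmon} to the pair $V\subset U$ would leave $C$ depending on the geometry of $U$ and $V$, which is exactly what the corollary is designed to avoid, and what makes it usable in the proof of Proposition~\ref{proposition_dense} where the boundary sets $(\partial S_n)_{R/2}$ vary with $n$. The device that resolves this is reducing everything to a single pair of concentric balls, handled once via translation invariance, and then recombining through a covering whose overlap is controlled purely by dimension. The remaining ingredients---the integration by parts, the maximal packing argument, and the volume count bounding $N$---are routine.
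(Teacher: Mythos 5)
Your proof is correct and follows essentially the same route as the paper: verify hypothesis (\ref{ap01}) with $M=\|\Delta u\|_{L^2}$ by integration by parts, apply Theorem \ref{theorem_agmon} to a single translated model pair of concentric sets so that translation invariance makes the constant depend only on $\delta$ and $d$, then cover $V$ and sum using a dimensionally bounded overlap. The only (cosmetic) difference is that the paper uses concentric cubes centered on the lattice $\epsilon\mathbb{Z}^d$ with overlap $2^d$, whereas you use balls from a maximal $\delta/2$-separated set with overlap $5^d$.
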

\begin{proof}
 Put $\epsilon=\delta/(2d)$.
For $x_0\in \mathbb{R}^d$,
consider open cubes $Q=x_0 + (-\epsilon,\epsilon)^d$
and $Q'=x_0 + (-\epsilon/2,\epsilon/2)^d$.
When $Q\subset U$, we have
\begin{equation*}
 |(u, \Delta \phi)_{L^2(Q)}|
=
 |(\Delta u, \phi)_{L^2(Q)}|
\leq
\|\Delta u\|_{L^2(Q)}
\|\phi\|_{L^2(Q)}
\end{equation*}
for every $\phi\in C_0^\infty(Q)$.
Then the assumption of Theorem \ref{theorem_agmon} is satisfied
with $U=Q$, $V=Q'$, and $M=\|\Delta u\|_{L^2(Q)}$,
and we have
\begin{equation}
\label{app05}
 \|u\|_{H^2(Q')}^2\leq C(\|\Delta u\|_{L^2(Q)}^2 + \|u\|_{L^2(Q)}^2)
\end{equation}
for some positive constant $C$
dependent only on $\delta$ and dimension $d$.
We collect all the cubes $Q'$ such that
the center $x_0\in \epsilon \mathbb{Z}^d$ and
$Q'\cap V \not=\emptyset$.
Notice that $Q\subset U$ for such $Q'$.
Thus we have by (\ref{app05})
\begin{align*}
 \|u\|_{H^2(V)}^2
&\leq
\sum_{Q'}
 \|u\|_{H^2(Q')}^2\\
&\leq
C
\sum_{Q'}
\left(
 \|\Delta u\|_{L^2(Q)}^2
+
 \|u\|_{L^2(Q)}^2
\right)
\\
&\leq
2^{d}C
\left(
 \|\Delta u\|_{L^2(U)}^2
+
 \|u\|_{L^2(U)}^2
\right),
\end{align*}
where we use the fact $Q$ can overlap at most $2^d$ times.
\end{proof}

\textbf{Acknowledgments.}
The work of T.\ M.\ is partially supported by
JSPS KAKENHI Grant Number JP18K03329.
The work of F.\ N.\ is partially supported by
JSPS KAKENHI Grant Number JP26400145.

\end{document}